\documentclass[11pt]{article}
\usepackage{amsfonts,latexsym,amsthm,amssymb,amsmath,amscd,euscript,tikz, bm, mathtools, commath,centernot,float}
\usepackage{algorithm}
\usepackage{algpseudocode}
\usepackage{fullpage}
\usepackage{amsmath}
\usepackage{dsfont}
\usepackage{enumerate}
\usepackage{hyperref}
\usepackage{bbm}
\usepackage{comment}
\usepackage{setspace}
\usepackage[authoryear,round]{natbib}
\usepackage{subcaption}
\usepackage{graphicx}
\usepackage[margin=1in]{geometry}

\hypersetup{colorlinks=true,citecolor=blue,urlcolor =black,linkbordercolor={1 0 0}}
\hypersetup{
  linkcolor=cyan 
}

\allowdisplaybreaks[1]

\newtheorem{theorem}{Theorem}
\newtheorem{proposition}{Proposition}
\newtheorem{lemma}{Lemma}
\newtheorem{corollary}{Corollary}

\theoremstyle{definition}

\theoremstyle{definition}

\newtheorem{example}[theorem]{Example}

\newcommand{\wh}{\widehat}
\newcommand{\indep}{\perp \!\!\! \perp}
\newcommand{\textmax}{\text{max}}
\newcommand{\textmin}{\text{min}}

\newcommand{\Prob}{\mathbb{P}}
\newtheorem{assumption}{Assumption}

\theoremstyle{remark}
\newtheorem{remark}{Remark}

\setlength{\parskip}{0em}
\setlength{\parindent}{2em}

\title{A sensitivity analysis for the average derivative effect} 
\date{}
\author{Jeffrey Zhang\\{\href{mailto:jeffzhang@uchicago.edu}{{\tt jeffzhang@uchicago.edu}}}}

\begin{document}

\maketitle

\begin{abstract}
In observational studies, exposures  are often continuous rather than binary or discrete. At the same time, sensitivity analysis is an important tool that can help determine the robustness of a causal conclusion to a certain level of unmeasured confounding, which can never be ruled out in an observational study. Sensitivity analysis approaches for continuous exposures have now been proposed for several causal estimands. In this article, we focus on the average derivative effect (ADE). We obtain closed-form bounds for the ADE under a sensitivity model that constrains the odds ratio (at any two dose levels) between the latent and observed generalized propensity score. We propose flexible, efficient estimators for the bounds, as well as point-wise and simultaneous (over the sensitivity parameter) confidence intervals. We examine the finite sample performance of the methods through simulations and illustrate the methods on a study assessing the effect of parental income on educational attainment and a study assessing the price elasticity of petrol.
\end{abstract}

\section{Introduction}
\label{section: intro}
Drawing causal inferences from observational studies is challenging for a variety of reasons. Two prominent challenges are a) the treatment (or exposure) of interest is continuous rather than binary or discrete, and b) the treatment/exposure assignment is not random, and so treated and control groups may substantially differ on the basis of unmeasured confounders. We are motivated by the intersection of these two challenges.

In observational studies, continuous exposures are widespread. In epidemiology, researchers are often interested in quantifying the effect of lead exposure or air pollution on health outcomes \citep{Dominici2006, Gibson2022}. In the social sciences, researchers may be interested in quantifying the effect of household's income on a child's later life educational or economic outcomes \citep{Lundberg2023}. The vast majority of causal inference methods focus on treatments that are binary or take on a finite number of values. A popular approach when the exposure is continuous is to simply dichotomize the exposure at some threshold. However, this is undesirable because dichotomizing discards potentially useful statistical information from the exposure level and can invalidate the downstream inference \citep{VanderWeele2013}, or at least muddies its interpretation \citep{Lee2024}. When the exposure of interest is continuous, some popular causal estimands include the dose-response curve (also known as the exposure response function) \citep{kennedy_dose_response}, a (projection) parameter of the dose-response curve \citep{Bonvini2022SensitivityModels}, average derivative effects \citep{Newey1993EfficiencyModels, Klyne2023AverageLearning}, and stochastic intervention effects \citep{Schindl2024}, among others.

In this paper, we focus on the average derivative effect (ADE), which has been extensively studied in economics \citep{Hardle1989, Newey1993EfficiencyModels}, with recent work in statistics clarifying its causal interpretation \citep{Rothenhausler2019, Hines2021ParameterisingEffects}. It has many other names, and has been referred to as an incremental effect \citep{Rothenhausler2019}, average partial effect \citep{Klyne2023AverageLearning}, or an average causal derivative \citep{Chernozhukov2022LongLearning}. Roughly speaking, the ADE measures the average effect of infinitesimally increasing the exposure level for all units in the population. The ADE corresponds to well-known quantities in economic applications. For example, when one is interested in the effect of the price of a good on demand, the ADE corresponds to the average price elasticity of demand. In cases where one is interested in the effect of increased disposable income on consumption, the ADE corresponds to the average marginal propensity to consume \citep{bruns2025two}. As elucidated in \cite{Hines2021ParameterisingEffects}, the ADE estimand has some attractive properties, relative to the widely studied dose-response curve. For example, the ADE is a single number summary of the causal effect; it is difficult to summarize the dose-response curve by a single number. Relatedly, estimation of the ADE, a scalar, is a simpler statistical task than estimating the dose-response curve, which is infinite-dimensional. The ADE also relies on a weaker version of the overlap/positivity assumption than that required for identification of the dose-response curve. Finally, as \cite{Hines2021ParameterisingEffects} explain, it may be difficult to envision an intervention that sets exposure to the same level for everyone, which is what the dose-response curve measures. Of course, there are settings where the ADE is less appropriate than the dose-response curve. Notably, if the causal effect is not monotonic, an average of positive and negative derivatives could cancel and result in a near zero ADE, rendering the ADE an inadequate summary of the causal effect. Data where the treatment is continuous almost exclusively comes from observational studies, as continuous treatments are rare in randomized experiments. As a result, the ADE is typically only a relevant estimand in observational studies, where unmeasured confounding can never be ruled out. It is therefore of interest to develop methods for ADE estimation that take into account potential unmeasured confounding.

A popular way to alleviate concerns about unmeasured confounding in an observational study is to perform a sensitivity analysis. Dating back to \citet{Cornfield2009}, a sensitivity analysis acknowledges the existence of unmeasured confounding, but asks how strong it must be to overturn a qualitative causal conclusion. There are now a wide array of sensitivity analysis methods developed, especially for binary exposures. Some sensitivity models constrain the effect of the unmeasured confounders affect on the treatment assignment \citep{rosenbaum_obs, Tan2006AScores}. Others constrain how far the potential outcome distribution can be from the observed outcome distribution \citep{Robins2000, Diaz2013, Nabi2024}. Some focus on worst-case departures from the  no unmeasured confounding assumption \citep{Yadlowsky2022,Dorn2023}, and others on average-case departures \citep{Huang2024, Zhang2022a}. Some are parametric in nature \citep{Imbens2003, Cinelli2020, Zhang2022} while others are nonparametric. Researchers have also proposed methods that constrain the proportion of unmeasured confounding \citep{Bonvini2022}.

For the ADE, we introduce a new sensitivity analysis model that bounds worst-case departures of the treatment assignment from no unmeasured confounding. The model can be thought of as a generalization of the marginal sensitivity model for binary treatments due to \cite{Tan2006AScores}, and is closely related to the model of \cite{rosenbaum1989sensitivity}, which itself is a generalization of Rosenbaum's sensitivity model for binary treatments \citep{rosenbaum_obs}. Under the sensitivity model, for a fixed value of the sensitivity parameter, we derive closed-form upper and lower bounds for the ADE. We consider both binary and continuous outcomes, which lead to different bounds. We introduce nonparametric, robust estimators for the bounds, as well as corresponding confidence intervals. These estimators can leverage data-adaptive nuisance estimators that may converge slower than the parametric rate (though not too slowly). One attractive property of our approach is that the closed-form bounds lend themselves to conducting simultaneous inference over an interval of sensitivity parameters in a particularly simple way.

\subsection{Related work in sensitivity analysis beyond binary treatments}
There is a growing body of literature on sensitivity analyses for non-binary treatments. For the multi-valued treatment case, \cite{Basit2023SensitivityTreatments} generalize the model and approach of \cite{Zhao2019SensitivityBootstrap} to derive bounds on linear combinations of potential outcome means. For the continuous treatment case, most sensitivity analysis methods have been aimed at bounds on (aspects of) the dose-response curve/exposure response function. For example, \cite{Bonvini2022SensitivityModels} propose sensitivity analyses for parameters of marginal structural models for any type of treatment and an array of estimands, including the dose-response curve. \cite{Jesson2022}, \cite{Marmarelis2023} \cite{Frauen2023}, and \cite{Baitairian2024} all propose distinct sensitivity analyses/models for the dose-response curve. One the models considered in \cite{dalal2025partial} is closely related to ours, but the estimand of interest is the dose-response curve. The recent work of \cite{Levis2024} proposes a suite of sensitivity analysis methods for stochastic intervention estimands. Meanwhile, \cite{Zhang2024a} and \cite{Zhang2024b} propose sensitivity analysis methods for matched studies with continuous treatments under a related sensitivity model.

\cite{Chernozhukov2022LongLearning} study omitted variable bias (analogous to sensitivity analysis) within a general class of estimands that are continuous, linear functionals of a conditional expectation. Their results apply to a general class of estimands, with the ADE being one example. They assume bounds on the $L_2$ distance between the ``short'' and ``long'' Riesz representers and conditional expectation functions. We take a different approach by appealing to a sensitivity model that imposes a bound on the $L_\infty$ distance between the odds ratio of the observed and unobserved generalized propensity score at any pair of points in the support of the exposure. This gives rise to different interpretations of the sensitivity parameters and also different bounds on the ADE. We give a more detailed comparison to previous sensitivity models in Remarks \ref{remark: connection to other continuous} and \ref{remark: connection to long story short} in Section \ref{section: sensitivity model}.

\subsection{Outline of the paper}
The paper is organized as follows. We formally introduce notation, assumptions, and the causal estimands in Section \ref{section: prelim}. In Section \ref{section: sensitivity model}, we introduce the sensitivity model and discuss its relation to previous models. In Section \ref{section: optimization}, we compute worst-case bounds on the ADE under the sensitivity model. We then propose efficient and robust estimators for the bounds in Section \ref{section: estimation and inference}. We then apply the methods in a simulation study (Section \ref{section: sims}) and two real data applications (Section \ref{section: application}). 

\section{Preliminaries}
\label{section: prelim}
\subsection{Notation}
The underlying data are assumed to be a vector of independent and identically distributed samples from some unknown distribution $(A,X,U,Y(a)_{a \in \mathcal{A}}) \sim F$, where $A$ is a continuous exposure (also referred to as a treatment or dose), $X$ are observed pre-exposure confounders, $U$ some unobserved pre-exposure confounders, $Y(a)$ are potential outcomes, and $\mathcal{A}$ is the support of the exposure. The potential outcomes represent the outcome quantity that would have been observed if the exposure had been externally set to $a$. In the observed data, we only see one of the potential outcomes. Specifically, the observed outcome satisfies $Y = Y(a)$ when $A = a$ (Assumption \ref{assumption: SUTVA}). Thus, the observed data consists of $n$ i.i.d. samples of the data vector $(X, A, Y)$. Throughout the paper, we will consider the two scenarios where $Y$ is continuous and $Y$ is binary separately. We will use $'$ and $\partial_a$ to denote taking a partial derivative with respect to exposure $a$, which will be clear from the context. $\norm{\cdot}$ represents the $L_2$ norm, $\indep$ denotes statistical independence, and $\mathbbm{1}$ denotes the indicator function.

\subsection{Estimand and assumptions}
We now formally introduce the causal estimand, which is well-defined for both binary and continuous (potential) outcomes.  The ADE is defined as
\begin{equation}
\label{eq: ADE definition}
\theta \equiv \lim_{\delta \to 0} \delta^{-1} E[Y(A+\delta)-Y(A)].
\end{equation}
It is useful to unpack the meaning of $\theta$. For some small $\delta > 0$, note that $E[Y(A+\delta)-Y(A)]$ captures an average difference in outcomes in two worlds. In the first world, every unit's observed exposure is increased by $\delta$ from its natural value \citep{haneuse2013modified}. In the second, there is no intervention on the exposure, i.e. exposure takes its natural value and $E[Y(A)] = E[Y]$. $\theta$ then captures the limiting difference between average outcomes under the small $\delta$ shift intervention and no intervention, scaled by the shift \citep{hines2025learning}. When $Y$ is continuous, $\theta$ also matches the $E[Y'(A)]$ estimand introduced in \cite{Rothenhausler2019} under additional smoothness conditions that ensure $E[Y'(A)]$ is well-defined. The required regularity conditions are made explicit in Appendix \ref{sec: estimand appendix}. We now introduce causal assumptions that allow the ADE to be written as a statistical functional of the full data, some components of which are not observed. 

\begin{assumption}[Consistency/SUTVA]
\label{assumption: SUTVA}
$Y = Y(a)$ when $A = a$.
\end{assumption}
Consistency requires that the potential outcome corresponding to an exposure $a$ matches the observed outcome when the observed exposure matches $a$. It also prohibits interference, where the exposure of one individual can affect the outcome of another.

\begin{assumption}[Latent Ignorability]
\label{assumption: latent ignorability}
$\{Y(a)\}_{a \in \mathcal{A}} \indep A \mid X, U$.
\end{assumption}

Latent ignorability essentially requires that the $(X, U)$ vector contains all common causes of the exposure $A$ and outcome $Y$. This assumption is notably weaker than the typical ignorability or no unmeasured confounding assumption, which requires $\{Y(a)\}_{a \in \mathcal{A}} \indep A \mid X$. Moreover, it is a relatively mild assumption since $U$ is unobserved, and there are no restrictions placed on $U$.

\begin{assumption}[Local Overlap]
\label{assumption: local overlap}
The conditional densities $f(a \mid x)$ and $f(a \mid x, u)$ are continuous in $a$ for all $x, u$.
\end{assumption}
In contrast to a typical overlap assumption that would require $f(a \mid x, u) > 0$ for all $a \in \mathcal{A}$ (this would be needed for the dose-response curve), the local overlap assumption only requires that at any level of confounders $(u, x)$ for which $f(a \mid x, u) > 0$, the conditional density must be positive in a small neighborhood around $a$ as well. We only require this weaker notion of overlap as we focus on the ADE. We will also refer to the conditional density $f(a \mid x, u)$ as the generalized propensity score throughout the remainder of the paper \citep{Imbens2000}. The next assumption collects additional regularity conditions on the conditional densities and potential outcomes.
\begin{assumption}[Regularity]
\label{assumption: regularity}
\leavevmode
\par\vspace{0.01em}
\begin{enumerate}[(i)]
\item The conditional densities $f(a \mid x)$ and $f(a \mid x, u)$ are differentiable in $a$. Also, $f(a \mid x) \to 0$ and $f(a \mid x, u) \to 0$ for all fixed $x, u$ as $|a| \to \infty$.
\item The derivative $\partial_a E[Y(a) \mid X = x, U = u] \equiv \lim_{\delta \to 0} \delta^{-1} E[Y(a+\delta) - Y(a) \mid x, u]$ exists and is bounded and continuous. 
\end{enumerate}
\end{assumption}

These regularity assumptions ensure that certain statistical objects are well-defined. In essence, they place smoothness restrictions on the potential outcomes and conditional densities.  

\begin{lemma}
\label{prop: binary estimand}
 Under Assumptions \ref{assumption: SUTVA}-\ref{assumption: regularity},
\begin{equation*}
    \theta = E[\partial_a E[Y(A) \mid X, U]] =  E[\partial_a E[Y \mid  A , X , U ]] = E[-s(A \mid X, U)Y],
\end{equation*}
where $ s(a\mid x, u)\equiv \frac{f'(a \mid x, u)}{f(a\mid x, u)}$, is the (conditional) score function.
\end{lemma}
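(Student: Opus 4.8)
The plan is to work conditionally on the confounders $(X,U)$ and use Assumption \ref{assumption: latent ignorability} to turn each of the four expressions into an ordinary integral against the generalized propensity score $f(\cdot\mid x,u)$; the first equality then follows from dominated convergence and the last from an integration by parts. Write $m(a\mid x,u)\equiv E[Y(a)\mid X=x,U=u]$. By iterated expectations and Assumption \ref{assumption: latent ignorability} (plug in $A=a$ and use $Y(a+\delta)\indep A\mid X,U$ to get $E[Y(A+\delta)\mid A=a,x,u]=m(a+\delta\mid x,u)$, and similarly for $Y(A)$), for each small $\delta\ne 0$,
\begin{equation*}
E[Y(A+\delta)-Y(A)\mid x,u]=\int\big[m(a+\delta\mid x,u)-m(a\mid x,u)\big]\,f(a\mid x,u)\,da .
\end{equation*}

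Dividing by $\delta$, the integrand converges pointwise to $\partial_a m(a\mid x,u)\,f(a\mid x,u)$ and, by the mean value theorem together with the boundedness of $\partial_a m$ in Assumption \ref{assumption: regularity}(ii), is dominated by $C\,f(a\mid x,u)$, which is integrable. Dominated convergence gives $\lim_{\delta\to0}\delta^{-1}E[Y(A+\delta)-Y(A)\mid x,u]=\int\partial_a m(a\mid x,u)f(a\mid x,u)\,da$, and taking the outer expectation over $(X,U)$ yields $\theta=E\big[\partial_a E[Y(A)\mid X,U]\big]$ (the outer derivative evaluated at $a=A$). The middle equality is immediate from consistency: Assumption \ref{assumption: SUTVA} gives $E[Y\mid A=a,x,u]=E[Y(a)\mid A=a,x,u]$, which by Assumption \ref{assumption: latent ignorability} equals $m(a\mid x,u)$, so $\partial_a E[Y\mid A,X,U]$ and $\partial_a E[Y(A)\mid X,U]$ agree.

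For the last equality, condition on $(A,X,U)$ to obtain $E[-s(A\mid X,U)Y]=E\big[-s(A\mid X,U)\,m(A\mid X,U)\big]$, then condition on $(X,U)$ and use $s(a\mid x,u)f(a\mid x,u)=f'(a\mid x,u)$ to rewrite this as $E\big[-\int m(a\mid X,U)f'(a\mid X,U)\,da\big]$. Integrating by parts in $a$, and using that $m(\cdot\mid x,u)f(\cdot\mid x,u)\to 0$ as $|a|\to\infty$ (which holds under the decay condition $f(a\mid x,u)\to0$ of Assumption \ref{assumption: regularity}(i), since boundedness of $\partial_a m$ forces $m$ to grow at most linearly, together with the supplementary conditions of Appendix \ref{sec: estimand appendix}), this becomes $E\big[\int\partial_a m(a\mid X,U)f(a\mid X,U)\,da\big]=E\big[\partial_a E[Y(A)\mid X,U]\big]$, matching the expression obtained above.

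The only genuine difficulty is analytic: verifying that the boundary term in the integration by parts vanishes and that the integrals $\int|m\,f'|$ and $\int|\partial_a m|\,f$ are finite. These are precisely the roles played by the tail-decay conditions on the densities and the boundedness of $\partial_a m$ in Assumption \ref{assumption: regularity}, supplemented by the conditions of Appendix \ref{sec: estimand appendix}; once these are granted, the rest is routine bookkeeping with iterated expectations.
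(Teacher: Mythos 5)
Your proof is correct and follows essentially the same route as the paper's: iterated expectations plus latent ignorability and dominated convergence (with the domination supplied by boundedness of $\partial_a E[Y(a)\mid x,u]$) for the first equality, consistency for the middle one, and integration by parts against $f'(a\mid x,u)$ for the last. The only difference is that you spell out the vanishing of the boundary term $m(a\mid x,u)f(a\mid x,u)\to 0$, which the paper delegates wholesale to Assumption \ref{assumption: regularity} and the argument of \cite{Powell1989}; your explicit acknowledgment that the stated tail conditions need mild supplementation there is a fair and honest reading, not a defect.
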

Here, $-s(A\mid X, U)$ is the Riesz representer for the statistical functional $E[\partial_a E[Y \mid  A , X , U ]]$ \citep{Powell1989}. Two statistical examples where the interpretation of $\theta$ is fairly simple are the partially linear model and the single index model.
\begin{example}[Partially linear model]
Consider the case where $E[Y \mid A, X, U] = \beta A + g(X, U)$, for $g$ an arbitrary function and $\beta$ a scalar. Then $\theta = \beta$.
\end{example}
\begin{example}[Single index model \citep{Stoker1986}]
Suppose $E[Y \mid A, X, U] = F(\beta A + \beta_X X + \beta_U U)$, for $F$ a monotone, differentiable function. Then $\theta = \beta \times E[\partial_a F(\beta A + \beta_X X + \beta_U U)]$. So the $\theta$ estimand is proportional to the regression coefficient $\beta$.
\end{example}
For the remainder of the paper, we focus on the statistical functional $\theta$ in the nonparametric model. In Appendix \ref{sec: weighted ade}, we briefly outline how our results could be extended to weighted average derivative effects, i.e. $\theta_w \equiv E[w(A, X)\partial_a E[Y(A) \mid X, U]]$, for weights $w(A,X)$ that are nonnegative and such that $E[w(A, X)] = 1$. \cite{Hines2021ParameterisingEffects} and \cite{Hines2023OptimallyEffects} discuss the interpretation of such weighted average derivative effects.

\section{Sensitivity model}
\label{section: sensitivity model}
Based on the previous section, we can write down the causal estimand in terms of the statistical functional $\theta$. However, $\theta$ is a function of $E[Y \mid A, X, U]$, where $U$ is unobserved. Thus, $\theta$ cannot be estimated from observed data without further assumptions. In this section, we introduce a sensitivity analysis model that, through a parameter $\gamma$, limits the strength of association between the unmeasured confounder $U$ and the exposure $A$ conditional on $X$. The sensitivity model then facilitates deriving upper and lower bounds on the estimand of interest based on the allowable amount of unmeasured confounding $\gamma$. Recall $\theta$ can be expressed by $\theta = E[\partial_a E[Y \mid  A , X , U ]]$ or as a function of its Riesz representer, $\theta = E[-s(A \mid X, U)Y]$. The sensitivity model we consider is a generalization of the model of \cite{Tan2006AScores} to the continuous exposure case and is related to Rosenbaum's semiparametric model for continuous doses \citep{rosenbaum1989sensitivity}. The model restricts the odds ratio of the generalized propensity score (including $u$) vs. the generalized propensity score (marginalizing over $u$) at any two dose levels:
\begin{assumption}[Marginal $\gamma$ sensitivity model]
\begin{equation}
\label{eqn:sens_model}
    \exp(-\gamma(|a - a'|)) \leq \frac{f(a' \mid x, u)f(a\mid x)}{f(a\mid x, u)f(a' \mid x)}\leq  \exp(\gamma(|a - a'|)) \ \forall x, a, a', u.
\end{equation}
\end{assumption}
This model generalizes Tan's marginal sensitivity model for binary treatments, as it exactly reduces to the model of \cite{Tan2006AScores} when $a$ and $a'$ are replaced by 0 and 1, and the $\Gamma$ from \cite{Tan2006AScores} is set to $\exp(\gamma)$. The Tan model has been studied extensively \citep{Zhao2019SensitivityBootstrap, Dorn2023}, and has recently been generalized to longitudinal settings \citep{bruns2023robust, Tan2025}. \cite{dalal2025partial} consider a more general formulation to bound dose-response curves. The connection between the Tan and Rosenbaum models has been previously discussed when $A$ is binary and continuous \citep{Zhao2019SensitivityBootstrap, dalal2025partial}. We examine the connection between model \eqref{eqn:sens_model} and Rosenbaum's model for continuous doses specific to our context in Appendix \ref{sec: compare marginal and Rosenbaum}. In Appendix \ref{sec: alternative transform}, we discuss implications of instead imposing $g(-\gamma(|a-a'|)) \leq \frac{f(a' \mid x, u)f(a\mid x)}{f(a\mid x, u)f(a' \mid x)} \leq g(\gamma(|a-a'|)) \ \forall a, x, u$ for a smooth, nonnegative, strictly increasing function $g$ such that $g(0) = 1$. We now comment on the relationship between model \eqref{eqn:sens_model} in relation to other sensitivity models for continuous exposures.

\begin{remark}
\label{remark: connection to other continuous}
In relation to other sensitivity models for other causal estimands for continuous exposures, ours model differs in that we consider bounds on the odds deviation of generalized propensity scores, and the deviation depends on the difference between dose levels $a$ and $a'$. In contrast, \cite{Bonvini2022SensitivityModels}, \cite{Jesson2022}, and \cite{Baitairian2024} consider models that place the restriction $1/\Gamma \leq f(a \mid x, u)/ f(a \mid x) \leq \Gamma \ \forall a, x, u$. This model does not involve odds at different dose levels, and so the bounds do not depend on the dose difference. \cite{Marmarelis2023} and \cite{dalal2025partial} consider closely related models to ours, but do not study the ADE. One of the models considered by \cite{Levis2024} is similar to ours as it considers restrictions on odds, but does not involve the dose difference, and they consider a different estimand.
\end{remark}
\begin{remark}
\label{remark: connection to long story short}
\cite{Chernozhukov2022LongLearning} establishes bounds on the bias of estimators that ignore unmeasured confounding for a large class of functionals that can be expressed as a continuous linear function of the conditional expectation $E[Y \mid A, X, U]$. Their class includes the ADE from \eqref{eq: ADE definition}, or what they refer to as the average causal derivative. Their omitted variable bias model imposes $R^2$ type bounds on both the ``long'' (including $U$) Riesz representer and the conditional expectation relative to their ``short'' (excluding $U$) counterparts. The model we consider differs in several aspects. First, model \eqref{eqn:sens_model} imposes bounds on worst-case rather than average case departures from no unmeasured confounding. Therefore, the models and their interpretations are fundamentally different. Second, in contrast to \cite{Chernozhukov2022LongLearning}, we do not directly impose restrictions on the ``long'' Riesz representer $s(a \mid x, u)$ in relation to its ``short'' counterpart $s(a \mid x)$. Rather, we impose restrictions on the generalized propensity score $f(a \mid x, u)$ directly, which may be a more familiar statistical object than $s(a \mid x, u)$, and may aid in the interpretation of the sensitivity parameter. Nevertheless, as will be shown in Lemma \ref{lemma: model implication on latent score}, our model \eqref{eqn:sens_model} implies constraints on  $s(a \mid x, u)$. As a result, the connection between \eqref{eqn:sens_model} and $s(a \mid x, u)$ derived in Lemma \ref{lemma: model implication on latent score} could be useful in interpreting the bound imposed on $s(A \mid X, U)$ in \cite{Chernozhukov2022LongLearning}. We also point out that Proposition 2 from the earlier work of \cite{Rothenhausler2019} derives a similar bound on the omitted variable bias as \cite{Chernozhukov2022LongLearning}, specifically for the ADE.
\end{remark}

We next establish an implication of the sensitivity model that will facilitate reducing the problem to an optimization problem with a mathematically tractable form. 
\begin{lemma}
\label{lemma: model implication on latent score}
Under sensitivity model \eqref{eqn:sens_model},
\label{eq: bounded score deviation constraint}
\begin{equation}
s(a \mid x) -\gamma \leq s(a \mid x, u) \leq s(a \mid x) + \gamma, \ \forall a, x, u.
\end{equation}
\end{lemma}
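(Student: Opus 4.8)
The plan is to reduce the multiplicative odds-ratio constraint \eqref{eqn:sens_model} to an additive constraint by taking logarithms, and then to differentiate in the dose at coincident dose levels. Fix $x$, $u$, and a dose $a$ at which $f(a\mid x)>0$ and $f(a\mid x,u)>0$; outside this set the score functions are not defined and there is nothing to prove. For $a'$ in a small neighborhood of $a$, Local Overlap (Assumption \ref{assumption: local overlap}) guarantees $f(a'\mid x)>0$ and $f(a'\mid x,u)>0$, so we may take logarithms in \eqref{eqn:sens_model} and rearrange to obtain
\[
-\gamma|a-a'| \;\le\; \bigl[\log f(a'\mid x,u) - \log f(a\mid x,u)\bigr] - \bigl[\log f(a'\mid x) - \log f(a\mid x)\bigr] \;\le\; \gamma|a-a'|.
\]

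Next I would specialize to $a' = a+\delta$ and view the middle expression as a function $h(\delta)$ with $h(0)=0$. Dividing the displayed bound through by $\delta$ (keeping track of the sign of $\delta$) gives, for the difference quotient $h(\delta)/\delta$, a bound between $-\gamma$ and $\gamma$. By Assumption \ref{assumption: regularity}(i) the maps $f(\cdot\mid x)$ and $f(\cdot\mid x,u)$ are differentiable at $a$, and since both are strictly positive at $a$, the chain rule yields $h'(0) = \partial_a\log f(a\mid x,u) - \partial_a\log f(a\mid x) = s(a\mid x,u) - s(a\mid x)$, using the definitions $s(a\mid x,u)=f'(a\mid x,u)/f(a\mid x,u)$ and $s(a\mid x)=f'(a\mid x)/f(a\mid x)$. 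Passing to the limit $\delta\to0$ (weak inequalities are preserved under limits) gives $-\gamma\le s(a\mid x,u)-s(a\mid x)\le\gamma$, which rearranges exactly to the asserted bounds $s(a\mid x)-\gamma\le s(a\mid x,u)\le s(a\mid x)+\gamma$.

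The argument is elementary, and I do not anticipate a substantive obstacle. The only points needing care are: (a) that the logarithm and the difference quotient are legitimate, i.e.\ the relevant densities are strictly positive near $a$ — this is precisely what Local Overlap supplies; and (b) that $h(\delta)/\delta$ converges to the difference of the two score functions, which is the content of the differentiability imposed in Assumption \ref{assumption: regularity}(i). The mildly fussy step is handling the absolute value $|a-a'|$ when dividing by a possibly negative $\delta$, which is dealt with by taking one-sided limits from each side (or simply by noting that the two-sided derivative $h'(0)$ exists, so the one-sided limits agree).
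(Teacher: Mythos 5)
Your proposal is correct and follows essentially the same route as the paper: take logarithms of the odds-ratio constraint, form the difference quotient in the dose at coincident dose levels, and pass to the limit to convert the $\pm\gamma|a-a'|$ bound into a $\pm\gamma$ bound on the difference of score functions. Your handling of the sign of $\delta$ and the positivity of the densities is slightly more careful than the paper's, but the argument is the same.
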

\vspace{-8mm}
This result demonstrates that the sensitivity models constrain the Riesz representer $s(a \mid x, u)$ to be within $\gamma$ of $s(a \mid x)$, on the additive scale and in a symmetric fashion. Next, we introduce a statistical restriction (entirely separate from the sensitivity model) on $s(a\mid x,u)$. The restriction is a well-known property of score functions. 
\begin{lemma}
\label{lemma: latent score integrates to marginal score}
Suppose Assumption \ref{assumption: regularity} holds, and that $f'(a \mid x, u)$ is continuous and bounded for all $a, x, u$. Then
\label{eq: score integrates constraint}
\begin{equation}
    E[s(A \mid X, U) \mid A = a, X =x] = s(a\mid x) \ \forall a, x.
\end{equation}
\end{lemma}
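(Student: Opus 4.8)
The plan is to write the left-hand side as an integral against the conditional law of $U$ given $(A,X)=(a,x)$, apply Bayes' rule, and observe that the resulting integrand collapses to the derivative of the marginal conditional density $f(a\mid x)$. First I would write, by the definition of conditional expectation,
\[
E[s(A\mid X,U)\mid A=a,\,X=x] = \int s(a\mid x,u)\, f(u\mid a,x)\, du,
\]
where the integral is taken with respect to whatever dominating measure is appropriate for $U$. Substituting $f(u\mid a,x) = f(a\mid x,u)\,f(u\mid x)/f(a\mid x)$ and $s(a\mid x,u) = f'(a\mid x,u)/f(a\mid x,u)$, the factor $f(a\mid x,u)$ cancels and the display becomes
\[
\frac{1}{f(a\mid x)}\int f'(a\mid x,u)\, f(u\mid x)\, du.
\]
This cancellation is the crux of the computation.

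Second, I would identify $\int f'(a\mid x,u)\, f(u\mid x)\, du$ with $f'(a\mid x)$. Starting from the mixture identity $f(a\mid x) = \int f(a\mid x,u)\,f(u\mid x)\,du$ and differentiating in $a$ under the integral sign yields precisely this. Combining the two steps gives $f'(a\mid x)/f(a\mid x) = s(a\mid x)$, which is the claim.

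The only substantive point — and hence the main obstacle — is justifying differentiation under the integral sign. This is where Assumption \ref{assumption: regularity}(i) and the extra hypothesis of the lemma enter: $a\mapsto f(a\mid x,u)$ is differentiable, and $f'(a\mid x,u)$ is continuous and bounded, say $|f'(a\mid x,u)|\le M$ uniformly in $(a,u)$ for each fixed $x$. Since $u\mapsto M f(u\mid x)$ is integrable, the dominated convergence theorem (equivalently, the Leibniz integral rule) licenses passing the derivative inside the integral. I would also remark that on the set where $f(a\mid x,u)=0$ the score $s(a\mid x,u)$ is undefined, but there $f(u\mid a,x)=0$ as well, so that set contributes nothing to either integral and may be excluded without affecting the argument.
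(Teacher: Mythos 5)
Your proposal is correct and follows essentially the same route as the paper's proof: expand the conditional expectation, apply Bayes' rule so that $f(a\mid x,u)$ cancels against the score's denominator, and then differentiate the mixture identity $f(a\mid x)=\int f(a\mid x,u)f(u\mid x)\,du$ under the integral sign, justified by dominated convergence using the boundedness and continuity of $f'(a\mid x,u)$. The only addition is your remark about the null set where $f(a\mid x,u)=0$, which the paper leaves implicit.
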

Thus, combining the previous two lemmas with the fact that $\theta = E[-s(A \mid X, U)Y]$, we can formulate the sensitivity analysis as an optimization problem with \eqref{eq: bounded score deviation constraint} and \eqref{eq: score integrates constraint} as constraints. As a result, a valid sensitivity analysis under model (\ref{eqn:sens_model}) for the ADE would solve the following optimization problem:
\begin{equation}
\begin{aligned}
& \underset{s(a\mid x, u)}{\text{maximize/minimize}}
& & E[-s(A \mid X, U) Y] \\
& \text{subject to} & & s(a \mid x, u) \in [s(a\mid x) - \gamma, s(a \mid x) + \gamma] \ \forall a, x.\\
& \text{and} & & E[s(A \mid X, U) \mid A = a, X=x] = s(a \mid x) \ \forall a, x.
\end{aligned}
\end{equation}
It is straightforward to see that the optimization can be conducted in each stratum $(A = a, X = x)$ separately. Thus, we focus on solving the following formulation:
\begin{equation}
\label{eq: optimization problem}
\begin{aligned}
& \underset{s(a\mid x, u)}{\text{maximize/minimize}}
& & E[-s(A \mid X, U) Y \mid A = a, X = x] \\
& \text{subject to} & & s(a \mid x, u) \in [s(a\mid x) - \gamma, s(a \mid x) + \gamma] \ \forall a, x.\\
& \text{and} & & E[s(A \mid X, U) \mid A = a, X=x] = s(a \mid x) \ \forall a, x.
\end{aligned}
\end{equation}
These optimization problems formulation follow a Lagrangian formulation, and they resemble other optimization problems in the causal inference literature, for example those in \cite{Jin2022SensitivityPerspective}, \cite{Zhang2022a}, \cite{Dorn2023}, among others. Equipped with this formulation of the optimization problem, we will aim to obtain closed-form solutions for the cases where $Y$ is continuous or binary.

\section{Solving the optimization problems}
\label{section: optimization}
\subsection{Continuous Outcome}
We first consider the case where the outcome $Y$ is continuously distributed, i.e. for all $a, x$, the distribution $Y \mid A, X$ has no point masses.
\begin{proposition}
\label{prop: solution continuous}
Suppose the outcome $Y$ is continuously distributed with no point masses. The solution to the maximization version of \eqref{eq: optimization problem} is 
\begin{equation*}
    s^*(A \mid X, U) = \begin{cases}
        s^*(A \mid X) - \gamma & \text{if } Y > M( A,X)\\
        s^*(A \mid X) + \gamma & \text{if } Y < M( A,X)
    \end{cases},
\end{equation*}
and the opposite (swap the $>$ and $<$ signs in the piecewise function) for the minimization, where $M( A,X)$ is the conditional median (1/2 quantile) of $Y$ given $A, X$. Moreover, the maximum and minimum objective values of the optimization programs are, respectively,
\begin{equation}
\label{eq: optimal values - continuous}
\begin{aligned}
\psi_{\max} = E[-s(A \mid X)Y] + \gamma E[Y (\mathbbm{1}_{\{Y > M( A, X)\}} - \mathbbm{1}_{\{Y < M( A, X)\}})], \\ \psi_{\min} = E[-s(A \mid X)Y] - \gamma E[Y (\mathbbm{1}_{\{Y > M( A, X)\}} - \mathbbm{1}_{\{Y < M( A, X)\}})].
\end{aligned}
\end{equation}
\end{proposition}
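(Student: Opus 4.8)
The plan is to solve the per-stratum optimization problem \eqref{eq: optimization problem} pointwise. Fix $(A,X) = (a,x)$ and write $s_0 = s(a \mid x)$ for the observed-data score. The decision variable is the conditional law of $s(A \mid X, U)$ given $(A,X) = (a,x)$ together with its joint law with $Y$; equivalently, we may think of choosing, for each realization of $U$ consistent with $(a,x)$, a value $s(a \mid x, u) \in [s_0 - \gamma, s_0 + \gamma]$. The objective is $E[-s(A \mid X, U) Y \mid a, x]$, which by the tower property equals $E\big[-Y \cdot E[s(A \mid X, U) \mid Y, a, x] \mid a, x\big]$. So the only feature of the latent score that matters for the objective is its conditional mean given $Y$ (and $(a,x)$); call this $h(Y) := E[s(A \mid X, U) \mid Y, a, x]$. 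The two constraints translate into: (i) $h(y) \in [s_0 - \gamma, s_0 + \gamma]$ for all $y$ (since $s(a\mid x,u)$ lies in that interval and $h$ is a conditional average of such values), and (ii) $E[h(Y) \mid a, x] = s_0$ (this is exactly \eqref{eq: score integrates constraint} after one more application of the tower property). Conversely, any $h$ satisfying (i)–(ii) is achievable — e.g.\ by taking $s(a\mid x, u)$ to be a function of $Y$ alone equal to $h(Y)$, which is a legitimate (if degenerate) choice of latent structure. So the problem reduces exactly to
\begin{equation*}
\text{maximize } \ E[-Y\, h(Y) \mid a, x] \quad \text{over } h \text{ with } h \in [s_0-\gamma, s_0+\gamma],\ E[h(Y)\mid a,x] = s_0.
\end{equation*}

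The second step is to solve this reduced linear program. Write $h(y) = s_0 + \gamma\, t(y)$ with $t(y) \in [-1,1]$ and $E[t(Y) \mid a,x] = 0$. The objective becomes $E[-Y s_0 \mid a,x] - \gamma E[Y\, t(Y) \mid a,x]$; the first term is fixed, so we must maximize $-\gamma E[Y t(Y)\mid a,x]$, i.e.\ minimize $E[Y t(Y)\mid a,x]$ subject to $E[t(Y)\mid a,x] = 0$ and $|t| \le 1$. Introduce a Lagrange multiplier $\lambda$ for the mean-zero constraint: pointwise we want to minimize $(y - \lambda) t(y)$ over $t(y) \in [-1,1]$, whose solution is $t(y) = -\operatorname{sign}(y - \lambda)$. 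The multiplier $\lambda$ must be chosen so that $E[t(Y)\mid a,x] = 0$, i.e.\ $\Prob(Y > \lambda \mid a,x) = \Prob(Y < \lambda \mid a,x)$, which (since $Y$ has no point masses, so $\Prob(Y = \lambda \mid a,x) = 0$) forces $\lambda = M(a,x)$, the conditional median. Hence the optimal $t$ is $t^*(y) = -\mathbbm{1}\{y > M(a,x)\} + \mathbbm{1}\{y < M(a,x)\}$, giving
\begin{equation*}
s^*(a \mid x, u) = s_0 + \gamma t^*(Y) = \begin{cases} s_0 - \gamma & Y > M(a,x),\\ s_0 + \gamma & Y < M(a,x),\end{cases}
\end{equation*}
which is exactly the claimed maximizer. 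Plugging $h^* = s_0 + \gamma t^*$ back into $E[-s(A\mid X,U)Y]$ and averaging over $(A,X)$ (the strata decompose, as noted before \eqref{eq: optimization problem}) yields $\psi_{\max} = E[-s(A\mid X)Y] + \gamma E[Y(\mathbbm{1}\{Y > M(A,X)\} - \mathbbm{1}\{Y < M(A,X)\})]$. The minimization is identical with $t^*$ negated, giving $\psi_{\min}$.

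The one genuinely delicate step — and the main obstacle — is verifying that the two constraints in \eqref{eq: optimization problem} really are \emph{equivalent} to conditions (i)–(ii) on $h$, rather than merely implied by them. The ``implied by'' direction is the tower-property computation above. For the converse (feasibility), I would exhibit, for any admissible $h$, an explicit latent variable: let $U$ be degenerate given $(Y, A, X)$ or, more carefully, note that the optimization is over the functional $\theta$ in the nonparametric model, so we are free to posit the joint law of $(A,X,U,Y)$ subject only to being compatible with the observed law of $(A,X,Y)$ and with \eqref{eqn:sens_model}; taking $s(a\mid x, u)$ to depend on $u$ only through $Y$ and equal $h(Y)$ is consistent with \eqref{eq: bounded score deviation constraint} by (i) and with \eqref{eq: score integrates constraint} by (ii). A secondary point to be careful about is the optimality argument for the inner pointwise problem: strong duality for this linear program is immediate because the constraint set is convex and compact (in, say, the weak topology on bounded $t$) and Slater's condition holds ($t \equiv 0$ is strictly feasible for the single equality constraint), but I would instead give a direct verification — for any feasible $t$, $E[(Y - M(a,x)) t(Y)] \ge E[(Y - M(a,x)) t^*(Y)]$ follows termwise from $t^*(y)(y - M) = -|y - M|$ being the pointwise minimum of $t(y)(y-M)$ over $t(y) \in [-1,1]$, combined with $E[(Y - M) t(Y)] = E[Y t(Y)]$ for any mean-zero $t$. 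This sidesteps duality entirely and makes the proof self-contained. The no-point-mass hypothesis enters exactly once, to guarantee the median constraint can be met with $\Prob(Y = M \mid a, x) = 0$ so that the sign function has no ambiguity on a positive-probability set; the binary-outcome case treated in the sequel is precisely where this fails and a different argument is needed.
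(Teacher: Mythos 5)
Your proof is correct, and it arrives at the same bang-bang-at-the-median solution as the paper, but by a genuinely different route. The paper writes down the Lagrangian of \eqref{eq: optimization problem}, differentiates in $s(A\mid X,U)$, and uses complementary slackness to conclude that the optimizer sits at one of the two endpoints according to the sign of $-\lambda_3 - Y$; the mean constraint then forces the switching point $-\lambda_3$ to be the conditional median. You instead (a) observe via the tower property that the objective and constraints depend on the latent score only through $h(Y) = E[s(A\mid X,U)\mid Y,A,X]$, reducing the problem to a one-dimensional linear program in $h$, and (b) certify optimality of $t^*(y) = -\operatorname{sign}(y - M(a,x))$ by a direct primal argument, using the pointwise inequality $(y-M)t(y) \ge -|y-M|$ together with $E[(Y-M)t(Y)\mid a,x] = E[Y\,t(Y)\mid a,x]$ for mean-zero $t$. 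Your route buys two things the paper's argument leaves implicit: it sidesteps any appeal to KKT sufficiency or strong duality for an infinite-dimensional program, and it makes explicit why one may treat the optimizer as a function of $Y$ alone (the achievability half of your reduction), which the paper's statement of the solution simply takes for granted. The one caveat, which you correctly flag, is that achievability is relative to the relaxed program \eqref{eq: optimization problem}: your constructed $h(Y)$ need not arise from an actual generalized propensity score satisfying \eqref{eqn:sens_model}. But that is exactly the paper's own framing (and the reason it does not claim sharpness), so your argument establishes precisely what the proposition asserts.
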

One might observe that the solution has a Neyman-Pearson flavor. This flavor of solution in sensitivity analysis has been observed before, for example for the average treatment effect in the binary treatment case \citep{Dorn2023, Zhang2022a}. In addition, the optimal values of the optimization problem equal $E[-s(A \mid X)Y]$ (what one would estimate if the unmeasured confounding due to $U$ is ignored), plus or minus $\gamma$ times a nonnegative correction term. Thus, the bounds are symmetric around $E[-s(A \mid X)Y]$. It is clear that the correction term is nonnegative, since it is exactly the average difference between outcomes above and below the conditional median. 

\subsection{Binary Outcome}
The above formulation and closed form solution required the outcome $Y$ to be continuous. This can be seen from the fact that the optimal choices for $s$ depend on $Y$ being above or below some median cutoff point. In the binary case, for strata of $(A,X)$ where $0 < P(Y = 1 \mid A, X) < 1$, one cannot simply take $s^*(A \mid X, U)$ to match the form in Proposition \ref{prop: solution continuous}, replacing $M( A, X)$ with 1/2, as this will lead to a violation of the constraint on the score in Equation \eqref{eq: optimization problem} unless $P(Y = 1 \mid A, X) = 1/2$ exactly. Of course, the bound obtained by replacing $M(A,X)$ in the solution from Proposition \ref{prop: solution continuous} with 1/2 would still be valid, but potentially conservative. Instead, we can show the following result for the binary outcome case:
\begin{proposition}
\label{prop: solution binary}
Suppose the outcome $Y$ is binary. A solution to the maximization version of \eqref{eq: optimization problem} is 
\begin{equation*}
    s^*(A \mid X, U) = \begin{cases}
        s^*(A \mid X) - \gamma & \text{if } Y = 1\\
        s^*(A \mid X) + \gamma \times P(Y = 1 \mid A, X)/(1-P(Y = 1 \mid A, X)) & \text{if } Y = 0
    \end{cases},
\end{equation*}
if $P(Y = 1 \mid A, X) \leq 1/2$ and 
\begin{equation*}
    s^*(A \mid X, U) = \begin{cases}
        s^*(A \mid X) - \gamma \times (1-P(Y = 1 \mid A, X)) / P(Y = 1 \mid A, X) & \text{if } Y = 1\\
        s^*(A \mid X) + \gamma  & \text{if } Y = 0
    \end{cases},
\end{equation*} 
if $P(Y = 1 \mid A, X) > 1/2$. 
The optimal solution for the minimum is the opposite (swap the $Y = 1$ and $Y = 0$ solutions). Moreover, the maximum and minimum values of the optimization program are, respectively,
\begin{equation}
\label{eq: optimal values - binary}
\begin{aligned}
\psi_{\max}^B  = E[-s(A\mid X)Y] + \gamma E[1/2 - |P(Y = 1 \mid A,X) - 1/2|], \\ \psi_{\min}^B  = E[-s(A\mid X)Y] - \gamma E[1/2 - |P(Y = 1 \mid A,X) - 1/2|].
\end{aligned}
\end{equation}
\end{proposition}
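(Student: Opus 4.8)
The plan is to solve, within each stratum $\{A=a,X=x\}$, the linear program \eqref{eq: optimization problem}, using the per-stratum decomposition already noted in the text. Fix $(a,x)$, write $p \equiv P(Y=1\mid A=a,X=x)$ and $s_0 \equiv s(a\mid x)$, and reparametrize the optimization variable by $\delta(U) \equiv s(a\mid x,U) - s_0$. The box constraint becomes $\delta(U)\in[-\gamma,\gamma]$, the score-integration constraint (Lemma~\ref{lemma: latent score integrates to marginal score}) becomes $E[\delta(U)\mid a,x]=0$, and since $Y$ is binary,
\[
E[-s(A\mid X,U)Y \mid A=a,X=x] = -s_0\,p - E[\delta(U)\,\mathbbm{1}\{Y=1\}\mid a,x],
\]
so the maximization over $s$ is equivalent to minimizing $E[\delta(U)\,\mathbbm{1}\{Y=1\}\mid a,x]$ subject to the two constraints. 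I expect a case split on whether $p\le \tfrac{1}{2}$ or $p>\tfrac{1}{2}$ to be forced here: the heuristically optimal choice ``$\delta=-\gamma$ whenever $Y=1$'' is compatible with $E[\delta(U)\mid a,x]=0$ and $\delta(U)\ge-\gamma$ only when $p\le\tfrac{1}{2}$; otherwise the binding constraint flips to the $Y=0$ cell.

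For the bound itself (the inequality ``$\le \psi_{\max}^B$'' for every feasible configuration) the key device is a sign trick. When $p\le \tfrac{1}{2}$: since $\delta(U)+\gamma\ge 0$ by the box constraint and $\mathbbm{1}\{Y=1\}\ge 0$, we get $E[(\delta(U)+\gamma)\mathbbm{1}\{Y=1\}\mid a,x]\ge 0$, hence $E[\delta(U)\mathbbm{1}\{Y=1\}\mid a,x]\ge-\gamma p$ and so $E[-s(A\mid X,U)Y\mid a,x]\le -s_0 p+\gamma p$. When $p>\tfrac{1}{2}$, the analogous argument with $(\gamma-\delta(U))\mathbbm{1}\{Y=0\}\ge 0$ together with $E[\delta(U)\mid a,x]=0$ yields $E[\delta(U)\mathbbm{1}\{Y=1\}\mid a,x]\ge-\gamma(1-p)$, giving the bound $-s_0p+\gamma(1-p)$. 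In both cases the stratum bound equals $-s_0p+\gamma\big(\tfrac{1}{2}-\lvert p-\tfrac{1}{2}\rvert\big)$, using $\min(p,1-p)=\tfrac{1}{2}-\lvert p-\tfrac{1}{2}\rvert$.

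Then I would check that the $s^*$ displayed in the proposition attains this bound. Feasibility requires two things: (i) the box constraint --- in the case $p\le \tfrac{1}{2}$ the two candidate values of $\delta^*(U)=s^*(a\mid x,U)-s_0$ are $-\gamma$ and $\gamma p/(1-p)$, and $0\le \gamma p/(1-p)\le\gamma$ precisely because $p\le\tfrac{1}{2}$ (this is exactly where the case split earns its keep); and (ii) the score-integration constraint, $E[\delta^*(U)\mid a,x]=-\gamma p+\tfrac{\gamma p}{1-p}(1-p)=0$. Plugging $s^*$ into the objective gives $E[-s^*(A\mid X,U)Y\mid a,x]=-(s_0-\gamma)p$, matching the bound; the $p>\tfrac{1}{2}$ case is symmetric. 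As is standard in this literature, writing $s^*$ as a function of $Y$ is shorthand for the worst-case full-data law in which $Y$ is a deterministic function of $(A,X,U)$ given $(a,x)$, which is permitted since $U$ is unrestricted. Integrating the stratum-wise optimal values over the law of $(A,X)$ and using $E[-s(A\mid X)Y]=E[-s(A\mid X)P(Y=1\mid A,X)]$ yields \eqref{eq: optimal values - binary}; the minimization is identical with the roles of $\mathbbm{1}\{Y=1\}$ and $\mathbbm{1}\{Y=0\}$ swapped (equivalently, the sign trick's nonnegative factors become $\gamma-\delta(U)$ and $\delta(U)+\gamma$).

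The main obstacle I anticipate is not any single computation but getting the case analysis on $p$ versus $\tfrac{1}{2}$ exactly right and confirming that the proposed $s^*$ respects the box constraint in each case: in the continuous-outcome proposition the conditional-median split makes the score-integration constraint automatically hold with $\delta^*\in\{-\gamma,+\gamma\}$, whereas the asymmetry of a Bernoulli here forces the extremal $\delta^*$ to take a non-extreme compensating value, and one must verify that this value stays within $[-\gamma,\gamma]$.
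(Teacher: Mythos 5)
Your proposal is correct and follows essentially the same route as the paper's proof: reduce to a per-stratum linear program over the two values of $s^*$ indexed by $Y$, split on $P(Y=1\mid A,X)\lessgtr 1/2$, verify that the compensating non-extreme value stays inside the box $[s(a\mid x)-\gamma,\,s(a\mid x)+\gamma]$, and integrate the stratum-wise optima. The only (welcome) difference is that your nonnegativity ``sign trick'' supplies an explicit certificate that no feasible configuration exceeds the claimed value, where the paper argues optimality more informally by noting that $s^1$ cannot be decreased further without violating a constraint.
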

Again, the solution has a Neyman-Pearson flavor and the optimal value takes the form of $E[-s(A \mid X)Y]$ (what one would estimate if they ignored $U$), plus or minus $\gamma$ times a nonnegative correction term, which increases as $P(Y = 1 \mid A, X)$ approaches 1/2. In the binary case, however, the correction term is not as simple to estimate. We observe that the optimal value for the binary outcome involves an absolute value (or maximum), because of the term $\gamma E[1/2 - |P(Y = 1 \mid A,X) - 1/2|] = \gamma E[1/2 - \max\{P(Y = 1 \mid A,X) - 1/2, 1/2 - P(Y = 1 \mid A,X) \}] = \gamma E[\min\{1 - P(Y = 1 \mid A,X) , P(Y = 1 \mid A,X) \}]$. Such a term is not smooth when the probability that $P(Y = 1 \mid A,X) = 1/2$ is not zero. An approach that is popular when trying to estimate such non-smooth quantities is to instead target a smooth approximation that bounds the true quantity of interest, which we describe in detail in Section \ref{section: estimation and inference}. 

\begin{remark}
A natural question to ask is whether the bounds derived in the previous two sections are ``sharp'' in some sense. For example, in the binary treatment case, for the marginal model from \cite{Tan2006AScores}, \cite{Dorn2023} exhibited sharpness of their bounds (and looseness of the original bounds of \cite{Zhao2019SensitivityBootstrap}) by constructing potential outcome and $U$ distributions that can achieve their bounds simultaneously for the treated and control potential outcomes while producing an identical observed data distribution. It is difficult to find an analogous construction in our more complicated setting with a continuous exposure and a sensitivity model that simultaneously places bounds on quantities at all pairs of dose levels $a, a'$. Moreover, the optimization problem in \eqref{eq: optimization problem} is solved separately for each stratum $(a,x)$, so it does not impose the mean zero (averaging over $A$) restriction on score functions. Nevertheless, we do impose the constraint implied by Lemma \ref{lemma: latent score integrates to marginal score}, which is in some sense analogous to the additional constraint imposed by \cite{Dorn2023} to sharpen the original conservative bounds of \cite{Zhao2019SensitivityBootstrap}. In addition, the simulations and real data applications suggest that our bounds can be informative.
\end{remark}

\section{Estimation and inference}
\label{section: estimation and inference}
For estimation and inference for the closed-form bounds, we appeal to semiparametric efficiency theory \citep{Tsiatis2006}. As alluded to previously, the bounds for a binary outcome can be non-smooth, so we instead target a smooth approximation. The central object in semiparametric efficiency theory is the efficient influence function, whose variance equals the semiparametric efficiency bound, and is unique in a completely nonparametric model. The rest of this section is devoted to characterizing the efficient influence functions of the (smoothed) bounds for continuous and binary outcomes in the nonparametric model, which will motivate construction of estimators. For convenience, we will at times refer to the efficient influence function even when the precise terminology would be the \emph{uncentered} efficient influence function. For a recent review of semiparametric theory, we refer the reader to \cite{Kennedy2022}.
\subsection{Continuous outcome}
In this subsection, we will derive the efficient influence function for the bounds on the ADE under the sensitivity model with continuous outcomes. Recall that these were $\psi_{\max} = E[-s(A \mid X)Y] + \gamma E[Y (\mathbbm{1}_{\{Y > M( A, X)\}} - \mathbbm{1}_{\{Y < M( A, X)\}})]$ and $\psi_{\min} = E[-s(A \mid X)Y] - \gamma E[Y (\mathbbm{1}_{\{Y > M( A, X)\}} - \mathbbm{1}_{\{Y < M( A, X)\}})]$. The efficient influence function for the functional $E[-s(A \mid X)Y]$ was derived in \cite{Newey1993EfficiencyModels}. Thus, by a linearity property of efficient influence functions \citep{Kennedy2022}, it only remains to find the efficient influence function of $\gamma E[Y (\mathbbm{1}_{\{Y > M( A, X)\}} - \mathbbm{1}_{\{Y < M( A, X)\}})]$.

\begin{proposition}
\label{prop: eif correction continuous}
The efficient influence function of $\gamma E[Y (\mathbbm{1}_{\{Y > M( A, X)\}} - \mathbbm{1}_{\{Y < M( A, X)\}})]$ is 
\vspace{-8mm}
\begin{align*}
\gamma (Y - M(A,X))&(\mathbbm{1}_{\{Y > M( A, X)\}} +1/2M(A,X) - \mathbbm{1}_{\{Y < M( A, X)\}} - 1/2M(A,X)) \\ &= \gamma (Y - M(A,X))(\mathbbm{1}_{\{Y > M( A, X)\}} - \mathbbm{1}_{\{Y < M( A, X)\}}).
\end{align*}
\end{proposition}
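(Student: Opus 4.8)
The plan is to compute the efficient influence function (EIF) directly by writing the target functional $\gamma E[Y(\mathbbm{1}_{\{Y > M(A,X)\}} - \mathbbm{1}_{\{Y < M(A,X)\}})]$ as a smooth function of the conditional median $M(A,X)$ and other observed-data features, and then apply the chain rule for influence functions, accounting for the fact that $M(A,X)$ is itself a nuisance quantity with its own influence function. Since $Y$ is continuously distributed, $\mathbbm{1}_{\{Y = M(A,X)\}}$ has probability zero, so we may write $\mathbbm{1}_{\{Y > M\}} - \mathbbm{1}_{\{Y < M\}} = 1 - 2\,\mathbbm{1}_{\{Y < M\}}$ (up to a null set), and the functional becomes $\gamma E[Y] - 2\gamma E[Y\,\mathbbm{1}_{\{Y < M(A,X)\}}]$. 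The first term has the trivial EIF $\gamma(Y - E[Y])$. The substance is in the second term, $E[Y\,\mathbbm{1}_{\{Y < M(A,X)\}}]$.

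First I would handle the ``plug-in'' part. Fixing $M = M(A,X)$ as if known, the functional $E[Y\,\mathbbm{1}_{\{Y < M(A,X)\}}]$ is a plain expectation of a fixed (bounded) transformation of the data, so its EIF as a function of the full distribution has a contribution $Y\,\mathbbm{1}_{\{Y < M(A,X)\}} - E[Y\,\mathbbm{1}_{\{Y < M(A,X)\}}]$. Second, I would account for the dependence of the functional on the median path. Perturbing the conditional law of $Y$ given $(A,X)$ moves $M(A,X)$, and the pathwise derivative of $M$ is the standard quantile-influence contribution: $-\bigl(\mathbbm{1}_{\{Y < M(A,X)\}} - 1/2\bigr)/f_{Y\mid A,X}(M(A,X)\mid A,X)$. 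By the chain rule, this enters the EIF of $E[Y\,\mathbbm{1}_{\{Y < M(A,X)\}}]$ weighted by $\partial_m E[Y\,\mathbbm{1}_{\{Y < m\}} \mid A,X]\big|_{m = M(A,X)} = M(A,X)\, f_{Y\mid A,X}(M(A,X)\mid A,X)$ (differentiating under the integral sign, the density factor cancels). The two contributions from the quantile path therefore combine to $-M(A,X)\bigl(\mathbbm{1}_{\{Y < M(A,X)\}} - 1/2\bigr)$, with the density term cancelling — this cancellation is what makes the final EIF estimable without estimating the conditional density of $Y$.

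Collecting terms: the EIF of $E[Y\,\mathbbm{1}_{\{Y < M(A,X)\}}]$ is $Y\,\mathbbm{1}_{\{Y < M(A,X)\}} - M(A,X)\bigl(\mathbbm{1}_{\{Y < M(A,X)\}} - 1/2\bigr)$ minus its mean, and multiplying by $-2\gamma$ and adding $\gamma(Y - E[Y])$ from the first term, one simplifies (regrouping $Y - 2Y\,\mathbbm{1}_{\{Y<M\}} = Y(\mathbbm{1}_{\{Y>M\}} - \mathbbm{1}_{\{Y<M\}})$ and $2M\,\mathbbm{1}_{\{Y<M\}} - M = -M(\mathbbm{1}_{\{Y>M\}} - \mathbbm{1}_{\{Y<M\}})$ up to null sets) to obtain $\gamma(Y - M(A,X))\bigl(\mathbbm{1}_{\{Y > M(A,X)\}} - \mathbbm{1}_{\{Y < M(A,X)\}}\bigr)$, which is exactly the claimed expression (the $\pm 1/2\,M(A,X)$ terms written in the statement being a bookkeeping device that cancels). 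To be rigorous I would verify the Gateaux-derivative calculation along a one-dimensional parametric submodel $\{P_t\}$: differentiate $t \mapsto \gamma E_{P_t}[Y(\mathbbm{1}_{\{Y > M_t(A,X)\}} - \mathbbm{1}_{\{Y < M_t(A,X)\}})]$ at $t=0$, split into the explicit-$t$ piece and the $M_t$-piece, and match the result against $E[\varphi(O)\,S(O)]$ for the candidate $\varphi$ and arbitrary score $S$; since the model is nonparametric the influence function so obtained is automatically the efficient one.

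The main obstacle I anticipate is making the differentiation of the indicator $\mathbbm{1}_{\{Y < M_t(A,X)\}}$ with respect to $t$ rigorous — this is not differentiable pointwise, and one must interchange the $t$-derivative with the expectation, which is where the continuity (no point mass) of $Y \mid A,X$ and the smoothness of $f_{Y\mid A,X}$ near its median are used. This is the familiar subtlety in deriving influence functions for quantile-type functionals; it can be dispatched by a dominated-convergence / Lebesgue-differentiation argument under the regularity already assumed, but it is the step that requires the most care.
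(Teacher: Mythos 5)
Your proposal is correct and is in substance the same argument as the paper's: a pathwise derivative along a parametric submodel, split into the explicit (plug-in) piece and the piece coming from the perturbation of the conditional median, with the conditional density $f_{Y\mid A,X}(M(A,X)\mid A,X)$ cancelling between the quantile influence function and the derivative $\partial_m E[Y\mathbbm{1}_{\{Y<m\}}\mid A,X]$. The only cosmetic differences are that you decompose via $\mathbbm{1}_{\{Y>M\}}-\mathbbm{1}_{\{Y<M\}}=1-2\,\mathbbm{1}_{\{Y<M\}}$ and invoke the standard median influence function, whereas the paper splits the functional into $\theta^{+}-\theta^{-}$ and rederives that same cancellation from scratch via the Leibniz rule applied to $\int^{M_\epsilon}y\,p_\epsilon(y\mid a,x)\,dy$; your collected expression agrees with the paper's lemma giving $(Y-M)\mathbbm{1}_{\{Y\leq M\}}+M/2$ and $(Y-M)\mathbbm{1}_{\{Y>M\}}+M/2$ for the two halves.
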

\vspace{-8mm}
The efficient influence function of $E[-s(A \mid X)Y]$ derived in \cite{Newey1993EfficiencyModels} takes the following form:
\begin{equation*}
     \mu'(A,X) - s(A \mid X)\{Y - \mu(A,X)\},
\end{equation*}
where $\mu(A, X) \equiv E[Y \mid A, X]$, and $\mu'(A,X) \equiv \partial_a \mu(A,X)$. We then get the immediate corollary:
\begin{corollary}
\label{cor: eif bounds continuous}
The efficient influence functions of $\psi_{\max}$ and $\psi_{\min}$ under the nonparametric model are respectively,
\begin{equation}
\begin{aligned}
\phi_{\text{max}} = \mu'(A,X) - s(A \mid X)\{Y - \mu(A,X)\} + \gamma (Y - M(A,X))(\mathbbm{1}_{\{Y > M( A, X)\}} - \mathbbm{1}_{\{Y < M( A, X)\}}), \\ \phi_{\text{min}} = \mu'(A,X) - s(A \mid X)\{Y - \mu(A,X)\} - \gamma (Y - M(A,X))(\mathbbm{1}_{\{Y > M( A, X)\}} - \mathbbm{1}_{\{Y < M( A, X)\}}). 
\end{aligned}
\end{equation}
\end{corollary}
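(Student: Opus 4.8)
\emph{Proof sketch.} The plan is to read off $\phi_{\max}$ and $\phi_{\min}$ from the additive structure of the target functionals, exploiting the linearity of the efficient influence function in the nonparametric model. Write $\psi_1 \equiv E[-s(A \mid X) Y]$ and $\psi_2 \equiv \gamma E[Y(\mathbbm{1}_{\{Y > M(A,X)\}} - \mathbbm{1}_{\{Y < M(A,X)\}})]$, so that $\psi_{\max} = \psi_1 + \psi_2$ and $\psi_{\min} = \psi_1 - \psi_2$. The (uncentered) efficient influence function of $\psi_1$ is $\mu'(A,X) - s(A \mid X)\{Y - \mu(A,X)\}$, as recalled in the text and originally derived in \cite{Newey1993EfficiencyModels}, and the (uncentered) efficient influence function of $\psi_2$ is the expression given in Proposition \ref{prop: eif correction continuous}.

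First I would recall the standard fact that in a nonparametric model the tangent space is all of $L_2^0(F)$, so a pathwise differentiable functional has a unique (centered) influence function---automatically the efficient one---and this assignment is linear: for constants $c_1, c_2$, the functional $c_1 \psi^{(1)} + c_2 \psi^{(2)}$ has efficient influence function $c_1 \phi^{(1)} + c_2 \phi^{(2)}$, which follows by differentiating the pathwise-derivative representation $\left.\tfrac{d}{dt}\psi(F_t)\right|_{t=0} = E[\phi(O)\,\ell'(O)]$ along a generic regular parametric submodel and invoking uniqueness of the Riesz representer. Since $\psi_1$ and $\psi_2$ are pathwise differentiable---exactly the content of the results cited above---so are $\psi_{\max}$ and $\psi_{\min}$, and applying linearity with $(c_1, c_2) = (1, 1)$ gives $\phi_{\max}$ while $(c_1, c_2) = (1, -1)$ gives $\phi_{\min}$, after summing the two displayed influence functions.

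The one bookkeeping point is the ``uncentered'' qualifier: adding (or subtracting) the uncentered efficient influence functions---each being the centered one plus the corresponding parameter value---again produces the uncentered efficient influence function of the combination, because $\psi_{\max}$ and $\psi_{\min}$ are the same linear combinations of $\psi_1$ and $\psi_2$ with coefficients $\pm 1$. As a sanity check one can verify $E[\phi_{\max}] = \psi_{\max}$ and $E[\phi_{\min}] = \psi_{\min}$: the term $E[s(A\mid X)\{Y - \mu(A,X)\}]$ vanishes by iterated expectation, and the cross term $\gamma E[M(A,X)(\mathbbm{1}_{\{Y > M(A,X)\}} - \mathbbm{1}_{\{Y < M(A,X)\}})]$ vanishes because $M(A,X)$ is the conditional median of the continuously distributed $Y \mid A, X$, so that $E[\mathbbm{1}_{\{Y > M(A,X)\}} - \mathbbm{1}_{\{Y < M(A,X)\}} \mid A, X] = 0$.

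There is no genuine obstacle here---the substantive work is entirely in Proposition \ref{prop: eif correction continuous} and in \cite{Newey1993EfficiencyModels}, and the corollary is their immediate combination. If one preferred a self-contained derivation avoiding the appeal to linearity, one could instead differentiate $\psi_{\max}(F_t)$ directly along a generic regular parametric submodel, separate the contribution of $E[-s(A\mid X)Y]$ from that of the correction term, and identify the Riesz representer; but this just replays the two calculations already in hand, so invoking linearity is the cleaner route.
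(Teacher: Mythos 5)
Your proposal is correct and matches the paper's own (implicit) argument: the corollary is obtained exactly by linearity of the efficient influence function in the nonparametric model, combining Newey's EIF for $E[-s(A\mid X)Y]$ with the EIF of the correction term from Proposition \ref{prop: eif correction continuous}. The sanity check that the uncentered EIFs average back to $\psi_{\max}$ and $\psi_{\min}$ is a nice addition but not needed beyond what the paper does.
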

Equipped with the efficient influence functions, it is straightforward to propose estimators with desirable properties. The estimator will require estimating the unknown nuisance functions $\mu, \mu', s, M$. To ease the notational burden for the theoretical analysis, we simply analyze a sample split estimator where the nuisance functions are estimated on one split of the data, and on the second split, those estimates are plugged in to the efficient influence function at each data point, i.e.
\begin{equation}
\label{eq: continuous estimator}
\begin{aligned}
\wh{\psi}_{\text{max}} &= \frac{1}{n} \sum_{i = 1}^n \wh{\mu}'(A_i,X_i) - \wh{s}(A_i \mid X_i)\{Y_i - \wh{\mu}(A_i,X_i)\} \\ &+ \gamma (Y_i - \wh{M}(A_i,X_i))(\mathbbm{1}_{\{Y_i > \wh{M}( A_i, X_i)\}} - \mathbbm{1}_{\{Y_i < \wh{M}( A_i,X_i)\}}),\\
\wh{\psi}_{\text{min}}  &= \frac{1}{n} \sum_{i = 1}^n \wh{\mu}'(A_i,X_i) - \wh{s}(A_i\mid X_i)\{Y_i - \wh{\mu}(A_i,X_i)\} \\ &- \gamma (Y_i - \wh{M}(A_i,X_i))(\mathbbm{1}_{\{Y_i > \wh{M}( A_i, X_i)\}} - \mathbbm{1}_{\{Y_i < \wh{M}( A_i,X_i)\}}).
\end{aligned}
\end{equation}
To make use of all of the data, one can employ the now commonly utilized cross-fitting technique \citep{Chernozhukov2018b}, where the data is randomly split into $K$ roughly equally sized folds $D_1, \ldots, D_K$. For each $k = 1,\ldots, K$, we compute nuisance estimates $\wh{\eta} = (\wh{\mu}, \wh{\mu'}, \wh{s}, \wh{M})$ for $\eta = (\mu, \mu', s, M)$ on all folds except $D_k$, and plug in these estimates on fold $D_k$ (as in \eqref{eq: continuous estimator}). The result of Theorem \ref{thm: asymptotic normality continuous}, which outlines conditions under which the sample split estimator achieves asymptotic normality, will also apply to an analogous cross-fitted estimator.
\begin{theorem}
\label{thm: asymptotic normality continuous}
Suppose nuisance estimates $\wh{\eta}$ are estimated from an independent sample of data and the conditional densities $f(y \mid a, x)$ are uniformly bounded with no point masses. Also, assume that $(\wh{\mu}, \wh{\mu'}, \wh{s}, \wh{M}, \mu, \mu', s, M)$ are bounded almost surely and that $\norm{\wh{\mu} - \mu} + \norm{\wh{\mu'} - \mu'} + \norm{\wh{s} - s} + \norm{\wh{M} - M} = o_p(n^{-1/4})$. Then  
\begin{align*}
\sqrt{n}(\wh{\psi}_{\textmax} - \psi_{\textmax}) \stackrel{d}{\to} N(0, \text{Var}(\phi_{\textmax})), \\ 
\sqrt{n}(\wh{\psi}_{\text{min}} - \psi_{\text{min}}) \stackrel{d}{\to} N(0, \text{Var}(\phi_{\text{min}})).\\ 
\end{align*}
\end{theorem}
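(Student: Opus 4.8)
Write $\mathbb{P}_n$ for the empirical average over the estimation fold, let $\eta = (\mu, \mu', s, M)$ and $\wh\eta = (\wh\mu, \wh\mu', \wh s, \wh M)$ with $\wh\eta$ computed on the independent auxiliary fold, so that $\wh\psi_{\textmax} = \mathbb{P}_n\phi_{\textmax}(\cdot;\wh\eta)$ in the notation of Corollary \ref{cor: eif bounds continuous}. Since $\phi_{\textmax}$ is the uncentered efficient influence function we have $P\phi_{\textmax}(\cdot;\eta) = \psi_{\textmax}$ (which one checks directly using $\E[\mu'(A,X)] = \E[-s(A\mid X)\mu(A,X)]$ via integration by parts and the fact that $M$ is the conditional median), and I would use the standard three-term decomposition
\begin{equation*}
\wh\psi_{\textmax} - \psi_{\textmax} = (\mathbb{P}_n - P)\phi_{\textmax}(\cdot;\eta) + (\mathbb{P}_n - P)\{\phi_{\textmax}(\cdot;\wh\eta) - \phi_{\textmax}(\cdot;\eta)\} + \{P\phi_{\textmax}(\cdot;\wh\eta) - \psi_{\textmax}\}.
\end{equation*}
The first term, scaled by $\sqrt n$, converges to $N(0,\mathrm{Var}(\phi_{\textmax}))$ by the ordinary CLT, the variance being finite by the boundedness and bounded-conditional-density assumptions. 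The plan is to show the other two terms are $o_p(n^{-1/2})$ and conclude by Slutsky; the $\wh\psi_{\textmin}$ statement then follows verbatim after flipping the sign of the $\gamma$ term, and a cross-fitted estimator is handled by averaging the per-fold expansions.

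For the empirical-process (second) term I would condition on the auxiliary fold: given $\wh\eta$ the summands are i.i.d.\ with conditional mean zero, so the conditional variance of $\sqrt n$ times this term is at most $\norm{\phi_{\textmax}(\cdot;\wh\eta) - \phi_{\textmax}(\cdot;\eta)}^2$, and it suffices to show $\norm{\phi_{\textmax}(\cdot;\wh\eta) - \phi_{\textmax}(\cdot;\eta)} = o_p(1)$ (then conditional Chebyshev, together with the fact that a conditionally $o_p(1)$ sequence is unconditionally $o_p(1)$, closes this term). Splitting $\phi_{\textmax}$ into the Newey piece $\mu'(A,X) - s(A\mid X)\{Y - \mu(A,X)\}$ and the median piece, the former is controlled by the triangle inequality, the boundedness assumptions, and $\norm{\wh\mu-\mu} + \norm{\wh\mu'-\mu'} + \norm{\wh s - s}\to_p 0$; for the latter I would first record that $(Y-m)(\mathbbm{1}_{\{Y>m\}} - \mathbbm{1}_{\{Y<m\}}) = |Y-m|$ almost surely (using that $Y$ has no point masses), so that the median piece evaluated at $\wh M$ versus $M$ differs in absolute value by at most $\gamma|\wh M(A,X) - M(A,X)|$, whose $L_2$ norm is $\gamma\norm{\wh M - M} = o_p(1)$.

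The bias (third) term is the crux, and I would again split it along the two pieces. For the Newey piece, integration by parts in $a$ — valid because $f(a\mid x)\to 0$ as $|a|\to\infty$ (Assumption \ref{assumption: regularity}(i)) and the functions involved are bounded — gives $\E[\wh\mu'(A,X)] = \E[-s(A\mid X)\wh\mu(A,X)]$, and likewise with $\mu$; combined with $\E[\wh s(A\mid X)\{Y - \wh\mu(A,X)\}] = \E[\wh s(A\mid X)\{\mu(A,X) - \wh\mu(A,X)\}]$ (iterated expectations), the Newey contribution to the bias collapses to $\E[\{\wh s(A\mid X) - s(A\mid X)\}\{\wh\mu(A,X) - \mu(A,X)\}]$, which by Cauchy--Schwarz is at most $\norm{\wh s - s}\,\norm{\wh\mu - \mu} = o_p(n^{-1/2})$; note $\wh\mu'$ drops out of the bias, and is only needed (at rate $o_p(n^{-1/4})$) for the variance term. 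For the median piece the bias equals $\gamma\big(\E[|Y - \wh M(A,X)|] - \E[|Y - M(A,X)|]\big)$, and here the key observation is that for each $(a,x)$ the map $m\mapsto \E[|Y-m|\mid A=a, X=x]$ is minimized at $M(a,x)$ with vanishing first derivative $2F(m\mid a,x)-1\big|_{m=M(a,x)}=0$ and second derivative $2f(m\mid a,x)$, which is bounded by assumption.

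A second-order Taylor expansion of $m\mapsto\E[|Y-m|\mid A,X]$ about $M(A,X)$ then bounds $|\E[|Y-\wh M(A,X)|\mid A,X] - \E[|Y-M(A,X)|\mid A,X]|$ by a constant times $(\wh M(A,X)-M(A,X))^2$, so taking the outer expectation gives a median bias of order $\gamma\norm{\wh M - M}^2 = o_p(n^{-1/2})$. Adding the two contributions shows the third term is $o_p(n^{-1/2})$, which completes the plan. The step I expect to require the most care is this last one: one must justify interchanging differentiation and integration to identify the first and second $m$-derivatives of $\E[|Y-m|\mid a,x]$ and control the Taylor remainder uniformly in $(a,x)$, which is exactly where the uniform boundedness of $f(y\mid a,x)$ and the almost-sure boundedness of $\wh M, M$ are used.
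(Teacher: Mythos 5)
Your overall architecture matches the paper exactly: the same three-term decomposition, CLT for the leading term, a conditional-Chebyshev/sample-splitting argument for the empirical-process term (the paper cites Lemma 1 of \cite{Kennedy2022}, which is the same argument you spell out), and the identical collapse of the Newey piece of the bias to $\E[\{\wh s - s\}\{\wh\mu - \mu\}]$ via integration by parts and iterated expectations. Where you genuinely diverge is the median piece of the bias. The paper splits it into two sub-terms and controls them with two auxiliary lemmas bounding, respectively, $P(\wh M(A,X) < Y < M(A,X))$ and $\norm{P(Y>\wh M\mid A,X) - P(Y>M\mid A,X)}$ by $\norm{\wh M - M}$, then combines these with Cauchy--Schwarz to get the $\norm{\wh M - M}^2$ rate. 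You instead observe that $(Y-m)(\mathbbm{1}_{\{Y>m\}}-\mathbbm{1}_{\{Y<m\}})=|Y-m|$ a.s.\ and exploit the classical fact that $m\mapsto\E[|Y-m|\mid A,X]$ is minimized at the conditional median with first derivative $2F(m\mid A,X)-1$ vanishing there and curvature $2f(m\mid A,X)$ bounded; a second-order expansion then gives the same $O(\norm{\wh M - M}^2)$ bound in one stroke. Both arguments use exactly the same assumptions (uniformly bounded conditional density with no point masses), and your route is arguably more transparent about \emph{why} the bias is second order — it is the standard Neyman-orthogonality-of-a-minimizer phenomenon. Your identity $|Y-m|$ also simplifies the $L_2$-consistency check for the empirical-process term via the reverse triangle inequality, which the paper does not make explicit. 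The one place to be careful, which you correctly flag, is the Taylor remainder: since $f(\cdot\mid a,x)$ is only assumed bounded (not continuous), it is cleaner to write the remainder in integral form, $2\int_{M}^{\wh M}\{F(t\mid a,x)-F(M\mid a,x)\}\,dt \le C(\wh M - M)^2$, using only the Lipschitz property of $F$ implied by the density bound; with that adjustment the argument is complete and correct.
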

\vspace{-12mm}
By virtue of using an estimator based on the efficient influence function, the bias of the estimator only involves second-order nuisance estimation errors. Thus, $\sqrt{n}$ consistency is possible even if the nuisance functions can be estimated at the (slower than parametric) $n^{-1/4}$ rate. This makes it possible to conduct valid inference even when using nonparametric or data adaptive estimates of the nuisance functions, provided they are not converging too slowly to the truth. Based on the asymptotic normality of the estimators, it is straightforward to construct Wald-style confidence intervals. Since we are bounding upper and lower bounds, it is reasonable to construct one-sided confidence intervals. Explicitly, 
\begin{equation}
\label{eq: continuous ci}
\begin{aligned}
\wh{\psi}_{\textmax} + z_{1 - \alpha} \sqrt{\wh{\sigma}_{\textmax} / n}, \\
\wh{\psi}_{\textmin} - z_{1 - \alpha} \sqrt{\wh{\sigma}_{\textmin} / n},
\end{aligned}
\end{equation}
are asymptotically valid $1 - \alpha$ confidence upper and lower bounds for $\psi_{\textmax}$ and $\psi_{\textmin}$, provided the variance estimates $\wh{\sigma}_{\textmax}$ and $\wh{\sigma}_{\textmin}$ converges to the true variances. Here, plug-in variance estimates can be used:
\begin{equation*}
\scriptsize
\begin{aligned}
&\wh{\sigma}_{\textmax} \equiv \frac{1}{n} \sum_{i = 1}^n \left\{ \wh{\mu}'(A_i,X_i) - \wh{s}(A_i\mid X_i)\{Y_i - \wh{\mu}(A_i,X_i)\} + \gamma (Y_i - \wh{M}(A_i,X_i))(\mathbbm{1}_{\{Y_i > \wh{M}( A_i, X_i)\}} - \mathbbm{1}_{\{Y_i < \wh{M}( A_i,X_i)\}}) - \wh{\psi}_{\textmax}\right\}^2, \\ 
&\wh{\sigma}_{\textmin} \equiv \frac{1}{n} \sum_{i = 1}^n \left\{ \wh{\mu}'(A_i,X_i) - \wh{s}(A_i\mid X_i)\{Y_i - \wh{\mu}(A_i,X_i)\} - \gamma (Y_i - \wh{M}(A_i,X_i))(\mathbbm{1}_{\{Y_i > \wh{M}( A_i, X_i)\}} - \mathbbm{1}_{\{Y_i < \wh{M}( A_i,X_i)\}}) - \wh{\psi}_{\textmin}\right\}^2.
\end{aligned}
\end{equation*}
These plug-in estimates are consistent under the assumptions of Theorem \ref{thm: asymptotic normality continuous}, and so the confidence bounds from Equation \eqref{eq: continuous ci} will be asymptotically valid.
\subsection{Binary outcome}
As derived in the previous section, the bounds in the binary outcome case involve the term $\gamma E[\min\{1 - P(Y = 1 \mid A,X) , P(Y = 1 \mid A,X) \}]$. The presence of the minimum makes this quantity potentially non-smooth, so we instead rely on a smooth approximation. Specifically, when minima or maxima are involved, the LogSumExp (LSE) function is a popular choice (see \cite{Levis2023} for a recent example in causal inference). For a minimum of $k$ quantities, and any fixed $t > 0$,
\begin{equation}
\label{eq: LSE approximation}
\min\{x_1,\ldots,x_k\} - \frac{\log(k)}{t} \leq -\frac{1}{t} \log(\exp(-tx_1)+\ldots+\exp(-tx_k)) \leq \min\{x_1,\ldots,x_k\}.
\end{equation}
Specialized to our setting, where we take a minimum of $p$ and $1-p$, we define 
\begin{equation}
h_t(p) \equiv  -\frac{1}{t} \log(\exp(-tp)+\exp(-t(1-p))).
\end{equation}
Thus, we will instead estimate (for a fixed $t$)
\begin{equation}
\label{eq: optimal values - binary relaxed}
\begin{aligned}
\psi_{\max,h_t}^{B} \equiv E[-s(A\mid X)Y] + \gamma E[h_t(P(Y = 1 \mid A,X))], \\ 
\psi_{\min,h_t}^{B} \equiv E[-s(A\mid X)Y] -\gamma E[h_t(P(Y = 1 \mid A,X))].
\end{aligned}
\end{equation}
From Equation \eqref{eq: LSE approximation}, it is immediate that $\psi_{\max}^B \leq \psi_{\max,h_t}^{B}$ and $\psi_{\min}^B \geq \psi_{\min,h_t}^{B}$ for any $t > 0$, and the inequality gap shrinks as $t$ increases. At the same time, $h_t$ becomes less smooth as $t$ increases, and consequently, $E[h_t(P(Y = 1 \mid A,X))]$ becomes harder to estimate. Thus, in choosing $t$, there is a trade-off between approximation error and statistical estimation. A rigorously justified ``optimal'' choice for $t$ is outside the scope of this paper, but we refer the reader to \cite{Levis2023} for some additional discussion.

In the remainder of this section, we will derive the efficient influence function for the smoothed lower and upper bounds for the ADE under the sensitivity model with binary outcomes. Recall that these were $\psi_{\textmax,h_t}^{B}$ and $\psi_{\textmin,h_t}^{B}$. As in the continuous outcome case, the efficient influence function for the functional $E[-s(A \mid X)Y]$ was derived in \cite{Newey1993EfficiencyModels}. Thus, it only remains to find the efficient influence function of $E[h_t\left(P(Y = 1 \mid A,X)\right)]$.

\begin{proposition}
\label{prop: eif correction binary}
The efficient influence function of $\theta_h^B \equiv E[h\left(P(Y = 1 \mid A,X)\right)]$ for any univariate, continuously differentiable function $h$ is given by
\begin{equation*}
h\left(P(Y = 1 \mid A,X)\right) + h'(P(Y = 1 \mid A,X))(Y - P(Y = 1 \mid A,X)).
\end{equation*}
\end{proposition}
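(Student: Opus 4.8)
The plan is to read off the efficient influence function from its defining property: I would differentiate $\theta_h^B$ along regular one-dimensional parametric submodels of the nonparametric model and match the pathwise derivative to $E[\varphi\, S]$ for a candidate $\varphi$, then use that the nonparametric tangent space consists of all mean-zero square-integrable functions (hence the gradient is unique, as noted earlier) to conclude that the candidate is the efficient influence function. Write $p(a,x) \equiv P(Y = 1 \mid A = a, X = x) = E[Y \mid A = a, X = x]$ (using that $Y$ is binary), so that $\theta_h^B = \int h(p(a,x))\, dF(a,x)$ with $F(a,x)$ the marginal law of $(A,X)$. Fix a regular parametric submodel $\{F_\epsilon\}$ through the true law at $\epsilon = 0$ with score $S = S(A,X,Y)$, and split it as $S = S_1 + S_2$, where $S_2(A,X) \equiv E[S \mid A, X]$ is the score of the $(A,X)$-marginal submodel and $S_1(Y \mid A, X) \equiv S - S_2$ is the conditional score of $Y$ given $(A,X)$, which satisfies $E[S_1 \mid A, X] = 0$.

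Next I would differentiate. Writing $f_\epsilon$ for the submodel densities with respect to a common dominating measure, the product rule applied to $\theta_h^B(F_\epsilon) = \int h(p_\epsilon(a,x))\, f_\epsilon(a,x)\, d\mu(a,x)$, where $p_\epsilon(a,x) = \int y\, f_\epsilon(y \mid a, x)\, d\nu(y)$, gives
\[
\frac{d}{d\epsilon}\big|_{\epsilon = 0} \theta_h^B(F_\epsilon) = E\big[h(p(A,X))\, S_2(A,X)\big] + E\big[h'(p(A,X))\, \dot{p}(A,X)\big],
\]
with $\dot{p}(a,x) = \frac{d}{d\epsilon}\big|_0 p_\epsilon(a,x) = E[Y\, S_1(Y \mid A, X) \mid A = a, X = x] = E[(Y - p(a,x))\, S_1(Y \mid A, X) \mid A = a, X = x]$, the last equality by $E[S_1 \mid a, x] = 0$. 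Since $h(p(A,X))$ and $h'(p(A,X))$ are $(A,X)$-measurable, iterated expectations yield $E[h(p(A,X)) S_1] = 0$ and $E[h'(p(A,X))(Y - p(A,X)) S_2] = 0$, so the first term equals $E[h(p(A,X)) S]$ and the second equals $E[h'(p(A,X))(Y - p(A,X)) S]$.

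Adding the pieces, the pathwise derivative equals $E[\varphi(A,X,Y)\, S(A,X,Y)]$ for every submodel score $S$, where $\varphi(A,X,Y) \equiv h(p(A,X)) + h'(p(A,X))(Y - p(A,X))$. Because $h$ is continuously differentiable, $h$ and $h'$ are bounded on the range $[0,1]$ of $p$, so $\varphi \in L_2(F)$; and $E[Y - p(A,X) \mid A, X] = 0$ gives $E[\varphi] = \theta_h^B$, so $\varphi - \theta_h^B$ is a mean-zero element of $L_2(F)$. As the nonparametric tangent space is exactly the mean-zero square-integrable functions, $\varphi - \theta_h^B$ is the (unique) canonical gradient, i.e.\ the efficient influence function; in the paper's uncentered convention this is $\varphi$, the claimed expression.

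The content of the argument is entirely in these manipulations, so I do not expect a genuine obstacle; the only steps requiring care, rather than a new idea, are (i) justifying the interchange of differentiation and integration in the product-rule display and in differentiating $p_\epsilon$ under the integral sign, which is routine given that $Y$ is binary and $h, h'$ are bounded together with quadratic-mean-differentiability of the submodel, and (ii) the bookkeeping that splits $S$ into $S_1, S_2$ and kills the two cross terms by iterated expectations. An alternative would be to compute the Gateaux derivative of $\theta_h^B$ along the contamination path $F_\epsilon = (1 - \epsilon) F + \epsilon\, \delta_{(a, x, y)}$ and read $\varphi$ off directly, but tracking how $p_\epsilon$ responds to a point-mass perturbation of a continuous conditioning variable is somewhat more delicate, so I would favor the parametric-submodel route above.
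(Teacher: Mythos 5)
Your proof is correct and follows essentially the same route as the paper: a pathwise derivative along a regular parametric submodel, a product-rule split into the $(A,X)$-marginal part and the conditional part, and iterated expectations to kill the cross terms. The only difference is cosmetic — you handle the inner derivative via the clean identity $\dot p(a,x)=E[(Y-p(a,x))S_1\mid a,x]$, whereas the paper routes through the explicit binary-score formula $S(y\mid a,x)=\dot\mu(a,x)(y-\mu(a,x))/\{\mu(a,x)(1-\mu(a,x))\}$ and some algebra, arriving at the same expression.
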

\vspace{-8mm}
As before, we get the immediate corollary:
\begin{corollary}
\label{cor: eif bounds binary}
The efficient influence functions for $\psi_{\textmax,h_t}^{B}$ and $\psi_{\textmin,h_t}^{B}$ are, respectively,
\begin{align*}
\phi_{\textmax,h_t}^B &= \mu'(A,X) - s(A \mid X)\{Y - \mu(A,X)\} + \\ &\gamma \{h_t\left(P(Y = 1 \mid A,X)\right) + h_t'(P(Y = 1 \mid A,X))(Y - P(Y = 1 \mid A,X))\},
\\
\phi_{\textmin,h_t}^B &= \mu'(A,X) - s(A \mid X)\{Y - \mu(A,X)\} - \\ &\gamma \{h_t\left(P(Y = 1 \mid A,X)\right) + h_t'(P(Y = 1 \mid A,X))(Y - P(Y = 1 \mid A,X))\}.
\end{align*}
\end{corollary}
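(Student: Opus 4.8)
The plan is to obtain the efficient influence function of $\theta_h^B = E[h(p(A,X))]$, where $p(A,X) \equiv P(Y=1\mid A,X) = E[Y\mid A,X]$, by a direct pathwise-derivative computation: in the nonparametric model the tangent space is all of $L_2^0(P)$, so it suffices to write $\tfrac{d}{d\epsilon}\theta_h^B(P_\epsilon)\big|_{\epsilon=0} = E[\varphi \cdot S]$ for every regular parametric submodel score $S$ and a fixed mean-zero $\varphi$; that $\varphi$ is then the (centered) EIF, and the statement's display is its uncentered version.

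First I would fix a regular parametric submodel $\{P_\epsilon\}$ with score $S = S(X,A,Y)$ at $\epsilon = 0$ and decompose it as $S = S_{AX}(A,X) + S_{Y\mid AX}(Y\mid A,X)$, where $S_{AX} = E[S\mid A,X]$ and $E[S_{Y\mid AX}\mid A,X] = 0$. Differentiating $\theta_h^B(P_\epsilon) = \int h(p_\epsilon(a,x))\, dP_\epsilon(a,x)$ produces two pieces. The piece from perturbing the marginal law of $(A,X)$ gives $E[h(p(A,X)) S_{AX}(A,X)] = E[h(p(A,X)) S]$, using $E[S\mid A,X] = S_{AX}$. The piece from perturbing $p_\epsilon$ inside $h$ is, by the chain rule, $E[h'(p(A,X))\,\dot p(A,X)]$, where $\dot p(a,x) = \tfrac{d}{d\epsilon}E_\epsilon[Y\mid A=a,X=x]\big|_0$.

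The one genuinely substantive step is evaluating $\dot p$. Since $Y$ is binary, $E_\epsilon[Y\mid a,x] = \int y\, dP_\epsilon(y\mid a,x)$, so $\dot p(a,x) = E[Y\, S_{Y\mid AX}(Y\mid a,x)\mid A=a,X=x] = E[(Y-p(a,x))\,S(X,A,Y)\mid A=a,X=x]$, the last equality because $S_{Y\mid AX} = S - S_{AX}$ and $E[S_{AX}(a,x)Y\mid a,x] = S_{AX}(a,x)p(a,x)$. Substituting and using iterated expectations — the factor $h'(p(A,X))$ is $\sigma(A,X)$-measurable — turns the second piece into $E[h'(p(A,X))(Y-p(A,X))\,S]$. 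Adding the two pieces yields $\tfrac{d}{d\epsilon}\theta_h^B(P_\epsilon)\big|_0 = E[\{h(p(A,X)) + h'(p(A,X))(Y-p(A,X))\}\,S]$ for all $S$. Finally I would check that $\varphi \equiv h(p(A,X)) + h'(p(A,X))(Y-p(A,X)) - \theta_h^B$ has mean zero — immediate since $E[h'(p(A,X))(Y-p(A,X))] = 0$ by iterated expectations and $E[h(p(A,X))] = \theta_h^B$ — so $\varphi \in L_2^0(P)$ and is therefore the unique nonparametric EIF; dropping the constant gives the uncentered form in the statement. Corollary~\ref{cor: eif bounds binary} then follows immediately from linearity of the EIF map together with Newey's EIF for $E[-s(A\mid X)Y]$ \citep{Newey1993EfficiencyModels}, taking $h = \pm\gamma h_t$.

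The remaining points are purely regularity: justifying differentiation under the integral along the submodel and the chain-rule step, which is unproblematic because $h$ is continuously differentiable and $p(A,X)\in[0,1]$ (so $h$ and $h'$ are bounded on the relevant range, making every quantity that appears square-integrable); and confirming the tangent space is all of $L_2^0(P)$ so that no projection onto a proper subspace is needed. I do not anticipate a real obstacle — the main thing to get right is the bookkeeping in the decomposition $S = S_{AX} + S_{Y\mid AX}$ and the identity for $\dot p$.
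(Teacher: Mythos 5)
Your argument is correct and reaches the same conclusion, and its skeleton (a pathwise-derivative computation split into a marginal-$(A,X)$ perturbation term and a conditional-law perturbation term, followed by linearity plus Newey's EIF for $E[-s(A\mid X)Y]$) matches the paper's. The difference lies in how the substantive step -- the contribution of perturbing $P_\epsilon(Y=1\mid a,x)$ inside $h$ -- is executed. The paper writes down the explicit binary conditional score $S(y\mid a,x)=\dot\mu(a,x)(y-\mu(a,x))/\{\mu(a,x)(1-\mu(a,x))\}$, proves a helper lemma expressing $\tfrac{d}{d\epsilon}h(P_\epsilon(Y=1\mid a,x))$ in terms of that score, and then simplifies via binary-specific identities such as $p(1-p)/(y-p)=yp-(1-y)(1-p)$ and $p^2-(1-p)^2=2p-1$. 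You instead use the generic identity $\dot p(a,x)=E[(Y-p(a,x))S\mid A=a,X=x]$ for the derivative of a conditional mean along a submodel, which folds directly into $E[h'(p(A,X))(Y-p(A,X))S]$ by iterated expectations. Your route is cleaner, avoids the helper lemma and the binary-score algebra entirely, and is more general (it would give the EIF of $E[h(E[Y\mid A,X])]$ for any bounded outcome, with binariness only entering because $p=\mu$ there); the paper's route makes the binary structure explicit, which some readers may find more concrete. Both correctly conclude with the mean-zero check and the observation that the nonparametric tangent space is all of $L_2^0(P)$, so no projection is needed.
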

\vspace{-8mm}
Again, we propose estimators based on the efficient influence function, and present sample-split versions that estimate the nuisances $\mu, \mu', s$. They are as follows: (since $Y$ is binary, $\wh{\mu}(A_i,X_i)$ and $\wh{P}(Y = 1 \mid A_i,X_i)$ are equivalent):
\begin{equation}
\begin{aligned}
\wh{\psi}_{\textmax,h_t}^B &= \frac{1}{n} \sum_{i = 1}^n \wh{\mu}'(A_i,X_i) - \wh{s}(A_i \mid X_i)\{Y_i - \wh{\mu}(A_i,X_i)\} \\ &+ \gamma (h_t\left(\wh{P}(Y = 1 \mid A_i,X_i)\right) + h_t'(\wh{P}(Y = 1 \mid A_i,X_i))(Y_i - \wh{P}(Y = 1 \mid A_i,X_i))),\\
\wh{\psi}_{\textmin,h_t}^B  &= \frac{1}{n} \sum_{i = 1}^n \wh{\mu}'(A_i,X_i) - \wh{s}(A_i \mid X_i)\{Y_i - \wh{\mu}(A_i,X_i)\} \\ &- \gamma (h_t\left(\wh{P}(Y = 1 \mid A_i,X_i)\right) + h_t'(\wh{P}(Y = 1 \mid A_i,X_i))(Y_i - \wh{P}(Y = 1 \mid A_i,X_i))).
\end{aligned}
\end{equation}
Theorem \ref{thm: asymptotic normality binary} establishes asymptotic normality of the estimators.
\begin{theorem}
\label{thm: asymptotic normality binary}
Suppose nuisance estimates $\wh{\eta}$ are estimated from an independent sample of data. Also, assume that $(\wh{\mu}, \wh{\mu'}, \wh{s},  \mu', s)$ are bounded almost surely and that $\norm{\wh{\mu} - \mu} + \norm{\wh{\mu'} - \mu'} + \norm{\wh{s} - s} = o_p(n^{-1/4})$. Then  
\begin{align*}
&\sqrt{n}(\wh{\psi}_{\textmax,h_t}^B - \psi_{\textmax, h_t}^B) \stackrel{d}{\to} N(0, \text{Var}(\phi_{\textmax,h_t}^B)), \text{and} \\ 
&\sqrt{n}(\wh{\psi}_{\textmin, h_t}^B - \psi_{\textmin,h_t}^B) \stackrel{d}{\to} N(0, \text{Var}(\phi_{\textmin,h_t}^B)).\\ 
\end{align*}
\end{theorem}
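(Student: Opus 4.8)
The plan is to run the standard sample-splitting argument for one-step estimators built from the efficient influence function (see \cite{Chernozhukov2018b, Kennedy2022}), exploiting two structural facts. First, by linearity of efficient influence functions, $\psi_{\textmax,h_t}^B$ is the Newey functional $E[-s(A\mid X)Y]$ plus $\gamma\,E[h_t(P(Y=1\mid A,X))]$, so the first piece can be handled exactly as in the proof of Theorem \ref{thm: asymptotic normality continuous} and the only genuinely new work concerns the $h_t$-correction. Second, for fixed $t>0$ the function $h_t$ is smooth with uniformly bounded derivatives --- indeed $h_t'(p) = -\tanh(t(p-\tfrac{1}{2}))$, so $|h_t'|\le 1$ and $|h_t''|\le t$. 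Write $\phi_{\textmax,h_t}^B(\,\cdot\,;\eta)$ for the (uncentered) efficient influence function of Corollary \ref{cor: eif bounds binary}, so that $E[\phi_{\textmax,h_t}^B(\,\cdot\,;\eta)] = \psi_{\textmax,h_t}^B$ and $\wh\psi_{\textmax,h_t}^B = \mathbb{P}_n\,\phi_{\textmax,h_t}^B(\,\cdot\,;\wh\eta)$, with $\wh\eta = (\wh\mu,\wh\mu',\wh s)$ fit on an independent sample and $\mathbb{P}_n$ the empirical mean over the held-out sample. I would then decompose
\begin{align*}
\sqrt{n}\big(\wh\psi_{\textmax,h_t}^B - \psi_{\textmax,h_t}^B\big) &= \sqrt{n}(\mathbb{P}_n - P)\,\phi_{\textmax,h_t}^B(\,\cdot\,;\eta) \\
&\quad + \sqrt{n}(\mathbb{P}_n - P)\big[\phi_{\textmax,h_t}^B(\,\cdot\,;\wh\eta) - \phi_{\textmax,h_t}^B(\,\cdot\,;\eta)\big] \\
&\quad + \sqrt{n}\big[P\,\phi_{\textmax,h_t}^B(\,\cdot\,;\wh\eta) - \psi_{\textmax,h_t}^B\big],
\end{align*}
and treat the three summands $(\mathrm{I})$, $(\mathrm{II})$, $(\mathrm{III})$ separately.

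Term $(\mathrm{I})$ is a centered i.i.d.\ average, so the central limit theorem gives $(\mathrm{I}) \stackrel{d}{\to} N(0,\text{Var}(\phi_{\textmax,h_t}^B))$; the variance is finite because $Y$ is binary, $\mu'$ and $s$ are bounded almost surely, and $\mu = P(Y=1\mid A,X)$, $h_t$, and $h_t'$ are bounded. For term $(\mathrm{II})$, conditioning on the training sample fixes $\wh\eta$, so it is $\sqrt{n}$ times a centered average over the independent held-out sample, hence has conditional mean zero and conditional variance at most $\norm{\phi_{\textmax,h_t}^B(\,\cdot\,;\wh\eta) - \phi_{\textmax,h_t}^B(\,\cdot\,;\eta)}^2$. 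Using boundedness of the true and estimated nuisances, that $|Y|\le 1$, and that $h_t$ and $h_t'$ are Lipschitz with constants depending only on $t$, this squared $L_2$ distance is bounded by a constant (depending on $\gamma$ and $t$) times $\norm{\wh\mu - \mu}^2 + \norm{\wh\mu' - \mu'}^2 + \norm{\wh s - s}^2 = o_p(1)$; a conditional Chebyshev bound followed by the usual unconditioning argument then gives $(\mathrm{II}) = o_p(1)$.

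The crux is term $(\mathrm{III})$, the second-order (von Mises) remainder $P\phi_{\textmax,h_t}^B(\,\cdot\,;\wh\eta) - \psi_{\textmax,h_t}^B$, which must be shown to be $o_p(n^{-1/2})$. Splitting $\phi_{\textmax,h_t}^B$ into the Newey piece $\mu'(A,X) - s(A\mid X)\{Y - \mu(A,X)\}$ and the correction piece $\gamma\{h_t(\mu(A,X)) + h_t'(\mu(A,X))(Y - \mu(A,X))\}$, the contribution of the Newey piece, after taking the conditional expectation of $Y$ given $(A,X)$ and an integration by parts in the exposure argument (justified by the differentiability and vanishing-tail conditions of Assumption \ref{assumption: regularity}, so the boundary terms vanish since the nuisances are bounded and $f(a\mid x)\to 0$ as $|a|\to\infty$), collapses --- precisely as in the proof of Theorem \ref{thm: asymptotic normality continuous} and in \cite{Newey1993EfficiencyModels} --- to the mixed-bias term $E[(\wh s(A\mid X) - s(A\mid X))(\wh\mu(A,X) - \mu(A,X))]$, which is bounded by $\norm{\wh s - s}\,\norm{\wh\mu - \mu} = o_p(n^{-1/2})$ via Cauchy--Schwarz and the rate assumptions. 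The contribution of the correction piece has $\mu = P(Y=1\mid A,X)$ as its only nuisance, and after taking the conditional expectation of $Y$ it equals $\gamma\,E[h_t(\wh\mu) - h_t(\mu) + h_t'(\wh\mu)(\mu - \wh\mu)]$; a second-order Taylor expansion of $h_t$ with Lagrange remainder rewrites this as $-\tfrac{\gamma}{2}E[h_t''(\xi)(\wh\mu - \mu)^2]$ for an intermediate point $\xi$, which is bounded in absolute value by $\tfrac{\gamma t}{2}\norm{\wh\mu - \mu}^2 = o_p(n^{-1/2})$ since $|h_t''|\le t$ and $\norm{\wh\mu - \mu} = o_p(n^{-1/4})$. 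Hence $(\mathrm{III}) = o_p(1)$, and Slutsky's theorem yields $\sqrt{n}(\wh\psi_{\textmax,h_t}^B - \psi_{\textmax,h_t}^B) \stackrel{d}{\to} N(0,\text{Var}(\phi_{\textmax,h_t}^B))$; the minimum version is identical (the correction enters with the opposite sign, which changes nothing), and the cross-fitted estimator follows by running the argument fold-by-fold and averaging. The main obstacle is the Newey-piece remainder --- arranging that all first-order-in-error terms cancel through integration by parts so that only the product $\norm{\wh s - s}\,\norm{\wh\mu - \mu}$ survives, which additionally requires handling the derivative nuisance as in the continuous-outcome proof --- but this step is inherited verbatim from Theorem \ref{thm: asymptotic normality continuous}; the genuinely new content is controlling the $h_t$-correction's contributions to $(\mathrm{II})$ and $(\mathrm{III})$, which rests entirely on $h_t$ being smooth with derivatives bounded uniformly for the fixed smoothing level $t$ (these bounds degrade as $t\to\infty$, reflecting the approximation--estimation trade-off discussed earlier).
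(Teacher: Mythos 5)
Your proposal is correct and follows essentially the same route as the paper's proof: the same three-term decomposition, the same treatment of the empirical-process term via sample splitting, the same reduction of the Newey piece to the mixed bias $E[(\wh s - s)(\wh\mu-\mu)]$, and the same second-order Taylor expansion of $h_t$ to control the correction piece by $\norm{\wh\mu-\mu}^2$. Your added explicit computations ($h_t'(p) = -\tanh(t(p-\tfrac12))$, the Lagrange-remainder bound $\tfrac{\gamma t}{2}\norm{\wh\mu-\mu}^2$) merely make quantitative what the paper leaves as an $O(\cdot)$ statement.
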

\vspace{-12mm}
Similar to the continuous outcome case, we can construct asymptotically valid Wald-style confidence intervals for $\psi_{\max}^B$ and $\psi_{\min}^B$ (rather than the smoothed $\psi_{\textmax, h_t}^B$ and $\psi_{\textmax, h_t}^B$), if we account for the approximation error from the LSE function. The respective upper and lower bounds for the $1 - \alpha$ confidence intervals are 
\begin{equation}
\label{eq: binary ci}
\begin{aligned}
\wh{\psi}_{\textmax,h_t}^B + z_{1 - \alpha} \sqrt{\wh{\sigma}_{\textmax, h_t}^B / n} + \log(2) / t, \\
\wh{\psi}_{\textmin,h_t}^B - z_{1 - \alpha} \sqrt{\wh{\sigma}_{\textmin, h_t}^B / n} - \log(2) / t.
\end{aligned}
\end{equation}
\vspace{-5mm}
Again, plug-in variance estimates can be used:
\begin{equation*}
\scriptsize
\begin{aligned}
&\wh{\sigma}_{\textmax, h_t}^B \equiv \frac{1}{n} \sum_{i = 1}^n \left\{ \wh{\mu}'(A_i,X_i) - \wh{s}(A_i \mid X_i)\{Y_i - \wh{\mu}(A_i,X_i)\} + \gamma (h_t\left(\wh{\mu}(A_i,X_i)\right) + h_t'(\wh{\mu}(A_i,X_i))(Y_i - \wh{\mu}(A_i,X_i))) - \wh{\psi}_{\textmax,h_t}^B\right\}^2, \\ 
&\wh{\sigma}_{\textmin, h_t}^B \equiv \frac{1}{n} \sum_{i = 1}^n \left\{ \wh{\mu}'(A_i,X_i) - \wh{s}(A_i \mid X_i)\{Y_i - \wh{\mu}(A_i,X_i)\} - \gamma (h_t\left(\wh{\mu}(A_i,X_i)\right) + h_t'(\wh{\mu}(A_i,X_i))(Y_i - \wh{\mu}(A_i,X_i))) - \wh{\psi}_{\textmin,h_t}^B\right\}^2.
\end{aligned}
\end{equation*}
These estimates will be consistent under the assumptions of Theorem \ref{thm: asymptotic normality binary}, and so the confidence bounds from Equation \eqref{eq: binary ci} will be asymptotically valid.

\subsection{Simultaneous confidence bands}
The previous subsections introduced estimators and pointwise confidence intervals for a fixed value of $\gamma$. In this subsection, we briefly outline how to conduct simultaneous inference when we wish to conduct the sensitivity analysis over a bounded interval of values, i.e. $\gamma \in [\gamma_l, \gamma_u]$. Conveniently, the nature of the resulting estimands take the form $a \pm \gamma b$, and we can easily construct Wald confidence intervals for $a$ and $b$ under the same assumptions as in the previous subsections. Thus, by the union bound, a uniform $(1-\alpha)$\% confidence band can be straightforwardly constructed for $\gamma \in [\gamma_l, \gamma_u]$ by constructing $(1-\alpha/2)$\% confidence intervals for $a$ and $b$ and concatenating accordingly. In our setting, $a$ corresponds to $E[-s(A \mid X)Y]$ and $b$ corresponds to either $E[Y (\mathbbm{1}_{\{Y > M( A, X)\}} - \mathbbm{1}_{\{Y < M( A, X)\}})]$ (continuous $Y$) or $E[h_t(P(Y = 1 \mid A,X))]$ (binary $Y$). Wald-style $(1-\alpha/2)$\% confidence intervals for $a$ and $b$ can be constructed using the respective efficient influence functions. Alternatively, a multiplier bootstrap approach could be implemented (see \cite{Kennedy2019a} and \cite{Zhang2022a} for recent applications in causal inference). However, we do not pursue that direction as the proposed approach is much simpler.

\section{Simulations}
\label{section: sims}
We now evaluate the finite sample performance of the proposed methods through a simulation study. In the simulation, we assess the coverage of confidence intervals of the true ADE when there \emph{is} unmeasured confounding based on sensitivity analyses at different choices of $\gamma$.
\subsection{Simulation setup}
We conduct simulations corresponding to two different dose distributions, two outcome types (binary and continuous), and three different strengths of unmeasured confounders, yielding $ 2 \times 2 \times 3 = 12$ different settings. For all 12 settings, we draw the confounders $X \sim \text{Unif}[0,1]^d$, $d = 5$, and draw $U \mid X \sim \text{Bern}(\Phi(\sin(X_1+X_2)))$. We draw the dose from a conditional density that is Gaussian or Gamma. For the Gaussian case, $A \mid X, U \sim N(\theta^TX+\zeta U, 1)$. For the Gamma case (shape and rate parametrization), $A \mid X, U \sim \text{Gamma}(13, 8 + \theta^T X - \zeta U)$. $\zeta$ is set to $\log(2)$ for all simulations, and here $\theta$ are randomly drawn coefficients from $N(0,1)$, redrawn at each iteration. For the outcome model, we consider different settings for a binary outcome and for a continuous outcome. For the continuous outcome, we draw $Y \mid A, X, U \sim N(\eta A + \beta^T X + \delta U  + \eta_{AX}AX, 1)$. For the binary outcome, we use a probit model and draw $Y \mid A, X, U \sim \text{Bern}(\Phi(\eta A + \beta^T X + \delta U + \eta_{AX}AX))$. $\delta$ is varied in $\{2, 3, 4\}$. For both outcome models, the $\beta$ coefficients are randomly drawn at each iteration from $N(-1,1)$. The interaction coefficients $\beta_{AX}$ are randomly drawn from $N(0,1/4)$, redrawn at each iteration. In the simulation, a higher $U$ leads to a higher chance of both a higher dose and outcome. The derivatives of the conditional expectations $E[Y \mid A, X, U]$ are available in closed form, and so the ``ground truth'' average derivative effects are approximated by drawing $10^7$ Monte-Carlo samples from the joint distribution of $(A, X, U)$ and computing the sample average of the derivative of $E[Y \mid A, X, U]$. 
\subsection{Estimators}
The nuisance estimates required to compute the estimator include the score function $s(a \mid x)$, the conditional mean $\mu(a,x)$ and its derivative $\mu'(a,x)$, and the conditional median $M(a,x)$ for the continuous outcome case. For the binary outcome case, we set $t = 50$ for computing the LSE function approximation. We use the \texttt{R} packages \texttt{drape} and \texttt{xgboost} for nuisance estimation. Specifically, we estimate scores $s(a \mid x)$ and conditional mean derivatives $\mu'(a,x)$ using adaptations of the methods from the \texttt{drape} package \citep{Klyne2023AverageLearning}. The methods proposed by \cite{Klyne2023AverageLearning} can re-smooth any first-stage regression $\wh{\mu}(a,x)$ estimator to produce a differentiable version to obtain an estimate $\wh{\mu}'(a,x)$, and model the conditional distribution $f(a \mid x)$ through a location-scale model to estimate $s(a \mid x)$. Hyperparameters for these methods were chosen in the same manner as the simulations in \cite{Klyne2023AverageLearning}. To fit conditional means $\mu(a,x)$ and medians $M(a,x)$, we use gradient boosted trees as implemented in the \texttt{xgboost} package with default hyperparameters and the appropriate loss function -- squared error for estimating the conditional mean of a continuous variable, logistic loss for estimating the conditional mean of a binary variable, and absolute error for estimating the conditional median of a continuous variable. To make use of the full data sample, we implement 5-fold cross-fitting.

\subsection{Simulation results}
Table \ref{tab: simulation results} collects coverage results for pointwise 95\% confidence intervals of the sensitivity analysis procedures at varying levels of $\gamma$. One can verify that the $\gamma$ at which the sensitivity analysis model \eqref{eqn:sens_model} holds (and thus the procedure will be valid) is between $0.5\log(2)$ and $\log(2)$. Therefore, it is not surprising to see in Table \ref{tab: simulation results} that the 95\% sensitivity analysis confidence intervals can severely undercover when $\gamma$ is taken to be $0$ (no unmeasured confounding) or $0.25 \log(2)$, as both of these are less than the lower bound $0.5\log(2)$. This also gives some reassurance that although we have not established sharpness of the analytic bounds, the bounds can still be informative. In addition, one may notice that as the strength of the unmeasured confounders impact on the outcome, measured through $\delta$, increases, the sensitivity analysis intervals cover less. This is expected, as the sensitivity model we consider only restricts $U$'s impact on the treatment. Thus, the sensitivity analysis must protect against arbitrary dependence between $U$ and the (potential) outcomes, i.e. arbitrarily large values of $\delta$. Consequently, it is reasonable to expect that if $\delta$ were to be increased further, the coverage rate of the sensitivity analysis bounds for $\gamma \geq 0.5 \log(2)$ would move closer towards but not necessarily reach the nominal level.
\begin{table}[ht]
\centering
\begin{tabular}{|llc|ccccc|}
  \hline \multicolumn{3}{|c|}{Simulation parameters} & \multicolumn{5}{|c|}{$\gamma$ value of the sensitivity analysis} \\
dose & outcome & $\delta$ &  $0$ & $0.25 \log(2)$ & $0.5 \log(2)$ & $0.75 \log(2)$ & $\log(2)$ \\ 
  \hline
Gaussian & binary & 2.00 & 0.68 & 0.95 & 0.99 & 1.00 & 1.00 \\ 
Gaussian & binary & 3.00 & 0.51 & 0.89 & 0.99 & 1.00 & 1.00 \\ 
Gaussian & binary & 4.00 & 0.26 & 0.76 & 0.99 & 1.00 & 1.00 \\ 
Gamma & binary & 2.00 & 0.88 & 0.95 & 0.98 & 0.99 & 0.99 \\ 
Gamma & binary & 3.00 & 0.80 & 0.93 & 0.98 & 0.99 & 1.00 \\ 
Gamma & binary & 4.00 & 0.70 & 0.88 & 0.97 & 0.99 & 1.00 \\ 
Gaussian & continuous & 2.00 & 0.24 & 0.95 & 1.00 & 1.00 & 1.00 \\ 
Gaussian & continuous & 3.00 & 0.07 & 0.86 & 1.00 & 1.00 & 1.00 \\ 
Gaussian & continuous & 4.00 & 0.02 & 0.73 & 1.00 & 1.00 & 1.00 \\ 
Gamma & continuous & 2.00 & 0.75 & 0.97 & 1.00 & 1.00 & 1.00 \\ 
Gamma & continuous & 3.00 & 0.72 & 0.98 & 1.00 & 1.00 & 1.00 \\ 
Gamma & continuous & 4.00 & 0.69 & 0.96 & 1.00 & 1.00 & 1.00 \\ 
\hline
\end{tabular}
\caption{95\% confidence interval coverage results for the sensitivity analysis under different sets of simulation parameters and different choices of $\gamma$. The proportions are averaged over 500 iterations.}
\label{tab: simulation results}
\end{table}

\section{Applications}
\label{section: application}
\subsection{The effect of parental income on child's education}
We illustrate the methodology for binary outcomes using an empirical example studying the extent to which household income affects a child's educational attainment \citep{Lundberg2023}. The data we use comes from the National Longitudinal Survey of Youth 1997 cohort (NLSY97).
The NLSY97 is a dataset consisting of a probability sample of U.S. youths ages 12–17, starting in 1997, who were followed up through 2019. We largely follow \cite{Lundberg2023} in pre-processing the data. The treatment variable of interest is reported total gross household income in 1996, when the respondents were age 12-17. \cite{Lundberg2023} logged and adjusted these measures to 2022 dollars using the
Consumer Price Index. Those without income measurements are dropped, as are households coded as the maximum and minimum income values, as these represent upper and lower cutoffs, not actual incomes. The outcome of interest is a report of enrollment in any college up to age 21, which is binary. Those that did not complete a survey at ages 19–21 are omitted. Following \cite{Lundberg2023}, four measured confounding variables are adjusted for: race, gender, parents’ education, and wealth. The racial categories from 1997 were Hispanic, Non-Hispanic Black, and Non-Hispanic white or other. Parents’ education is categorical, with the 3 categories no parent completed college, one parent completed college, or two parents completed college. Wealth is the log of household net worth reported by the parent in 1997, also adjusted to 2022 dollars. There are 5219 individuals in the final, processed dataset. Unfortunately, there may be confounders beyond race, gender, parents' education, and wealth that are not measured that affect both household income and propensity to pursue higher education. These might include things like innate ability or geographic location, both of which could be strongly related to household income and propensity to attend college. Thus, we implement our sensitivity analysis to assess the impact of hypothetical unmeasured confounders on the statistical conclusions. We use the same nuisance estimators as in the simulations for a binary outcome. The results of the sensitivity analysis are reported in Figure \ref{fig: income education sensitivity plot}. Each black dot represents a point estimate of an upper or lower bound at some value of $\gamma$. The shaded regions depict 95\% confidence intervals. Assuming no unmeasured confounding, the point estimate for the ADE is $0.108$, and the 95\% confidence interval $[0.075, 0.141]$. Recall that the unit of the outcome is a percentage, and the treatment is the log of household income. In words, this means that on average, an increase of income of $\delta$ on the log scale might be expected to increase the propensity of attending any college by $\delta \times 10$ percentage points, assuming no unmeasured confounding. As $\gamma$ increases, lower point estimates and confidence bounds decrease. The point estimate ultimately crosses 0 at $\gamma = 0.323$, the 95\% pointwise confidence interval crosses 0 at $\gamma = 0.222$, and the 95\% uniform confidence interval crosses 0 at $\gamma = 0.197$.
\begin{figure}
    \centering
    \includegraphics[width=0.6\linewidth]{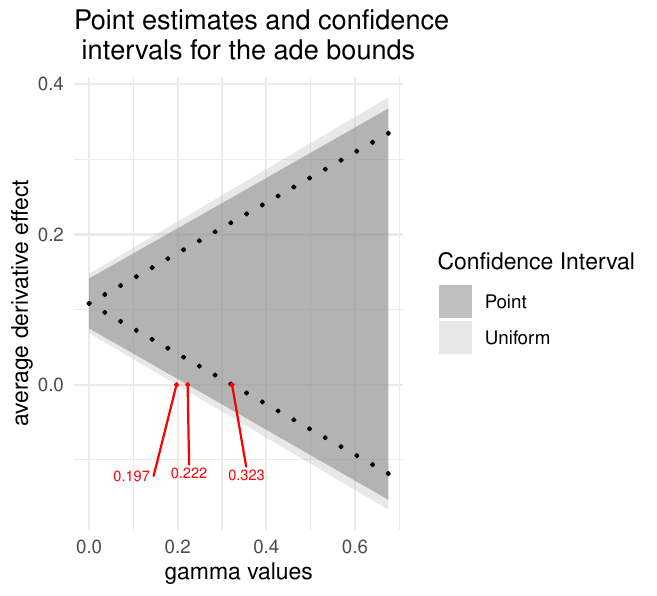}
    \caption{A plot displaying point estimates and confidence intervals for the ADE of income on probability of enrolling in college. The black dots represent point estimates of the upper and lower estimates. The darker shade represents 95\% pointwise confidence intervals at each $\gamma$ value, and the lighter shade represents 95\% simultaneous confidence intervals. The lower simultaneous confidence interval, pointwise confidence interval, and point estimate cross 0 at $\gamma = 0.197, 0.222, 0.323$ respectively.}
    \label{fig: income education sensitivity plot}
\end{figure}

\subsection{The effect of price on petrol consumption}
We now illustrate the methodology for continuous outcomes using an empirical example studying the extent to which petrol prices affect the demand for petrol, which was also studied in \cite{Chernozhukov2022a}. The data come from the Canadian National Private Vehicle Use Survey. We preprocessed the data in an identical fashion to \cite{Chernozhukov2022a}, leaving $n = 5001$ households, each of which has an outcome -- log of the petrol consumption, covariates -- log age, log income, log distance, and other time, geographical, household indicators, and treatment -- log of petrol price per liter. In this example, the ADE measures the average price elasticity of petrol demand. 

Instead of constructing pointwise confidence intervals, we concatenated the point estimates and standard errors for $E[-s(A \mid X)Y]$ as estimated in \cite{Chernozhukov2022a} with estimates and standard errors for the correction term from Equation \eqref{eq: optimal values - continuous} to produce simultaneous confidence bands. This exercise demonstrates the ease in which the sensitivity analysis can be conducted after (and completely separate from) a primary analysis assuming no unmeasured confounding has been completed. For estimation of the correction term, the only nuisance function is the conditional median. As in the simulation, we used the \texttt{xgboost} package with absolute error loss to fit the conditional median $M(a, x)$, in conjunction with 5-fold cross-fitting. The results are displayed in Figure \ref{fig: three_side_by_side} when the generalized Dantzig selector (GDS) and Lasso estimators are used for estimating $E[-s(A \mid X)Y]$ as described in \cite{Chernozhukov2022a}. Each black dot represents a point estimate of an upper or lower bound at some value of $\gamma$. The shaded region depicts the 95\% simultaneous confidence intervals. Assuming no unmeasured confounding, the point estimates for the ADE are $-0.28$ and $-0.16$ for GDS and the Lasso, respectively. As $\gamma$ increases, upper point estimates and confidence bounds increase. The point estimates cross 0 at $\gamma = 0.924$ and $\gamma = 0.528$ and the 95\% simultaneous confidence bounds cross 0 at $\gamma = 0.524$ and $\gamma = 0.2$ for GDS and Lasso, respectively. 

\begin{figure}[h]
    \centering
    \begin{subfigure}{0.48\textwidth}
        \includegraphics[width=\linewidth]{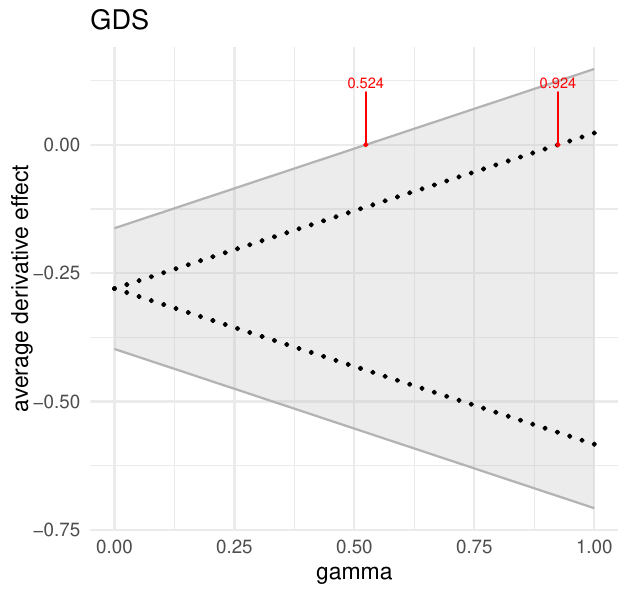}
    \end{subfigure}
    \begin{subfigure}{0.48\textwidth}
        \includegraphics[width=\linewidth]{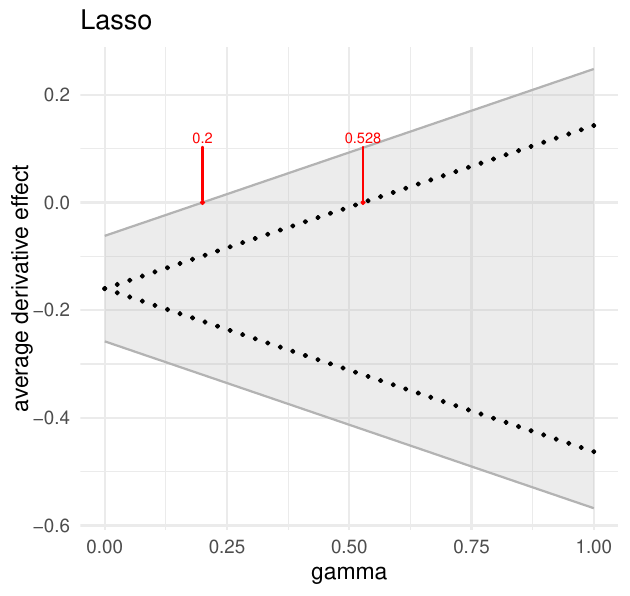}
    \end{subfigure}
   
    \caption{Point estimates (dotted lines) and 95\% simultaneous confidence bands (shaded region) when using the GDS and Lasso estimators of \cite{Chernozhukov2022a} for $E[-s(A \mid X)Y]$.}
    \label{fig: three_side_by_side}
\end{figure}
\section{Discussion}
\label{section: discussion}
In this paper, we have proposed a new sensitivity model for the ADE estimand, along with valid closed-form bounds, and estimators and confidence intervals for said bounds. The form of the bounds differ for continuous and binary outcomes, and are particularly convenient, allowing easy construction of uniform confidence intervals. The extent to which the bounds introduced in this paper are sharp, in the sense of \cite{Dorn2023}, is unclear and is a promising direction for future research. To the best of our knowledge, sharp bounds for causal estimands under sensitivity models resembling \eqref{eqn:sens_model} do not exist beyond the binary treatment case. Another promising direction for further inquiry might be a calibration procedure, in the vein of \cite{Hsu2013} and \cite{McClean2024a}. Calibrating the sensitivity analysis to observed confounders, for example, could potentially help researchers gauge whether a certain magnitude of $\gamma$ is plausible, though such a practice has limitations. Finally, it may be of interest to study ordinal rather than continuous exposures. Such exposures may arise, for example, when doses of drugs are prescribed at a finite number of ordered levels. 

\section*{Acknowledgements}
We thank Abhinandan Dalal, Zhihan Huang, Ziang Niu, Zhimei Ren, Dylan Small, Eric Tchetgen Tchetgen, and participants at ACIC 2025 for helpful discussions and comments.

\bibliographystyle{plainnat}

\bibliography{bibliography}

\newpage
\appendix

\section{Proof of results in Section \ref{section: prelim}}
\label{sec: estimand appendix}
\begin{proof}[Proof of Lemma \ref{prop: binary estimand}]
We first show the second and third equalities. Consider $(a,x,u)$ such that $f(a,x,u) > 0$. By Assumptions \ref{assumption: latent ignorability} and \ref{assumption: SUTVA},
\begin{align*}
E[Y(a) \mid x, u] &= E[Y(a) \mid a, x, u] \\ &= E[Y \mid a, x, u].
\end{align*}
The second equality then follows by taking derivatives on both sides and then integrating over the support of $(A, X, U)$. The third equality follows by an integration by parts argument that is permitted by Assumption \ref{assumption: regularity} \citep{Powell1989}. Now we turn to the first equality. Observe that
\begin{equation*}
\begin{aligned}
\lim_{\delta \to 0 }\delta^{-1} E[Y(A+\delta) - Y(A)] &= \lim_{\delta \to 0 }\delta^{-1} \int E[Y(a+\delta) - Y(a) \mid a] f(a) da \\ &= \lim_{\delta \to 0 }\delta^{-1} \int E[Y(a+\delta) - Y(a) \mid x, u, a] f(x, u \mid a) dudx f(a) da \\ &= \lim_{\delta \to 0 }\delta^{-1} \int E[Y(a+\delta) - Y(a) \mid x, u] f(a, x, u) da dx du \\ &=  \int \lim_{\delta \to 0 }\delta^{-1} E[Y(a+\delta) - Y(a) \mid x, u] f(a, x, u) da dx du \\ &=  E[\partial_a E[Y(A)\mid X, U]].
\end{aligned}
\end{equation*}
The first equality is by definition, the second is by iterated expectation, and the third by Assumption \ref{assumption: latent ignorability}. The fourth equality follows from Assumption \ref{assumption: regularity} (continuity and boundedness of $\partial_a E[Y(a)\mid x, u]$) and the dominated convergence theorem.
\end{proof}
\cite{Rothenhausler2019} introduce a causal estimand for continuous outcomes called the incremental effect. They introduce the following assumption on the potential outcomes.
\begin{assumption}[Regularity - potential outcomes]
\label{assumption: regularity - potential outcomes}
The potential outcomes $Y(a)$ are bounded and the derivative $Y'(a) := \partial_a Y(a) = \lim_{\delta \to 0}\delta^{-1} [Y(a+\delta) - Y(a)]$ is continuous and bounded.
\end{assumption}
The estimand of interest is the incremental effect, $E[Y'(A)]$, where the average is taken over potential outcomes, treatments, and confounders. Combining the arguments in Proposition 1 of \cite{Rothenhausler2019} and the derivation of Lemma \ref{prop: binary estimand}, it is straightforward to check that $\theta$ as defined in \eqref{eq: ADE definition} is equal to $E[Y'(A)]$. Thus, when the outcome is continuous and one is willing to invoke Assumption \ref{assumption: regularity - potential outcomes}, our methods are directly applicable to $E[Y'(A)]$.

\section{Proof of results in Section \ref{section: sensitivity model}}

\begin{proof}[Proof of Lemma \ref{lemma: model implication on latent score}]
Consider model \eqref{eqn:sens_model} first.
First, taking logs in \eqref{eqn:sens_model}, we get 
\begin{equation*}
-\gamma |a - a'| \leq \log(f(a \mid x)) - 
 \log(f(a' \mid x)) - 
 \{ \log(f(a \mid x,u )) - \log(f(a' \mid x, u))\} \leq \gamma |a - a'|.
\end{equation*}
Next, recall that $s(a \mid x,u) = {\partial_a} \log(f(a\mid x,u)) = \lim_{h \to 0} \frac{\log(f(a + h\mid x,u)) - \log(f(a\mid x,u))}{h}$. By the above equation, plugging in $a$ for $a$ and $a+h$ for $a'$, adding the $\log(f(a + h\mid x)) - \log(f(a\mid x))$ term everywhere in the inequality and dividing by $h$, we get 
\begin{equation*}
\begin{aligned}
    \frac{\log(f(a + h\mid x)) - \log(f(a\mid x))-\gamma h}{h} &\leq \frac{\log(f(a + h\mid x,u)) - \log(f(a\mid x,u))}{h} \\ &\leq \frac{\log(f(a + h\mid x)) - \log(f(a\mid x))+\gamma h}{h}, 
\end{aligned}
\end{equation*}
which immediately implies the result after taking the limit $h \to 0$ everywhere. We can also show the implication under the Rosenbaum-style sensitivity model introduced later in \eqref{eqn:sens_model 1}. Taking logs, we get 
\begin{equation*}
-\gamma |a - a'| \leq \log(f(a \mid x, u')) - 
 \log(f(a' \mid x, u')) - 
 \{ \log(f(a \mid x,u )) - \log(f(a' \mid x, u))\} \leq \gamma |a - a'|.
\end{equation*}
Using an analogous argument, we can deduce that
\begin{equation*}
s(a \mid x, u') -\gamma \leq s(a \mid x, u) \leq s(a \mid x, u') + \gamma \ \forall a, x, u, u'.
\end{equation*}
The result follows from an application of the upcoming Lemma \ref{lemma: latent score integrates to marginal score} combined with the following fact: For a random variable with the property that any two points in its support lie within $\gamma$ of each other, it must be the case that any point of its support must lie within $\gamma$ of the mean of the random variable. 
\end{proof}

\begin{proof}[Proof of Lemma \ref{lemma: latent score integrates to marginal score}]
A similar argument appears in Proposition 2 of \cite{Rothenhausler2019}, but we provide one for completeness. By definition, we have
\begin{equation*}
\begin{aligned}
\int s(a \mid x, u) f(u \mid a, x) du &= \int f'(a \mid x, u)/f(a \mid x,u) f(u \mid a, x) du \\ &= \int f'(a \mid x, u)/ f(a \mid  x) f(u \mid x) du \\ &= 1/f(a \mid  x)\int \lim_{h \to 0}\frac{f(a+h \mid x, u)-f(a \mid x, u)}{h}f(u \mid x) du \\ &= 1/f(a \mid  x)  \lim_{h \to 0}1/h\int \{f(a+h \mid x, u)-f(a \mid x, u)\}f(u \mid x) du \\ &= 1/f(a \mid  x)  \lim_{h \to 0}1/h\int f(a+h, u \mid x)-f(a, u \mid x)du \\ &= 1/f(a \mid  x)  \lim_{h \to 0}1/h \{f(a+h \mid x)-f(a,\mid x)\} \\ &= f'(a\mid x)/f(a \mid x) = s(a \mid x).
\end{aligned}
\end{equation*}
The second equality is by Bayes. The fourth equality holds by invoking the dominated convergence theorem (which is possible by the boundedness and continuity assumption), which allows for interchange of limit and integration. The remaining equalities are algebraic or by definition. 
\end{proof}

\section{Proof of results in Section \ref{section: optimization}}
\begin{proof}[Proof of Proposition \ref{prop: solution continuous}]
We first characterize the Lagrangian of the minimization program. It is as follows:
\begin{equation*}
\begin{aligned}
    \mathcal{L} &= E_{Y,U \mid A, X}\{-s(A \mid X, U)Y + \lambda_1 (s(A \mid X)- s(A \mid X, U) - \gamma) \\ &+ \lambda_2 (s(A \mid X, U)- s(A \mid X) - \gamma)+ \lambda_3(s(A \mid X)-E_{Y,U \mid A, X}[s(A \mid X,U)])\}. 
\end{aligned}
\end{equation*}
Taking the derivative of the Lagrangian with respect to $s(A\mid X, U)$, we get that
\begin{equation*}
    -Y - \lambda_1 +\lambda_2 - \lambda_3 = 0.
\end{equation*}
By complementary slackness, we know
\begin{equation*}
    \lambda_1 (s(A \mid X)- s(A \mid X, U) - \gamma) = 0 \text{ and } \lambda_2 (s(A \mid X, U)- s(A \mid X) - \gamma) = 0.
\end{equation*}
We also know that at the optimum, $\lambda_1, \lambda_2 \geq 0$. If $-\lambda_3 - Y < 0$, then it must be the case that $\lambda_1 < \lambda_2$, in which case $\lambda_1 = 0$ and $s(A \mid X, U)= s(A \mid X) + \gamma$. Similarly, if $-\lambda_3 - Y >0$, then it must be the case that $\lambda_1 > \lambda_2$, in which case $\lambda_2 = 0$ and $s(A \mid X, U)= s(A \mid X) - \gamma$. In the primal problem, we also have the constraint
\begin{equation*}
    E_{Y,U \mid A,X}[s(A \mid X,U)] = s(A \mid X).
\end{equation*}
Let $\alpha^* = P(Y < -\lambda_3 \mid A, X)$. It then follows that \begin{equation*}
    E_{Y,U \mid A,X}[s(A \mid X,U)] = \alpha^*(s(A \mid X) - \gamma) + (1-\alpha^*)(s(A \mid X) + \gamma).
\end{equation*}
Solving for $\alpha^*$, we get that $\alpha^* = 1/2$. Thus, $-\lambda_3$ is simply the median of $Y \mid A, X$. This means that the (minimization) optimization in (\ref{eqn:sens_model}) is solved by 
\begin{equation}
    s^{*}_{\text{min}}(A \mid X,U) = \begin{cases}
        s^*(A \mid X) - \gamma & \text{if } Y < \text{median}(Y \mid A,X)\\
        s^*(A \mid X) + \gamma & \text{if } Y > \text{median}(Y \mid A,X)
    \end{cases}.
\end{equation}
It is the opposite for the maximization, i.e.
\begin{equation}
    s^{*}_{\text{max}}(A \mid X,U) = \begin{cases}
        s^*(A \mid X) + \gamma & \text{if } Y < \text{median}(Y \mid A,X)\\
        s^*(A \mid X) - \gamma & \text{if } Y > \text{median}(Y \mid A,X)
    \end{cases}.
\end{equation}
This result can be derived by replacing $-Y$ in the Lagrangian with $Y$. Next, the optimal (maximum) value achieved by the solution is the following:
\begin{equation}
    \psi_{\max} = \psi^+ +\psi^-,
\end{equation}
where $\psi^+ = E[-(s(A \mid X) - \gamma) \mu^+(A,X)]$ and $\psi^- = E[-(s(A \mid X) + \gamma) \mu^-(X)]$, where $\mu^+(A,X) = E[Y \mathbbm{1}_{\{Y > M( A, X)\}}\mid A, X]$ and $\mu^-(A,X) = E[Y \mathbbm{1}_{\{Y < M( A, X)\}} \mid A, X]$. A simple application of iterated expectation yields $\psi_{\max} = E[-s(A \mid X)Y] + \gamma E[Y (\mathbbm{1}_{\{Y > M( A, X)\}} - \mathbbm{1}_{\{Y < M( A, X)\}})]$. For the minimum, it is easy to see that $\psi_{\min} = E[-s(A \mid X)Y] + \gamma E[Y (\mathbbm{1}_{\{Y > M( A, X)\}} - \mathbbm{1}_{\{Y < M( A, X)\}})]$, as desired.
\end{proof}

\begin{proof}[Proof of Proposition \ref{prop: solution binary}]
Since $Y$ is binary, there must exist an optimal solution that sets $ s^*(A \mid X, U)$ to distinct values according to whether $Y = 1$ or $Y = 0$. Since the optimization is conditional on $A, X$, we may rewrite the optimization as 
\begin{equation}
\label{eq: binary optimization rewritten}
\begin{aligned}
& \underset{s^1(a\mid x, u), s^0(a \mid x,u)}{\text{maximize/minimize}}
& & E[-s^1(A \mid X, U) \mid A = a, X = x] \\
& \text{subject to} & & s^1(a \mid x, u) \in [s(a\mid x) - \gamma, s(a \mid x) + \gamma] \ \forall a, x.\\ & \text{and} & & s^0(a \mid x, u) \in [s(a\mid x) - \gamma, s(a \mid x) + \gamma] \ \forall a, x.\\
& \text{and} & & s^1(A \mid X, U) P(Y = 1 \mid A = a, X = x) + \\ & & & s^0(A \mid X, U) (1 - P(Y = 1 \mid A = a, X = x)) = s(a \mid x).
\end{aligned}
\end{equation}

The objective function can be written as $-s^1(A \mid X, U) P(Y = 1 \mid A = a, X = x)$. To maximize, we simply take $s^1(A \mid X, U)$ to be the smallest possible it can be while still satisfying all three constraints in \eqref{eq: binary optimization rewritten}. It is clear that if $P(Y = 1 \mid A, X) \leq 1/2$, we can take $s^1(A \mid X, U) = s(a \mid x) - \gamma$, and  $s^0(A \mid X, U) =  s^*(A \mid X) + \gamma \times P(Y = 1 \mid A, X)/(1-P(Y = 1 \mid A, X)) \leq s(a\mid x) + \gamma$ since $P(Y = 1 \mid A, X)/(1-P(Y = 1 \mid A, X)) \leq 1$. It is easy to check that the third constraint holds. Conversely, if $P(Y = 1 \mid A, X) \geq 1/2$, we can only take $s^1(A \mid X, U) = s(a \mid x) - \gamma \times (1-P(Y = 1 \mid A, X))/ P(Y = 1 \mid A, X) \geq s(a \mid x) - \gamma$ as making it any smaller will cause violation of one of the second or third constraints in \eqref{eq: binary optimization rewritten}. To allow $s^1(A, \mid X, U)$ to be that small, we must take $s^0(A \mid X, U) =  s^*(A \mid X) + \gamma $ to be the maximum. It is easy to check that the third constraint holds under this choice. It follows that the maximum value takes the form 
\begin{align*}
\psi_{\max}^B &=  E[-(s(A \mid X) - \gamma) \mathbbm{1}(P(Y = 1\mid A,X) \leq 1/2)P(Y = 1 \mid A, X)] \\ &+ E[-(s(A \mid X) - \gamma \times \frac{P(Y = 0 \mid A, X)}{P(Y = 1 \mid A, X)} \mathbbm{1}(P(Y = 1\mid A,X) > 1/2)P(Y = 1 \mid A, X)].
\end{align*} 
We can simplify this to 
\begin{align*}
\psi_{\max}^B &= E[-s(A\mid X)P(Y = 1 \mid A,X)]+ \gamma E[\mathbbm{1}(P(Y = 1\mid A,X) \leq 1/2)P(Y = 1 \mid A, X)\\ &+\mathbbm{1}(P(Y = 1\mid A,X) > 1/2)P(Y = 0 \mid A, X)] \\&= E[-s(A\mid X)Y] + \gamma E[1/2 - |P(Y = 1 \mid A,X) - 1/2|].
\end{align*}
 The derivation for the minimum follows analogously.
\end{proof}
\section{Deriving the efficient influence functions}
\subsection{Continuous outcome}
We derive the efficient influence function of the minimal value of the ADE under the sensitivity model at a fixed $\gamma > 0$. In some parts, the steps resemble an argument from \cite{Zhang2022a}. We consider a parametric submodel indexed by $\epsilon$ that passes through the truth at $\epsilon = 0$. We define score functions $S$ with respect to the parametric submodel so that $S(y,a,x) = S(y \mid a,x) + S(a,x)$, where $S(y \mid a,x) = \frac{\partial }{\partial \epsilon}\log(f_{\epsilon}(y \mid a,x))\rvert_{\epsilon = 0}$ and $S(a,x) = \frac{\partial }{\partial \epsilon}\log(f_{\epsilon}(a,x))\rvert_{\epsilon = 0}$. First, we prove a useful lemma.
\begin{lemma}
\begin{align*}
&\frac{d}{d\epsilon} E_\epsilon[Y \mathbbm{1}_{\{Y 
\leq M_\epsilon( A, X)\}}\mid A, X] \rvert_{\epsilon = 0} =  E[(Y - M(A,X))\mathbbm{1}_{\{Y \leq M(A,X)\}}S(Y \mid A, X) \mid A, X], \\ & \frac{d}{d\epsilon} E_\epsilon[Y \mathbbm{1}_{\{Y 
>M_\epsilon( A, X)\}}\mid A, X] \rvert_{\epsilon = 0} =  E[(Y - M(A,X))\mathbbm{1}_{\{Y > M(A,X)\}}S(Y \mid A, X) \mid A, X].
\end{align*}
\end{lemma}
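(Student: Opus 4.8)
The plan is to write each conditional truncated expectation as an explicit integral against the conditional density $f_\epsilon(y \mid a, x)$ and differentiate it using the Leibniz rule, the subtlety being that the truncation threshold $M_\epsilon(a,x)$ itself varies with $\epsilon$. Fix $(a,x)$ and write $E_\epsilon[Y \mathbbm{1}_{\{Y \le M_\epsilon(A,X)\}}\mid A = a, X = x] = \int_{-\infty}^{M_\epsilon(a,x)} y\, f_\epsilon(y \mid a, x)\, dy$. Differentiating at $\epsilon = 0$ produces two terms. The ``integrand'' term is $\int_{-\infty}^{M(a,x)} y\, f(y\mid a,x)\, S(y \mid a,x)\, dy = E[Y \mathbbm{1}_{\{Y \le M(A,X)\}} S(Y \mid A,X) \mid A = a, X = x]$, using $\partial_\epsilon f_\epsilon(y\mid a,x)\rvert_{\epsilon=0} = f(y\mid a,x)\, S(y\mid a,x)$ and dominated convergence to pass the derivative inside. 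The ``boundary'' term is $M(a,x)\, f(M(a,x) \mid a,x)\, \dot M(a,x)$, where $\dot M(a,x) \equiv \partial_\epsilon M_\epsilon(a,x)\rvert_{\epsilon=0}$.

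The next step is to compute $\dot M(a,x)$ by implicitly differentiating the defining relation for the conditional median, $F_\epsilon(M_\epsilon(a,x) \mid a,x) = 1/2$, where $F_\epsilon(\cdot \mid a,x)$ is the conditional CDF. This yields $f(M(a,x)\mid a,x)\, \dot M(a,x) + \dot F(M(a,x) \mid a,x) = 0$, where $\dot F(m \mid a,x) \equiv \partial_\epsilon F_\epsilon(m\mid a,x)\rvert_{\epsilon=0} = \int_{-\infty}^m f(y\mid a,x)\, S(y\mid a,x)\, dy = E[\mathbbm{1}_{\{Y \le m\}} S(Y\mid A,X) \mid A=a, X=x]$. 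Hence $f(M(a,x)\mid a,x)\,\dot M(a,x) = -E[\mathbbm{1}_{\{Y \le M(A,X)\}} S(Y\mid A,X) \mid A=a, X=x]$, so the boundary term equals $-M(a,x)\, E[\mathbbm{1}_{\{Y \le M(A,X)\}} S(Y\mid A,X) \mid A=a, X=x]$. Adding the integrand and boundary terms gives $E[(Y - M(A,X))\mathbbm{1}_{\{Y \le M(A,X)\}} S(Y\mid A,X) \mid A=a, X=x]$, which is the first identity.

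For the second identity I would repeat the computation with $\int_{M_\epsilon(a,x)}^{\infty} y\, f_\epsilon(y\mid a,x)\, dy$; the boundary term flips sign because $M_\epsilon$ now appears as the lower limit, and combining with the mean-zero property of the parametric-submodel score, $E[S(Y \mid A,X) \mid A,X] = 0$, delivers the claimed form. Alternatively, once the first identity is in hand, the second follows immediately by subtracting it from $\partial_\epsilon E_\epsilon[Y \mid A,X]\rvert_{\epsilon = 0} = E[Y S(Y\mid A,X) \mid A,X]$ and again invoking $E[S(Y\mid A,X)\mid A,X] = 0$ to eliminate the extra $M(A,X)\,E[\cdots]$ contribution.

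The main technical obstacle is the rigorous justification of differentiating under the integral sign and of the Leibniz boundary term: one needs $f(\cdot\mid a,x)$ continuous and strictly positive in a neighbourhood of $M(a,x)$ (so the implicit function theorem applies and $M_\epsilon$ is differentiable in $\epsilon$ with the claimed derivative), together with an integrable dominating envelope for $y\, f_\epsilon(y\mid a,x)\, S(y\mid a,x)$ uniformly in $\epsilon$ near $0$. These are supplied by the no-point-mass and continuity hypotheses on $f(y\mid a,x)$ together with the boundedness conditions on $Y$ and the nuisances invoked elsewhere (cf.\ Assumption \ref{assumption: regularity} and the hypotheses of Theorem \ref{thm: asymptotic normality continuous}); the remaining manipulations are routine calculus.
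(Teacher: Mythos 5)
Your proposal is correct and follows essentially the same route as the paper's proof: differentiate the truncated conditional expectation via the Leibniz rule, and pin down the boundary term by implicitly differentiating the median-defining identity $F_\epsilon(M_\epsilon(a,x)\mid a,x)=1/2$, which is exactly what the paper does. Your alternative derivation of the second identity (subtracting the first from $\partial_\epsilon E_\epsilon[Y\mid A,X]\rvert_{\epsilon=0}$ and using $E[S(Y\mid A,X)\mid A,X]=0$) is a valid minor shortcut, but the substance of the argument is identical.
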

\begin{proof}
We first show the first equality. By definition of $M(A,X)$ as the conditional median of $Y \mid A,X$, and then taking derivatives using Leibniz rule,
\begin{equation*}
\begin{aligned}
    1/2 &= \int^{M_\epsilon(A,X)}p_\epsilon(y \mid A,X) dy \implies \\ 0 &= p(y = M(A,X) \mid A, X)\frac{d}{d\epsilon}M_\epsilon(A,X)\rvert_{\epsilon = 0} + E[\mathbbm{1}_{\{Y \leq M(A,X)\}}S(Y \mid A, X) \mid A, X] \implies \\  &p(y = M(A,X) \mid A, X)\frac{d}{d\epsilon}M_\epsilon(A,X)\rvert_{\epsilon = 0} = -E[\mathbbm{1}_{\{Y \leq M(A,X)\}}S(Y \mid A, X) \mid A, X].
\end{aligned}
\end{equation*}
Again using Leibniz rule and the above equation, 
\begin{equation*}
\begin{aligned}
\frac{d}{d\epsilon} &E_\epsilon[Y \mathbbm{1}_{\{Y \leq M_\epsilon( A, X)\}}\mid A, X] \rvert_{\epsilon = 0} = \frac{d}{d\epsilon}\int^{M_\epsilon(A,X)}y p_\epsilon(y \mid A,X) dy \rvert_{\epsilon = 0}  \\  &= M(A,X)p(y = M(A,X) \mid A,X)\frac{d}{d\epsilon}M_\epsilon(A,X)\rvert_{\epsilon = 0} + E[Y\mathbbm{1}_{\{Y \leq M(A,X)\}}S(Y \mid A, X) \mid A, X] \\ &= -M(A,X)E[\mathbbm{1}_{\{Y \leq M(A,X)\}}S(Y \mid A, X) \mid A, X] + E[Y\mathbbm{1}_{\{Y \leq M(A,X)\}}S(Y \mid A, X) \mid A, X] \\ &= E[(Y - M(A,X))\mathbbm{1}_{\{Y \leq M(A,X)\}}S(Y \mid A, X) \mid A, X].
\end{aligned}
\end{equation*}
The second equality follows from a similar argument. By definition of $M(A,X)$ as the conditional median of $Y \mid A,X$, and then taking derivatives using Leibniz rule,
\begin{equation*}
\begin{aligned}
    1/2 &= \int_{M_\epsilon(A,X)}p_\epsilon(y \mid A,X) dy \implies \\ 0 &= -p(y = M(A,X) \mid A, X)\frac{d}{d\epsilon}M_\epsilon(A,X)\rvert_{\epsilon = 0} + E[\mathbbm{1}_{\{Y > M(A,X)\}}S(Y \mid A, X) \mid A, X] \implies \\  & -p(y = M(A,X) \mid A, X)\frac{d}{d\epsilon}M_\epsilon(A,X)\rvert_{\epsilon = 0} = -E[\mathbbm{1}_{\{Y > M(A,X)\}}S(Y \mid A, X) \mid A, X].
\end{aligned}
\end{equation*}
Again using Leibniz rule and the above equation, 
\begin{equation*}
\begin{aligned}
\frac{d}{d\epsilon} &E_\epsilon[Y \mathbbm{1}_{\{Y > M_\epsilon(A, X)\}}\mid A, X] \rvert_{\epsilon = 0} = \frac{d}{d\epsilon}\int_{M_\epsilon(A,X)}y p_\epsilon(y \mid A,X) dy \rvert_{\epsilon = 0}  \\  &= -M(A,X)p(y = M(A,X) \mid A,X)\frac{d}{d\epsilon}M_\epsilon(A,X)\rvert_{\epsilon = 0} + E[Y\mathbbm{1}_{\{Y > M(A,X)\}}S(Y \mid A, X) \mid A, X] \\ &= -M(A,X)E[\mathbbm{1}_{\{Y > M(A,X)\}}S(Y \mid A, X) \mid A, X] + E[Y\mathbbm{1}_{\{Y > M(A,X)\}}S(Y \mid A, X) \mid A, X] \\ &= E[(Y - M(A,X))\mathbbm{1}_{\{Y > M(A,X)\}}S(Y \mid A, X) \mid A, X].
\end{aligned}
\end{equation*}
\end{proof}
Building off of the previous lemma, the following result derives the efficient influence functions of two quantities whose difference is exactly the functional introduced in Proposition \ref{prop: eif correction continuous}. Proposition \ref{prop: eif correction continuous} is then an immediate corollary after taking the difference of the two efficient influence functions. 
\begin{lemma}
The efficient influence functions of $\theta^+ \equiv \gamma E[Y \mathbbm{1}_{\{Y > M( A, X)\}}]$ and \\ $\theta^- \equiv \gamma E[Y \mathbbm{1}_{\{Y \leq M( A, X)\}}]$ in the nonparametric model are
\begin{equation*}
\begin{aligned}
\gamma[(Y - M(A,X))\mathbbm{1}_{\{Y 
> M(A,X)\}} +1/2M(A,X)] \text{and } \\ \ \gamma[(Y - M(A,X))\mathbbm{1}_{\{Y \leq M(A,X)\}}+1/2M(A,X)],
\end{aligned}
\end{equation*}
respectively.
\end{lemma}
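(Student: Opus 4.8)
The plan is to compute the pathwise derivative of $\theta^+$ (and $\theta^-$) along a regular one-dimensional submodel $\{f_\epsilon\}$ passing through the truth at $\epsilon=0$, and then read off the uncentered efficient influence function, using the lemma just proved to differentiate through the $\epsilon$-dependent median. Write $\theta^+_\epsilon/\gamma = \int m^+_\epsilon(a,x)\, f_\epsilon(a,x)\, da\, dx$, where $m^+_\epsilon(a,x) \equiv E_\epsilon[Y \mathbbm{1}_{\{Y > M_\epsilon(a,x)\}} \mid A=a, X=x]$. Differentiating at $\epsilon = 0$ by the product rule splits the derivative into two pieces: differentiating $f_\epsilon(a,x)$ contributes $\gamma \int m^+(a,x) S(a,x) f(a,x)\, da\, dx = \gamma E[m^+(A,X) S(A,X)]$, while differentiating $m^+_\epsilon$ contributes, by the preceding lemma followed by an outer expectation over $f(a,x)$, exactly $\gamma E[(Y - M(A,X)) \mathbbm{1}_{\{Y > M(A,X)\}} S(Y \mid A,X)]$. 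Hence $\tfrac{d}{d\epsilon}\theta^+_\epsilon \rvert_{\epsilon=0} = \gamma E[m^+(A,X) S(A,X)] + \gamma E[(Y - M(A,X)) \mathbbm{1}_{\{Y > M(A,X)\}} S(Y \mid A,X)]$.

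In the nonparametric model the tangent space is all mean-zero square-integrable functions, and every score decomposes as $S = S(Y\mid A,X) + S(A,X)$ with $E[S(Y\mid A,X)\mid A,X]=0$ and $E[S(A,X)]=0$; so it suffices to exhibit $\phi^+$ with $\tfrac{d}{d\epsilon}\theta^+_\epsilon\rvert_{0} = E[\phi^+ S]$ for every such score. I would take $\phi^+ = \gamma[(Y - M(A,X))\mathbbm{1}_{\{Y > M(A,X)\}} + \tfrac12 M(A,X)]$ and verify it by decomposing $\phi^+$ into its conditional mean given $(A,X)$ and the conditionally-mean-zero residual. Because $Y\mid A,X$ has no point mass, $\Prob(Y > M(A,X)\mid A,X) = 1/2$, so $E[\phi^+\mid A,X] = \gamma(m^+(A,X) - \tfrac12 M(A,X)) + \tfrac{\gamma}{2}M(A,X) = \gamma m^+(A,X)$. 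Orthogonality then gives $E[\phi^+ S(Y\mid A,X)] = \gamma E[(Y - M)\mathbbm{1}_{\{Y>M\}} S(Y\mid A,X)]$ (the $(A,X)$-measurable part of $\phi^+$ drops out against $S(Y\mid A,X)$) and $E[\phi^+ S(A,X)] = E[E[\phi^+\mid A,X]S(A,X)] = \gamma E[m^+(A,X)S(A,X)]$; summing reproduces the pathwise derivative. Since we report the uncentered efficient influence function, no recentering is needed; as a sanity check $E[\phi^+] = \gamma E[m^+(A,X)] = \theta^+$.

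The argument for $\theta^-$ is identical, using the second identity of the preceding lemma and replacing $\mathbbm{1}_{\{Y>M\}}$ by $\mathbbm{1}_{\{Y\le M\}}$; the same cancellation, now from $\Prob(Y\le M(A,X)\mid A,X)=1/2$, yields $E[\phi^-\mid A,X] = \gamma E[Y\mathbbm{1}_{\{Y\le M\}}\mid A,X]$, which is exactly the $(A,X)$-measurable addend needed. Proposition \ref{prop: eif correction continuous} then follows by subtracting the two influence functions: the $\tfrac12 M(A,X)$ terms cancel and $\mathbbm{1}_{\{Y\le M\}} = \mathbbm{1}_{\{Y<M\}}$ almost surely, leaving $\gamma(Y - M(A,X))(\mathbbm{1}_{\{Y>M(A,X)\}} - \mathbbm{1}_{\{Y<M(A,X)\}})$. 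The one step requiring care is the projection --- confirming that the correct $(A,X)$-measurable addend is $\tfrac12 M(A,X)$ rather than $m^+(A,X)$ --- which is precisely where the no-point-mass assumption enters via $\Prob(Y > M(A,X)\mid A,X) = 1/2$. The interchanges of differentiation and integration are already licensed by the regularity underlying the preceding lemma together with standard submodel regularity, so they are not the crux.
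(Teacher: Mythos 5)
Your proposal is correct and follows essentially the same route as the paper: compute the pathwise derivative by the product rule, use the preceding lemma to handle the $\epsilon$-dependence of the median, and exploit that $S(Y\mid A,X)$ is conditionally mean zero together with $\Prob(Y>M(A,X)\mid A,X)=1/2$ to identify the $(A,X)$-measurable addend $\tfrac12 M(A,X)$. The only cosmetic difference is that you guess the candidate and verify it against the two score components, whereas the paper rearranges the derivative directly into the form $E[\phi\, S(Y,A,X)]$; the underlying computation is identical.
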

\begin{proof}
We start with $\theta^-$. The goal is to find a random variable $\phi^-$ that is a function of the data such that $\frac{d}{d\epsilon} \theta_{\epsilon}^-\rvert_{\epsilon = 0} = E[\phi^-(Y,A,X) S(Y,A,X)]$, where $S(y,a,x) = \frac{d}{d\epsilon} \log(f_{\epsilon}(y,a,x)) \rvert_{\epsilon = 0}$. We start by taking derivatives: 
\begin{equation*}
\begin{aligned}
-\frac{d}{d\epsilon} &\theta_{\epsilon}^-\rvert_{\epsilon = 0}= \frac{d}{d\epsilon} E_{\epsilon}[\mu_{\epsilon}^-(A,X) ] \rvert_{\epsilon = 0}\\ &= \frac{d}{d\epsilon} \int  \mu_{\epsilon}^-(a,x)f_{\epsilon}(a,x) da dx \rvert_{\epsilon = 0}\\ &=  \underbrace{\frac{d}{d\epsilon} \int\mu^-(a,x)f_{\epsilon}(a,x) da dx \rvert_{\epsilon = 0}}_\text{I} + \underbrace{\frac{d}{d\epsilon} \int  \mu_{\epsilon}^-(a,x)f(a,x) da dx\rvert_{\epsilon = 0}}_\text{II} 
\end{aligned}
\end{equation*}
We deal with each of the terms separately.

Term I: 
\begin{equation*}
\begin{aligned}
\frac{d}{d\epsilon} &\int  \mu^-(a,x)f_{\epsilon}(a,x) dadx \rvert_{\epsilon = 0} = E[\mu^-(A,X) S(A,X)] \\ &= E[\mu^-(A,X) S(A,X)] + E[\mu^-(A,X) S(Y \mid A,X)] \\ &= E[\mu^-(A,X) S(Y, A,X)],
\end{aligned}
\end{equation*}
The second equality holds because by a property of conditional scores, $S(Y \mid A,X)$ is mean zero conditional on $(A,X)$.

Term II: 
\begin{equation*}
\begin{aligned}
\frac{d}{d\epsilon} &\int  \mu_{\epsilon}^-(a,x)f(a,x) dadx \rvert_{\epsilon = 0} = \int  \frac{d}{d\epsilon} \mu_{\epsilon}^-(a,x)\rvert_{\epsilon = 0}f(a,x) dadx \\ &= \int E[(Y - M(a,x))\mathbbm{1}_{\{Y \leq M(a,x)\}}S(Y \mid a,x) \mid a,x]f(a,x) dadx \\ &= E[E[(Y - M(A,X))\mathbbm{1}_{\{Y \leq M(A,X)\}}S(Y \mid A,X) \mid A,X]] \\ &= E[(Y - M(A,X))\mathbbm{1}_{\{Y \leq M(A,X)\}}S(Y \mid A,X)],
\end{aligned}
\end{equation*}
where the second equality follows from the above lemma. Next, observe that $E[1/2 M(A,X) S(Y \mid A,X)] = 0$ and $E[\mu^-(A,X) S(Y \mid A,X)] = 0$ again because of the property that $S(Y \mid A,X)$ is mean zero given $(A,X)$. Thus,
\begin{equation*}
\begin{aligned}
&E[(Y - M(A,X))\mathbbm{1}_{\{Y \leq M(A,X)\}}S(Y \mid A,X)] \\ &= E[[(Y - M(A,X))\mathbbm{1}_{\{Y \leq M(A,X)\}} +1/2M(A,X) - \mu^-(A,X)]S(Y \mid A,X)] \\ &= E[[(Y - M(A,X))\mathbbm{1}_{\{Y \leq M(A,X)\}} +1/2M(A,X) - \mu^-(A,X)]S(Y \mid A,X)] \\ &+ E[[(Y - M(A,X))\mathbbm{1}_{\{Y \leq M(A,X)\}} +1/2M(A,X) - \mu^-(A,X)]S(A,X)] \\ &= E[[(Y - M(A,X))\mathbbm{1}_{\{Y \leq M(A,X)\}} +1/2M(A,X) - \mu^-(A,X)]S(Y, A,X)],
\end{aligned}
\end{equation*}
where the second to last equality holds because $(Y - M(A,X))\mathbbm{1}_{\{Y \leq M(A,X)\}} +1/2M(A,X) - \mu^-(A,X)$ is mean zero given $A,X$. To see why, observe that $E[Y \mathbbm{1}_{\{Y \leq M(A,X)\}} \mid A,X] = \mu^-(A,X)$ by definition, and $E[M(A,X))\mathbbm{1}_{\{Y \leq M(A,X)\}} \mid A,X] = 1/2 M(A,X)$ by definition of the conditional median $M(A,X)$. 

Putting Term I and Term II together, we get the uncentered EIF of $\theta^-$ as 
\begin{align*}
    (Y - M(A,X))\mathbbm{1}_{\{Y \leq M(A,X)\}} &+1/2M(A,X) - \mu^-(A,X) + \mu^-(A,X) = \\ &(Y - M(A,X))\mathbbm{1}_{\{Y \leq M(A,X)\}} +1/2M(A,X).
\end{align*}
The steps for $\theta^+$ are virtually identical to those for $\theta^-$, and thus the argument is omitted.
\end{proof}

\subsection{Binary outcome - smooth functional}
For this proof, we introduce some additional notation. Let $\mu(a,x) = P(Y = 1 \mid a, x)$. In addition, compared to the previous section, the score function $S(Y \mid A,X)$ changes slightly. Namely, since $Y$ is binary, we have that 
\begin{equation}
\label{eq: binary score}
S(y \mid a, x) = \frac{\dot{\mu}(a,x)(y - \mu(a,x))}{\mu(a,x)(1-\mu(a,x))},
\end{equation}
where $\dot{\mu}(a,x) \equiv \frac{\partial}{\partial \epsilon} P_\epsilon(Y = 1 \mid a, x)|_{\epsilon = 0}$. We also introduce the following lemma:
\begin{lemma} Let $h$ be a differentiable, scalar-valued function. Then
\begin{equation*}
    \frac{d}{d\epsilon}h\{P_{\epsilon}(Y = 1 \mid a,x)\}\rvert_{\epsilon = 0} = h'(\mu(a,x)) S(y \mid a, x) \mu(a,x)(1-\mu(a,x)) / (y - \mu(a,x)).
\end{equation*}
\end{lemma}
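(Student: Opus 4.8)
The plan is to reduce the claim to a single application of the chain rule together with the binary-score identity \eqref{eq: binary score}. First I would observe that $\epsilon \mapsto P_\epsilon(Y = 1 \mid a, x)$ is a differentiable scalar path taking the value $\mu(a,x)$ at $\epsilon = 0$, so the chain rule gives
\[
\frac{d}{d\epsilon} h\{P_\epsilon(Y = 1 \mid a, x)\}\Big\rvert_{\epsilon = 0} = h'(\mu(a,x))\,\dot{\mu}(a,x),
\]
where $\dot{\mu}(a,x) \equiv \frac{\partial}{\partial\epsilon}P_\epsilon(Y = 1 \mid a, x)\rvert_{\epsilon = 0}$ is exactly the quantity appearing in \eqref{eq: binary score}. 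This is the only ``analytic'' input, and it is immediate.

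Next I would simply rearrange \eqref{eq: binary score}. Since $Y$ is a nondegenerate binary outcome we have $\mu(a,x) \in (0,1)$, hence $y - \mu(a,x) \neq 0$ for both $y = 0$ and $y = 1$; solving \eqref{eq: binary score} for $\dot{\mu}(a,x)$ then yields
\[
\dot{\mu}(a,x) = \frac{S(y \mid a, x)\,\mu(a,x)(1 - \mu(a,x))}{y - \mu(a,x)}.
\]
Substituting this expression into the chain-rule display produces precisely the asserted identity, completing the argument.

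There is no genuine obstacle here; the proof is two lines. The only point I would flag for the reader is that the right-hand side of the lemma is written with a $y$-dependent factor even though the left-hand side is manifestly free of $y$. This discrepancy is only apparent: by \eqref{eq: binary score}, the ratio $S(y\mid a,x)/(y-\mu(a,x))$ equals $\dot{\mu}(a,x)/[\mu(a,x)(1-\mu(a,x))]$, which does not depend on $y$, so the displayed expression is well-defined. I would also note that dividing by $y-\mu(a,x)$ is legitimate exactly because overlap for the binary outcome (no stratum with $P(Y = 1\mid a,x)\in\{0,1\}$) keeps the denominator nonzero.
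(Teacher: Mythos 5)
Your proof is correct and follows exactly the paper's argument: apply the chain rule to get $h'(\mu(a,x))\,\dot{\mu}(a,x)$, then rearrange the binary-score identity \eqref{eq: binary score} to express $\dot{\mu}(a,x)$ and substitute. Your additional remarks (that the apparent $y$-dependence cancels and that $y-\mu(a,x)\neq 0$ requires $\mu(a,x)\in(0,1)$) are sensible clarifications but do not change the route.
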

\begin{proof}
By the chain rule and evaluating at $\epsilon = 0$,
\begin{equation*}
\frac{d}{d\epsilon}h\{P_{\epsilon}(Y = 1 \mid a,x)\}\rvert_{\epsilon = 0} = h'(\mu(a,x)) \dot{\mu}(a,x).
\end{equation*}
Rearranging Equation \eqref{eq: binary score} and substituting into the previous equation yields the result.
\end{proof}

We can now prove Proposition \ref{prop: eif correction binary}.
\begin{proof}
 Again, the goal is to find a function of the data $\phi_B$ such that $\frac{d \theta^B_{\epsilon}}{d \epsilon} |_{\epsilon = 0} = E[\phi^B(Y,A,X) S(Y,A,X)]$. We start by taking derivatives:
\begin{align*}
\frac{d \theta^B_{\epsilon}}{d \epsilon} \rvert_{\epsilon = 0} &= \frac{d }{d \epsilon}  E_{\epsilon}[h\{P_{\epsilon}(Y = 1 \mid A,X)\}] \rvert_{\epsilon = 0}\\ &= \frac{d }{d \epsilon} \int h\{P_{\epsilon}(Y = 1 \mid a,x)\} h_{\epsilon}(a,x) da dx \rvert_{\epsilon = 0} \\ &= \underbrace{ \int \frac{d }{d \epsilon} \{h\{P_{\epsilon}(Y = 1 \mid a,x)\} \rvert_{\epsilon = 0}f(a,x) da dx}_\text{I} \\& \underbrace{ \int  h\{P(Y = 1 \mid a,x)\}\frac{d}{d\epsilon}f_\epsilon(a,x)\rvert_{\epsilon = 0} da dx}_\text{II}.
\end{align*}
We will deal with the terms separately. 

For Term I:
\begin{align*}
& \int \frac{d }{d \epsilon} \{h\{P_{\epsilon}(Y = 1 \mid a,x)\} \rvert_{\epsilon = 0}f(y \mid a,x) f(a,x)  dy da dx \\ &=  \int h'(\mu(a,x))S(y \mid a, x) \mu(a,x)(1-\mu(a,x)) / (y - \mu(a,x)) f(y \mid a,x) f(a,x) dy da dx.
\end{align*}
Now, we compute the conditional mean on $(a,x)$ of $$ h'(\mu(a,x)) \mu(a,x)(1-\mu(a,x)) / (y - \mu(a,x)).$$  Since $y$ is binary with success probability $\mu(a,x)$, this quantity is 
\begin{equation*}
h'(\mu(a,x)) \mu(a,x)^2 - h'(\mu(a,x)) (1-\mu(a,x))^2.
\end{equation*}
We continue where we left off,
\begin{align*}
 & \int h'\{\mu(a,x)\} S(y \mid a, x) \mu(a,x)(1-\mu(a,x)) / (y - \mu(a,x)) f(y \mid a,x) f(a,x) dy da dx \\ &= E[h'\{\mu(A,X)\} S(Y \mid A, X) \mu(A,X)(1-\mu(A,X)) / (Y - \mu(A,X))] \\ &= E[h'\{\mu(A,X)\} S(Y \mid A, X) \mu(A,X)(1-\mu(A,X)) / (Y - \mu(A,X))]\\ &-(  E[\{h'(\mu(A,X)) \mu(A,X)^2 - h'(\mu(A,X)) (1-\mu(A,X))^2\} \{S(Y \mid A, X)\} ] ) \\ &=  E[h'(\mu(A,X))  \{Y \mu(A,X) - (1-Y)(1-\mu(A,X) \}S(Y \mid A, X)] \\ &-(  E[\{h'(\mu(A,X)) \mu(A,X)^2 - h'(\mu(A,X)) (1-\mu(A,X))^2\} \{S(Y \mid A, X)\} ] ) \\ &=  E[h'(\mu(A,X))  \{Y \mu(A,X) - (1-Y)(1-\mu(A,X) \}S(Y \mid A, X)] \\ &-(  E[\{h'(\mu(A,X)) (2\mu(A,X) - 1)\} \{S(Y \mid A, X)\} ] ) \\ &= E[h'\{\mu(A,X)\} \{Y  - \mu(A,X) \}S(Y \mid A, X)] \\ &= E[h'\{\mu(A,X)\} \{Y  - \mu(A,X) \}S(Y \mid A, X)] \\&-E[h'\{\mu(A,X)\} \{Y  - \mu(A,X) \}S(A, X)] \\ &= E[h'\{\mu(A,X)\} \{Y  - \mu(A,X) \}S(Y, A, X)].
\end{align*}
The second equality is by the fact that $S(Y \mid A,X)$ is mean zero given $(A,X)$. The third equality is due to the fact that for $y$ binary, $p(1-p)/(y-p) = yp - (1-y)(1-p)$. The fourth equality is by $p^2 - (1-p)^2 = 2p-1$. The fifth is by algebraic simplification. The sixth is by $Y - \mu(A,X)$ being mean zero given $(A,X)$.

For Term II:

\begin{align*}
&\int  h\{P(Y = 1 \mid a,x)\} \frac{d}{d\epsilon}f_\epsilon(a,x)\rvert_{\epsilon = 0} da dx \\ &=  E[h\{P(Y = 1 \mid A,X)\} S(A,X)] \\ &= E[h\{P(Y = 1 \mid A,X)\} S(A,X)] \\ & E[h\{P(Y = 1 \mid A,X)\} S(Y \mid A,X)] \\ &= E[h\{P(Y = 1 \mid A,X)\} S(Y, A,X)].
\end{align*}
The second equality is due to $S(Y \mid A,X)$ being mean zero given $(A,X)$.

Summing the two terms, it follows that the uncentered efficient influence function must be $$h(\mu(A,X)) + h'\{\mu(A,X)\} \{Y  - \mu(A,X) \}.$$
\end{proof}

\section{Asymptotic normality}
\subsection{Helper lemmas}
\begin{lemma}
\label{lemma: indicator l2}
Let $A$ and $B$ be random variables and suppose the conditional density $f(a \mid b)$ is bounded above by $C < \infty$ for all $a, b$ in the respective supports with no point masses. Then for functions $h, g$,
\begin{equation*}
P(\mathbbm{1}_{\{g(B) > A > h(B)\}}) \lesssim \norm{h - g}.
\end{equation*}
\end{lemma}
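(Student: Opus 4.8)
The left-hand side is the probability of the event $\{g(B) > A > h(B)\}$, and the plan is to bound it by conditioning on $B$, using the uniform density bound to control the conditional probability by an $L_1$-type quantity in $h-g$, and then upgrading to the $L_2$ norm $\norm{h-g}$ via Cauchy--Schwarz.

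First I would fix $B = b$ and note that the event $\{g(b) > A > h(b)\}$ is empty when $g(b) \le h(b)$, while when $g(b) > h(b)$, since the conditional law of $A$ given $B = b$ has density $f(\cdot \mid b) \le C$ with no point masses,
\[
\Prob\big(g(b) > A > h(b) \mid B = b\big) = \int_{h(b)}^{g(b)} f(a \mid b)\, da \;\le\; C\,(g(b) - h(b)).
\]
In either case $\Prob\big(g(b) > A > h(b)\mid B = b\big) \le C\,|g(b) - h(b)|$.

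Next, applying the tower property and then Cauchy--Schwarz (equivalently, Jensen for $x \mapsto x^2$),
\[
\Prob\big(g(B) > A > h(B)\big) = \E\big[\Prob\big(g(B) > A > h(B)\mid B\big)\big] \le C\,\E\,|g(B) - h(B)|,
\]
and since $\E\,|g(B) - h(B)| \le \big(\E(g(B)-h(B))^2\big)^{1/2} = \norm{h - g}$, we conclude $\Prob\big(g(B) > A > h(B)\big) \le C\,\norm{h - g}$, which is the claimed bound with implicit constant $C$.

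This argument is routine and I do not anticipate a genuine obstacle; the only points requiring a little care are handling the degenerate case $g(b) \le h(b)$ so that the integral bound is correctly phrased with an absolute value, and remembering that the target norm $\norm{\cdot}$ is $L_2$, so the natural $L_1$ bound $\E\,|g(B) - h(B)|$ must be converted via Cauchy--Schwarz (the density bound $C$ being precisely the constant absorbed into $\lesssim$).
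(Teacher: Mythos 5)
Your proposal is correct and follows essentially the same route as the paper's proof: condition on $B$, use the density bound $f(a\mid b)\le C$ to control the conditional probability by $C\,|g(B)-h(B)|$, take expectations, and pass from the $L_1$ to the $L_2$ norm via Cauchy--Schwarz. The only difference is that you explicitly treat the degenerate case $g(b)\le h(b)$, which the paper leaves implicit.
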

\begin{proof}
We can write
\begin{align*}
P(\mathbbm{1}_{\{h(B) < A < g(B)\}}) &= E_B\left[\int_{h(B)}^{g(B)} f(a \mid B) da \right] \\  &\leq CE_B\left[|h(B)-g(B)|\right] \\  &\lesssim \norm{h - g}.
\end{align*}
The first inequality is by the bounded density and the second is by Cauchy-Schwarz.
\end{proof}

\begin{lemma}
\label{lemma: conditional indicator l2}
Let $A$ and $B$ be random variables and suppose the conditional density $f(a \mid b)$ is bounded above by $C < \infty$ for all $a, b$ in the respective supports with no point masses. Then for functions $h, g$,
\begin{equation*}
\norm{P(A > h(B)| B) - P(A > g(B)| B)} \lesssim \norm{h - g}.
\end{equation*}
\end{lemma}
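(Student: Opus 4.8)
The plan is to reduce to a pointwise-in-$B$ estimate and then integrate, exactly as in the proof of Lemma \ref{lemma: indicator l2}, but squaring \emph{before} taking the expectation over $B$ so that the $L_2$ norm on the right appears for free. First I would fix a point $b$ in the support of $B$ and, without loss of generality (by symmetry in $h,g$), assume $h(b) \leq g(b)$. Since $f(a \mid b)$ has no point masses, the conditional survival function $a \mapsto P(A > a \mid B = b)$ is the integral of the density, so
\begin{equation*}
\bigl| P(A > h(b) \mid B = b) - P(A > g(b) \mid B = b) \bigr| = \int_{h(b)}^{g(b)} f(a \mid b)\, da \leq C \bigl| g(b) - h(b) \bigr|,
\end{equation*}
using the uniform bound $f(a \mid b) \leq C$. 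The same bound holds when $h(b) > g(b)$ by swapping the roles of $h$ and $g$.

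Next I would square this pointwise inequality and take the expectation over $B$:
\begin{equation*}
\bigl\| P(A > h(B) \mid B) - P(A > g(B) \mid B) \bigr\|^2 = E_B\Bigl[ \bigl( P(A > h(B) \mid B) - P(A > g(B) \mid B) \bigr)^2 \Bigr] \leq C^2\, E_B\bigl[ (h(B) - g(B))^2 \bigr] = C^2 \norm{h - g}^2.
\end{equation*}
Taking square roots gives $\norm{P(A > h(B) \mid B) - P(A > g(B) \mid B)} \leq C \norm{h - g} \lesssim \norm{h-g}$, as claimed.

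I do not anticipate a genuine obstacle here: the only things to be careful about are (i) invoking the no-point-mass hypothesis to write the survival-function difference as an honest integral of the density (so that the uniform density bound applies), and (ii) noting that, in contrast to Lemma \ref{lemma: indicator l2}, no Cauchy--Schwarz step is needed at the end — the $L_2$ structure is inherited directly once the deterministic pointwise bound $\bigl|P(A > h(B)\mid B) - P(A > g(B)\mid B)\bigr| \leq C|h(B) - g(B)|$ is squared and averaged. The handling of the two orderings of $h(b)$ and $g(b)$ is a triviality absorbed into the absolute value.
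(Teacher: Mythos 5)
Your proposal is correct and follows essentially the same route as the paper: write the conditional survival probabilities as integrals of the density, bound the pointwise difference by $C|h(B)-g(B)|$ using the uniform density bound, and take the $L_2$ norm over $B$. Your version is slightly more careful about the ordering of $h(b)$ and $g(b)$ and the absolute values, but the argument is identical in substance.
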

\begin{proof}
Observe that by definition,
\begin{align*}
P(A > h(B)| B) &= \int_{h(B)} f(a \mid B) da, \\ P(A > g(B)| B) &= \int_{g(B)} f(a \mid B) da.
\end{align*}
Taking the difference yields
\begin{equation*}
\int_{h(B)} f(a \mid B) da -  \int_{g(B)} f(a \mid B) da \leq C |h(B) - g(B)|.
\end{equation*}
Taking the $L_2$ norm on both sides of the above inequality yields 
\begin{align*}
\norm{P(A > h(B)| B) - P(A > g(B)| B)} \lesssim \norm{h - g}, 
\end{align*}
as desired.
    
\end{proof}

\subsection{Continuous outcome}
\begin{proof}[Proof of Theorem \ref{thm: asymptotic normality continuous}]
We only demonstrate the steps for $\psi_{\textmax}$, as the steps for $\psi_{\textmin}$ are nearly identical. We can decompose
\begin{align*}
\wh{\psi}_{\textmax} &= \Prob_n\phi_{\textmax} (\wh{\eta})  \\ &= \Prob_n\phi_{\textmax} (\wh{\eta}) \pm \Prob_n\phi_{\textmax} ({\eta}) \pm \Prob \phi_{\textmax} ({\eta}) \pm \Prob\phi_{\textmax} (\wh{\eta}) \\ &= \Prob_n\phi_{\textmax} ({\eta}) + \Prob_n\phi_{\textmax} (\wh{\eta}) - \Prob_n\phi_{\textmax} ({\eta}) - 
 \Prob\phi_{\textmax} (\wh{\eta}) + \Prob\phi_{\textmax} ({\eta}) +  \Prob\phi_{\textmax} (\wh{\eta})  -  \Prob \phi_{\textmax} ({\eta})  \\ &= \Prob_n\phi_{\textmax} ({\eta}) + (\Prob_n - \Prob) (\phi_{\textmax} (\wh{\eta}) - \phi_{\textmax} (\eta)) + \Prob(\phi_{\textmax} (\wh{\eta})  -   \phi_{\textmax} ({\eta}) ).
\end{align*}
By the central limit theorem, after the scaling by $\sqrt{n}$, the first term converges to a normal distribution with mean zero and variance matching that in the statement of the theorem. By Lemma 1 from \cite{Kennedy2022}, the second term is $o_p(n^{-1/2})$ since we have assumed that the nuisance estimates $\wh{\eta}$ are consistent for the true nuisances $\eta$, sample/cross-fitting is employed, and $\wh{\eta}$ and ${\eta}$ are uniformly bounded. This brings us to the final term, which is the bias. We can break this into two parts:
\begin{align*}
&T_1 \equiv \Prob[\wh{\mu}'(A,X) - \wh{s}(A,X)\{Y - \wh{\mu}(A,X)\}-{\mu}'(A,X) + {s}(A,X)\{Y - {\mu}(A,X)\}],\\
&T_2 \equiv \gamma \times \Prob[(Y - \wh{M}(A,X))(\mathbbm{1}_{\{Y > \wh{M}( A, X)\}} - \mathbbm{1}_{\{Y < \wh{M}( A, X)\}}) \\ &- (Y - {M}(A,X))(\mathbbm{1}_{\{Y > {M}( A, X)\}} - \mathbbm{1}_{\{Y < {M}( A, X)\}})].
\end{align*}
We can rearrange $T_1$ to be 
\begin{align*}
T_1 &= \Prob[\wh{\mu}'(A,X) - \wh{s}(A,X)\{Y - \wh{\mu}(A,X)\} - \wh{\mu}'(A,X) + {s}(A,X)\{Y - \wh{\mu}(A,X)\}] \\ &+
\Prob[\wh{\mu}'(A,X) - {s}(A,X)\{Y - \wh{\mu}(A,X)\}
-{\mu}'(A,X) + {s}(A,X)\{Y - {\mu}(A,X)\}] \\ &= \Prob[(s(A,X) - \wh{s}(A,X))\{Y - \wh{\mu}(A,X)\}] \\ &+
\Prob[\wh{\mu}'(A,X) + {s}(A,X)\wh{\mu}(A,X)
-{\mu}'(A,X) - {s}(A,X){\mu}(A,X)] \\ &= \Prob[(s(A,X) - \wh{s}(A,X))\{\mu(A,X) - \wh{\mu}(A,X)\}] + 0.
\end{align*}
The last equality is by iterated expectation and integration by parts. Assuming $\wh{s}$ and $\wh{\mu}$ converge at least as fast as $o_p(n^{-1/4})$, by the Cauchy-Schwarz inequality, we get that $T_1 = o_p(n^{-1/2})$. For $T_2$, we can simplify it as 
\begin{align*}
T_2 &= \gamma \times \Prob[(Y - \wh{M}(A,X))(\mathbbm{1}_{\{Y > \wh{M}( A, X)\}} - \mathbbm{1}_{\{Y < \wh{M}( A, X)\}}) \\ &- (Y - {M}(A,X))(\mathbbm{1}_{\{Y > \wh{M}( A, X)\}} - \mathbbm{1}_{\{Y < \wh{M}( A, X)\}})]  \\
&+ \gamma \times \Prob[(Y - {M}(A,X))(\mathbbm{1}_{\{Y > \wh{M}( A, X)\}} - \mathbbm{1}_{\{Y < \wh{M}( A, X)\}})
\\ &- (Y - {M}(A,X))(\mathbbm{1}_{\{Y > {M}( A, X)\}} - \mathbbm{1}_{\{Y < {M}( A, X)\}})] \\ &= \gamma \times \Prob[(M(A,X) - \wh{M}(A,X))(\mathbbm{1}_{\{Y > \wh{M}( A, X)\}}-1/2 + 1/2 - \mathbbm{1}_{\{Y < \wh{M}( A, X)\}})] \\ &+ \gamma \times \Prob[(Y - {M}(A,X))(\mathbbm{1}_{\{Y > \wh{M}( A, X)\}} - \mathbbm{1}_{\{Y < \wh{M}( A, X)\}})- (\mathbbm{1}_{\{Y > {M}( A, X)\}} - \mathbbm{1}_{\{Y < {M}( A, X)\}})].
\end{align*}

The $T_1$ term, by iterated expectation, can be rewritten as 
\begin{align*}
\gamma &\times \Prob[(M(A,X) - \wh{M}(A,X))(\mathbbm{1}_{\{Y > \wh{M}( A, X)\}}-1/2 + 1/2 - \mathbbm{1}_{\{Y < \wh{M}( A, X)\}})] \\ &= \gamma \times \Prob[(M(A,X) - \wh{M}(A,X))(P(Y > \wh{M}( A, X) \mid A, X) - P(Y > {M}( A, X) \mid A, X))] \\ &+ \gamma \times \Prob[(M(A,X) - \wh{M}(A,X))(P(Y < {M}( A, X) \mid A, X) - P(Y < \wh{M}( A, X) \mid A, X))].
\end{align*}
This term will $o_p(n^{-1/2})$ by the Cauchy-Schwarz inequality, the assumed rate $\norm{M-\wh{M}} = o_p(n^{-1/4})$, and the fact that the rate of convergence of $\norm{P(Y > \wh{M}( A, X) \mid A, X) - P(Y > {M}( A, X) \mid A, X))}$ and $\norm{P(Y < \wh{M}( A, X) \mid A, X) - P(Y < {M}( A, X) \mid A, X))}$ inherit the rate of convergence of $\norm{M-\wh{M}}$, by Lemma \ref{lemma: conditional indicator l2}.

For the $T_2$ term, note that for $(\mathbbm{1}_{\{Y > \wh{M}( A, X)\}} - \mathbbm{1}_{\{Y < \wh{M}( A, X)\}})- (\mathbbm{1}_{\{Y > {M}( A, X)\}} - \mathbbm{1}_{\{Y < {M}( A, X)\}})$ to be non-zero, $Y$ must be greater than $\wh{M}$ and less than $M$, or less than $\wh{M}$ and greater than $M$. The probability of such an event (call it $\mathcal{E}$) converges to zero at the same rate as the convergence of $\norm{M-\wh{M}}$, by Lemma \ref{lemma: indicator l2}. Moreover, on the event $\mathcal{E}$, $Y$ lies between $M$ and $\wh{M}$, so $|Y - M(A,X)| \leq |\wh{M}(A,X) -M(A,X)|$ on the event. Then an application of the law of total expectation implies that the $T_2$ term can be expressed as $$\mathbb{P}[E[Y-M(A,X) \mid \mathcal{E}] \times P(\mathcal{E})] \lesssim \norm{\wh{M}-M}^2 = o_p(n^{-1/2}),$$ by the rate assumptions. In summary, the entire bias term is second order in the estimation error $\norm{M-\wh{M}}$, and consequently is $o_p(n^{-1/2})$. Thus, the central limit theorem holds as stated.
\end{proof}

\subsection{Binary outcome - smoothed functional}
\begin{proof}[Proof of Theorem \ref{thm: asymptotic normality binary}]
As in the proof of Theorem \ref{thm: asymptotic normality continuous}, we only demonstrate the steps for $\psi_{\textmax,h}^B$, as the steps for $\psi_{\textmin,h}^B$ are nearly identical. Similar to the continuous outcome case, we can decompose our estimator as follows:
\begin{align*}
\wh{\psi}_{\textmax,h}^B &= \Prob_n\phi_{\textmax,h}^B (\wh{\eta})  \\ &= \Prob_n\phi_{\textmax,h}^B ({\eta}) + (\Prob_n - \Prob) (\phi_{\textmax,h}^B (\wh{\eta}) - \phi_{\textmax,h}^B (\eta)) + \Prob(\phi_{\textmax,h}^B (\wh{\eta})  -   \phi_{\textmax,h}^B ({\eta}) ).
\end{align*}
By the central limit theorem, after the scaling by $\sqrt{n}$, the first term converges to a normal distribution with mean zero and variance matching that in the statement of the theorem. By Lemma 1 from \cite{Kennedy2022}, the second term is $o_p(n^{-1/2})$ since we have assumed that the nuisance estimates $\wh{\eta}$ are consistent for the true nuisances $\eta$, sample/cross-fitting is employed, and
$\wh{\eta}$ and ${\eta}$ are uniformly bounded. This brings us to the final term, which is the bias. We can break this into two parts:
\begin{align*}
&T_1 \equiv \Prob[\wh{\mu}'(A,X) - \wh{s}(A,X)\{Y - \wh{\mu}(A,X)\}-{\mu}'(A,X) + {s}(A,X)\{Y - {\mu}(A,X)\}]\\
&T_2 \equiv  \gamma \times \Prob[h(\wh{\mu}(A,X)) + h'(\wh{\mu}(A,X))(Y - \wh{\mu}(A,X))] \\ &- \gamma \times \Prob[h({\mu}(A,X)) + h'({\mu}(A,X))(Y - {\mu}(A,X))] \\ &= \gamma \times \Prob[h(\wh{\mu}(A,X)) + h'(\wh{\mu}(A,X))(Y - \wh{\mu}(A,X))] \\ &- \gamma \times \Prob[h({\mu}(A,X))].
\end{align*}
The last equality is because $h'({\mu}(A,X))(Y - {\mu}(A,X))$ is mean zero by iterated expectation. The decomposition of $T_1$ was already done in the proof of Theorem \ref{thm: asymptotic normality continuous}, and is $o_p(n^{-1/2})$ under our assumptions. For $T_2$, we drop the constant term $\gamma$, as it does not affect the analysis and simplifies exposition. Next, observe that a simple Taylor expansion of the function $h(p)$ around the point $\wh{p}$ is as follows:
\begin{align*}
h(p) &= h(\wh{p}) + h'(\wh{p})(p-\wh{p}) + O(\{p-\wh{p}\}^2).
\end{align*}
Applying the Taylor expansion taking $\mu$ to be the $p$,
\begin{align*}
1 / \gamma T_2 &= \Prob[h(\wh{\mu}(A,X)) + h'(\wh{\mu}(A,X))(Y - \wh{\mu}(A,X))] \\ &- \Prob[h({\mu}(A,X))] \\ &= \Prob[h(\wh{\mu}(A,X)) + h'(\wh{\mu}(A,X))(Y - \wh{\mu}(A,X))] \\ &- \Prob[h(\wh{\mu}(A,X)) + h'(\wh{\mu}(A,X))(\mu(A,X) - \wh{\mu}(A,X)) \\ &+ O(\{\mu(A,X)-\wh{\mu}(A,X)\})^2] \\ &= \Prob[h'(\wh{\mu}(A,X))(Y - {\mu}(A,X))] + O(\{\mu(A,X)-\wh{\mu}(A,X)\})^2] \\ &= \Prob[O(\{\mu(A,X)-\wh{\mu}(A,X)\})^2] \\ &\lesssim \norm{\mu - \wh{\mu}}^2.
\end{align*}
The second to last equality is because $h'(\wh{\mu}(A,X))(Y - {\mu}(A,X))$ is mean zero by iterated expectation. The last inequality is due to $\wh{\mu}$ and $\mu$ being uniformly bounded. $\norm{\mu - \wh{\mu}}^2$ is $o_p(n^{-1/2})$ by assumption, so $T_2$ is as well.
\end{proof}

\section{Weighted average derivative effects}
\label{sec: weighted ade}
We can also adapt the results to the case where the estimand of interest is a weighted average of the derivative effects. Specifically, consider a known weight function $w(a, x)$, where $w(a, x) \geq 0$ almost surely, and $E[w(A, X)] = 1$. Suppose the estimand of interest is  $E[w(A, X)\partial_a E[Y(A) \mid X, U]]$. This quantity will be identified similarly to its unweighted counterpart, under some additional regularity conditions. 
\begin{assumption}[Regularity - weights]
\label{assumption: regularity - weighted}
The derivative of $w(a, x)$ with respect to $a$ exists, and $f(a \mid x, u) = 0$ implies $w(a, x) = 0$.
\end{assumption}
As \cite{Hines2023OptimallyEffects} discuss, under Assumption \ref{assumption: regularity}, and \ref{assumption: regularity - weighted}, \cite{Powell1989} showed that 
\begin{equation*}
E[w(A,X) \partial_a E[Y \mid A, X, U]] = E[\{-w'(A, X) - w(A, X) s(A \mid X, U)\}Y].    
\end{equation*}
Note that the first component, $E[-w'(A, X) Y]$, is completely identified and is easily estimable when $w(a, x)$ is known. The second component, is simply $E[w(A,X) \times -s(A \mid X) Y]$. Since $w$ is known and is nonnegative, the optimization problems defined in \eqref{eq: optimization problem} for both continuous and binary outcomes would have the same optimal solutions for maximizing and minimizing $E[w(A,X) \times -s(A \mid X) Y]$, since the optimization problems were solved separately within each $(A = a, X = x)$ strata. The closed-form bounds would be essentially the same, modulo an additional $w(A, X)$ term inside the expectations. Explicitly, these would be 
\begin{equation*}
\begin{aligned}
\psi_{\max} = E[-w(A, X)s(A \mid X)Y] + \gamma E[w(A, X)Y (\mathbbm{1}_{\{Y > M( A, X)\}} - \mathbbm{1}_{\{Y < M( A, X)\}})], \\ \psi_{\min} = E[-w(A, X)s(A \mid X)Y] - \gamma E[w(A, X)Y (\mathbbm{1}_{\{Y > M( A, X)\}} - \mathbbm{1}_{\{Y < M( A, X)\}})],
\end{aligned}
\end{equation*}
for a continuous outcome and
\begin{equation*}
\begin{aligned}
\psi_{\max, w}^B  = E[-w(A, X) s(A\mid X)Y] + \gamma E[w(A, X)\{1/2 - |P(Y = 1 \mid A,X) - 1/2|\}], \\ \psi_{\min, w}^B  = E[-w(A, X)s(A\mid X)Y] - \gamma E[w(A, X)\{1/2 - |P(Y = 1 \mid A,X) - 1/2|\}],
\end{aligned}
\end{equation*}
for a binary outcome. Estimation and inference would follow from the same strategy as presented for the unweighted case. The case where the weights are unknown is an interesting direction for future research.
\section{Examining the relationship between model \ref{eqn:sens_model} and Rosenbaum's semiparametric model}
\label{sec: compare marginal and Rosenbaum}
An analogous model for the continuous treatment of Rosenbaum's sensitivity model for binary treatments would look as follows: 
\begin{assumption}[$\gamma$ sensitivity model]
\begin{equation}
\label{eqn:sens_model 1}
    \exp(-\gamma(|a - a'|)) \leq \frac{f(a' \mid x, u)f(a\mid x, u')}{f(a\mid x, u)f(a' \mid x, u')}\leq  \exp(\gamma(|a - a'|)) \ \forall x, a, a', u, u'.
\end{equation}
\end{assumption}
The model restricts the odds ratio of the generalized propensity scores at any two dose levels and any two values of the unmeasured confounder. This model directly generalizes Rosenbaum's sensitivity model for binary treatments, i.e. when $a$ and $a'$ are replaced by 0 and 1 \citep{rosenbaum_obs}. The key difference between models \eqref{eqn:sens_model} and \eqref{eqn:sens_model 1} is that the latter model compares quantities conditioning on any two different values of the unmeasured confounder, $u$ and $u'$, whereas the former compares a quantity conditioning on any $u$ with an analogous quantity but with $u$ marginalized out. We now demonstrate that the semiparametric model introduced in \citep{rosenbaum1989sensitivity} implies model \eqref{eqn:sens_model 1}.
\begin{example}[Rosenbaum model for continuous doses]
\label{example: rosenbaum model}
Suppose that $$f(a \mid u, x) = \zeta(x,u)\eta(a,x) \exp(\gamma a u),$$ where $\eta(a,x)$ is an arbitrary function, and $\zeta(x,u)$ is a normalizing constant that ensures $f(a \mid u, x)$ integrates to 1. 
\end{example}
We will now show that if $u \in [0,1]$, then sensitivity model \eqref{eqn:sens_model 1} holds at $\gamma$. To see this, directly plugging into Equation \eqref{eqn:sens_model 1}, we get 
\begin{align*}
    \frac{f(a' \mid x, u)f(a\mid x, u')}{f(a\mid x, u)f(a' \mid x, u')} &= \frac{\zeta(x,u)\eta(a',x) \exp(\gamma a u)\zeta(x,u')\eta(a,x) \exp(\gamma a u')}{\zeta(x,u)\eta(a,x) \exp(\gamma a u)\zeta(x,u')\eta(a',x) \exp(\gamma a u')} 
\\ &= \frac{ \exp(\gamma a' u) \exp(\gamma a u')}{\exp(\gamma a u)\exp(\gamma a' u')} \\ &= \exp(\gamma(a'-a)(u-u')) \\ &\in [ \exp(-\gamma(|a - a'|)),  \exp(\gamma(|a - a'|))].
\end{align*}
The model from Example \ref{example: rosenbaum model} covers a range of familiar settings. For example, the situation where $A \mid X, U \sim N(g(X) + \gamma U, \sigma^2)$, where $g$ is an arbitrary function, falls within the model. Another is when $A \mid X, U \sim \text{Gamma}(a, g(X) - \gamma U)$, for a fixed constant $a$ and an arbitrary function $g$. There are many other exponential family models for which Example \ref{example: rosenbaum model} applies. It is also compatible with binary and ordinal treatments.

The next result elucidates a connection between models \eqref{eqn:sens_model 1} and \eqref{eqn:sens_model}; a similar result for the binary case was derived in \cite{Zhao2019SensitivityBootstrap}. The following lemma is a slight generalization of Proposition 3 of \cite{Zhao2019SensitivityBootstrap}.
\begin{lemma}
\label{lemma: marginal gamma / 2 implies gamma}
If model \eqref{eqn:sens_model} holds at $\gamma / 2$, then model \eqref{eqn:sens_model 1} holds at $\gamma$. Also, if model \eqref{eqn:sens_model 1} holds at $\gamma$, then model \eqref{eqn:sens_model} holds at $\gamma$.
\end{lemma}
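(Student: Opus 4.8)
The plan is to prove the two implications separately by elementary manipulations. For the first, I would exploit a multiplicative factorization of the odds ratio appearing in \eqref{eqn:sens_model 1} into two factors, each of which is an odds ratio of exactly the form controlled by \eqref{eqn:sens_model}. For the second, I would use that $f(a \mid x)$ is a mixture over $u'$ of the generalized propensity scores $f(a \mid x, u')$, so that bounds holding pointwise in $u'$ are inherited by the mixture.

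\textbf{First implication.} Assuming \eqref{eqn:sens_model} holds at $\gamma/2$, I would fix $x, a, a', u, u'$ (with the relevant densities positive) and write
\begin{equation*}
\frac{f(a' \mid x, u)f(a\mid x, u')}{f(a\mid x, u)f(a' \mid x, u')} = \frac{f(a' \mid x, u)f(a\mid x)}{f(a\mid x, u)f(a' \mid x)} \cdot \frac{f(a \mid x, u')f(a'\mid x)}{f(a'\mid x, u')f(a \mid x)}.
\end{equation*}
By \eqref{eqn:sens_model} at $\gamma/2$ applied with confounder value $u$, the first factor lies in $[\exp(-\tfrac{\gamma}{2}|a-a'|),\exp(\tfrac{\gamma}{2}|a-a'|)]$; applying \eqref{eqn:sens_model} at $\gamma/2$ with confounder value $u'$ and the roles of $a$ and $a'$ swapped (which leaves $|a-a'|$ unchanged), the second factor lies in the same interval. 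Multiplying the two bounds puts the product in $[\exp(-\gamma|a-a'|),\exp(\gamma|a-a'|)]$, which is \eqref{eqn:sens_model 1} at $\gamma$.

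\textbf{Second implication.} Assuming \eqref{eqn:sens_model 1} holds at $\gamma$, I would write $f(a \mid x) = \int f(a \mid x, u')\, f(u'\mid x)\,du'$ (a sum if $U$ is discrete). For each fixed $u'$, rearranging the bound in \eqref{eqn:sens_model 1} yields
\begin{equation*}
\exp(-\gamma|a-a'|)\,f(a\mid x, u)\,f(a' \mid x, u') \le f(a'\mid x, u)\,f(a\mid x, u') \le \exp(\gamma|a-a'|)\,f(a\mid x, u)\,f(a'\mid x, u').
\end{equation*}
Integrating all three expressions against $f(u'\mid x)\,du'$ — legitimate because the bounding constants do not depend on $u'$ — produces
\begin{equation*}
\exp(-\gamma|a-a'|)\,f(a\mid x, u)\,f(a' \mid x) \le f(a'\mid x, u)\,f(a\mid x) \le \exp(\gamma|a-a'|)\,f(a\mid x, u)\,f(a'\mid x),
\end{equation*}
and dividing by $f(a\mid x, u)\,f(a' \mid x)>0$ recovers \eqref{eqn:sens_model} at $\gamma$.

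\textbf{Main obstacle.} Neither direction is technically deep; the only points requiring care are bookkeeping ones — checking that every density appearing in a denominator is positive (here I would invoke Assumption \ref{assumption: local overlap} and the support conditions) and handling the marginalization over $U$ uniformly whether $U$ is discrete or continuous. The asymmetry between ``$\gamma/2$'' in the hypothesis of the first implication and ``$\gamma$'' in its conclusion is precisely the cost of the triangle-type step in the factorization, and parallels Proposition 3 of \cite{Zhao2019SensitivityBootstrap} in the binary case.
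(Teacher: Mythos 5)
Your proposal is correct and follows essentially the same route as the paper: the first implication combines two applications of model \eqref{eqn:sens_model} at $\gamma/2$ (you do this multiplicatively; the paper adds the corresponding log-inequalities, which is the same step), and the second marginalizes the pointwise-in-$u'$ bound against $f(u'\mid x)$. If anything, your rearrangement into the product form $f(a'\mid x,u)f(a\mid x,u')$ before integrating is stated more cleanly than the paper's corresponding display, but the argument is the same.
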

\begin{proof}[Proof of Lemma \ref{lemma: marginal gamma / 2 implies gamma}]
Fix any $u, u'$, $x$, and $a, a'$ and suppose model \eqref{eqn:sens_model} holds at $\gamma / 2$. Then taking logs, we get 
\begin{equation*}
-\gamma/2(|a - a'|) \leq \log(f(a' \mid x, u)) + \log(f(a \mid x)) - \log(f(a \mid x, u)) - \log(f(a' \mid x)) \leq \gamma/2(|a - a'|),
\end{equation*}
and 
\begin{equation*}
-\gamma / 2(|a - a'|) \leq -\log(f(a' \mid x, u')) - \log(f(a \mid x)) + \log(f(a \mid x, u')) + \log(f(a' \mid x)) \leq \gamma / 2(|a - a'|).
\end{equation*}
Adding these inequality chains, we get 
\begin{equation*}
-\gamma(|a - a'|) \leq \log(f(a' \mid x, u)) + \log(f(a \mid x, u')) - \log(f(a \mid x, u)) - \log(f(a' \mid x, u')) \leq \gamma(|a - a'|).
\end{equation*}
After taking exponents, we get that model \eqref{eqn:sens_model 1} holds at $\gamma$. 
For the second implication, suppose that model \eqref{eqn:sens_model 1} holds at $\gamma$. Then
\begin{equation*}
\exp(-\gamma(|a - a'|))f(a' \mid x, u') \leq f(a' \mid x, u)  / f(a \mid x, u) \leq \exp(\gamma(|a - a'|))f(a \mid x, u').
\end{equation*}
We can simply integrate the chain of inequalities with respect to $f(u' \mid x)$ over the support of $U$, to get 
\begin{equation*}
\exp(-\gamma(|a - a'|))f(a' \mid x) \leq f(a' \mid x, u)  / f(a \mid x, u) \leq \exp(\gamma(|a - a'|))f(a \mid x),
\end{equation*}
which is equivalent to model \eqref{eqn:sens_model}.

\end{proof}

The second half of the result demonstrates that assuming \eqref{eqn:sens_model 1} holds at $\gamma$ is stronger than assuming \eqref{eqn:sens_model} holds at $\gamma$. Meanwhile, the first half of the result implies that the interpretation of the $\gamma$'s from the differing models does not exceed a factor of 2. We refer the reader to \cite{dalal2025partial} for more detailed comparisons between the Tan (marginal) and Rosenbaum sensitivity models.

\section{Alternative transforms in the sensitivity model}
\label{sec: alternative transform}
Recall that in model \eqref{eqn:sens_model}, the odds ratio $\frac{f(a' \mid x, u)f(a\mid x)}{f(a\mid x, u)f(a' \mid x)}$ is bounded between $\exp(-\gamma(|a-a'|))$ and $\exp(\gamma(|a-a'|))$. One might instead consider a model that uses a function other than exponential.
\begin{assumption}[$g$ marginal sensitivity model]
\begin{equation}
\label{eqn: g model}
g(-\gamma(|a-a'|)) \leq \frac{f(a' \mid x, u)f(a\mid x)}{f(a\mid x, u)f(a' \mid x)} \leq g(\gamma(|a-a'|)),
\end{equation}
for some smooth, nonnegative, strictly increasing function $g$ such that $g(0) = 1$.
\end{assumption}
Under such a model, we can follow the logic of Lemma \ref{lemma: model implication on latent score} to derive a restriction on the latent score $s(a \mid x, u)$. In contrast to Lemma \ref{lemma: model implication on latent score}, the restriction is not symmetric on the additive scale.
\begin{lemma}
\label{lemma: g model implication on latent score}
Under sensitivity model \eqref{eqn: g model}, 
\begin{equation}
s(a \mid x) +\lim_{h \to 0}\frac{\log(g(-\gamma h))}{h}\leq s(a \mid x, u) \leq s(a \mid x) +\lim_{h \to 0}\frac{\log(g(\gamma h))}{h}, \ \forall a, x, u.
\end{equation}
\end{lemma}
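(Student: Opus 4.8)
The plan is to follow the proof of Lemma~\ref{lemma: model implication on latent score} essentially verbatim, replacing the exponential by the general transform $g$ and isolating the single place where a property of $g$ is actually used, namely that $g$ is smooth with $g(0)=1$.

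First I would take logarithms throughout \eqref{eqn: g model}. This is legitimate: since $g$ is nonnegative, strictly increasing, and equal to $1$ at the origin it is strictly positive on a neighbourhood of $0$, which is all that is needed once $|a-a'|$ is taken small; and, restricting attention to triples $(a,x,u)$ with $f(a\mid x,u)>0$, Assumptions~\ref{assumption: local overlap} and~\ref{assumption: regularity}(i) ensure that $f(\cdot\mid x)$ and $f(\cdot\mid x,u)$ are positive and differentiable in a neighbourhood of $a$, so every log-density below is well defined. The same manipulation as in the proof of Lemma~\ref{lemma: model implication on latent score} gives
\[
\log g(-\gamma|a-a'|) \le \left(\log f(a\mid x)-\log f(a'\mid x)\right) - \left(\log f(a\mid x,u)-\log f(a'\mid x,u)\right) \le \log g(\gamma|a-a'|).
\]

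Next I would substitute $a'=a+h$ for small $h>0$, so that $|a-a'|=h$, add $\log f(a+h\mid x)-\log f(a\mid x)$ to every term of the chain, and divide by $h$. The middle term becomes the difference quotient $h^{-1}(\log f(a+h\mid x,u)-\log f(a\mid x,u))$ and each outer term becomes $h^{-1}(\log f(a+h\mid x)-\log f(a\mid x)) + h^{-1}\log g(\mp\gamma h)$. Letting $h\to 0$, Assumption~\ref{assumption: regularity}(i) forces the two density difference quotients to converge to $s(a\mid x,u)$ and $s(a\mid x)$, while smoothness of $g$ together with $g(0)=1$ makes $t\mapsto \log g(t)$ differentiable at $0$ with $\log g(0)=0$, so $\lim_{h\to 0} h^{-1}\log g(\mp\gamma h)$ exists and is finite. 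A squeeze then delivers
\[
s(a\mid x) + \lim_{h\to 0}\frac{\log g(-\gamma h)}{h} \le s(a\mid x,u) \le s(a\mid x) + \lim_{h\to 0}\frac{\log g(\gamma h)}{h},
\]
which is the claim; the mirror substitution $a'=a-h$ (dividing by $-h$ and flipping the inequalities) reproduces the same two offsets, confirming the displayed two-sided form, and specializing $g=\exp$ gives $\mp\gamma$, recovering Lemma~\ref{lemma: model implication on latent score}.

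The only genuine obstacle---everything else being a transcription of the earlier argument---is justifying the interchange of the limit $h\to 0$ with the inequalities, i.e.\ checking that each of the three quotients converges: for the density quotients this is exactly Assumption~\ref{assumption: regularity}(i), and for $h^{-1}\log g(\mp\gamma h)$ it is precisely here that the hypotheses ``$g$ smooth, $g(0)=1$'' are indispensable, since without $g(0)=1$ the numerator would not vanish at $h=0$ and no finite offset would survive. I would also remark, as the surrounding text does, that the lower and upper offsets are governed by the behaviour of $g$ on the two sides of the origin and so are recorded separately rather than as a single $\pm\gamma$.
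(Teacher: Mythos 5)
Your proposal is correct and follows the paper's proof essentially verbatim: take logarithms in \eqref{eqn: g model}, set $a'=a+h$, add $\log f(a+h\mid x)-\log f(a\mid x)$ throughout, divide by $h$, and let $h\to 0$ to recover the score difference quotients plus the offsets $\lim_{h\to 0}h^{-1}\log g(\mp\gamma h)$. The extra care you take in justifying the existence of the limits (via smoothness of $g$ and $g(0)=1$) and the two-sided limit is a sound, if implicit-in-the-paper, refinement rather than a different approach.
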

\begin{proof}
We can apply logs to \eqref{eqn: g model} to get
\begin{equation*}
\log(g(-\gamma |a - a'|)) \leq \log(f(a \mid x)) - 
 \log(f(a' \mid x)) - 
 \{ \log(f(a \mid x,u )) - \log(f(a' \mid x, u))\} \leq \log(g(\gamma |a - a'|)) .
\end{equation*}
Next, recall that $s(a \mid x,u) = {\partial_a} \log(f(a\mid x,u)) = \lim_{h \to 0} \frac{\log(f(a + h\mid x,u)) - \log(f(a\mid x,u))}{h}$. By the above equation, plugging in $a$ for $a$ and $a+h$ for $a'$, adding the $\log(f(a + h\mid x)) - \log(f(a\mid x))$ term everywhere in the inequality and dividing by $h$, we get 
\begin{equation*}
\begin{aligned}
    \frac{\log(f(a + h\mid x)) - \log(f(a\mid x))+\log(g(-\gamma h))}{h} &\leq \frac{\log(f(a + h\mid x,u)) - \log(f(a\mid x,u))}{h} \\ &\leq \frac{\log(f(a + h\mid x)) - \log(f(a\mid x))+\log(g(\gamma h))}{h}, 
\end{aligned}
\end{equation*}
which immediately implies the result after taking the limit $h \to 0$ everywhere.
\end{proof}
One could then solve the optimization problem \eqref{eq: optimization problem}, except replacing constraint $s(a \mid x) -\gamma \leq s(a \mid x, u) \leq s(a \mid x) + \gamma$ with $s(a \mid x) +\lim_{h \to 0}\frac{\log(g(-\gamma h))}{h}\leq s(a \mid x, u) \leq s(a \mid x) +\lim_{h \to 0}\frac{\log(g(\gamma h))}{h}$. Due to the asymmetry, for the continuous outcome case, one would expect the solution to depend on a quantile rather than the median.

\end{document}